%% file: main.tex
\documentclass[runningheads]{llncs}
\usepackage[T1]{fontenc}
\usepackage{graphicx}
\usepackage{cite}
 
\usepackage{amsmath,amssymb,amsfonts}
\usepackage{algorithmic}
\usepackage{float}
\usepackage{graphicx}
\usepackage{subcaption}
\usepackage{textcomp}
\usepackage{xcolor}
\def\BibTeX{{\rm B\kern-.05em{\sc i\kern-.025em b}\kern-.08em
    T\kern-.1667em\lower.7ex\hbox{E}\kern-.125emX}}
\usepackage{braket}
\usepackage{stmaryrd}
\usepackage{mathtools}
\usepackage{tikzit}
\usepackage{hyperref}
\usepackage[table]{xcolor}
\usepackage[english]{babel}
\usepackage{amsthm}
\usepackage{todonotes}

\input{zx.tikzstyles}
\hypersetup{
    colorlinks=true,
    linkcolor=blue,
    linkbordercolor=cyan,
    filecolor=magenta,      
    urlcolor=black,
    urlbordercolor=cyan,
    citecolor=black,
    citebordercolor=cyan,
    pdfpagemode=FullScreen,
}
\definecolor{zx_red}{RGB}{227, 145, 145}
\definecolor{zx_green}{RGB}{230, 250, 230}
\definecolor{eb}{rgb}{0.03, 0.57, 0.82}
\newcommand\emulates{\mathrel{\overset{\makebox[0pt]{\mbox{\normalfont\small\sffamily e}}}{\mapsfrom}}}
\newcommand{\eqbc}[1]{\stackrel{\mathclap{\normalfont\small\mbox{$#1$}}}{=}}
\newcommand{\diag}{\text{diag}}
\input{qcircuit_common.tex}
\newcommand{\tikzfigscale}[2]{\ensuremath{\vcenter{\hbox{\scalebox{#1}{$\input{#2.tikz}$}}}}}
\DeclarePairedDelimiter{\ceil}{\lceil}{\rceil}

\newcommand{\minu}{\texttt{-}}

\begin{document}
\title{Transversal AND in Quantum Codes}
%
%
\author{Christine Li\inst{1}\orcidID{0000-0001-6443-7090} \and
Lia Yeh \inst{2}\orcidID{0000-0003-2704-4057}}
\authorrunning{C. Li and L. Yeh}
%
\institute{Department of Computer Science, Columbia University, New York, USA \and
Department of Computer Science and Technology, University of Cambridge, Cambridge, UK
\newline \email{cl4315@columbia.edu}\ \ \email{ly404@cam.ac.uk}}
\maketitle              
\begin{abstract}
The AND gate is not reversible---on qubits. However, it \textit{is} reversible on qutrits, making it a building block for efficient simulation of qubit computation using qutrits.
We first observe that there are multiple two-qutrit Clifford+T unitaries that realize the AND gate with T-count 3, and its generalizations to $n$ qubits with T-count $3n-3$.
Our main result is the construction of a novel qutrit $\llbracket 6, 2, 2\rrbracket$ quantum error-correcting code with a transversal implementation of the AND gate.
The key insight in our approach is that a symmetric T-depth one circuit decomposition --- composed of a CX circuit, T and T dagger gates, followed by the CX circuit in reverse --- of a given unitary can be interpreted as a CSS code. We can increase the code distance by augmenting the code circuit with additional stabilizers while preserving the logical gate. This results in a code with a ``built-in'' transversal implementation of the original unitary, which can be further concatenated to attain a $\llbracket 48,2,4 \rrbracket$ code with the same transversal logical gate.
Furthermore, we present several protocols for mixed qubit-qutrit codes which we call Qubit Subspace Codes, and for magic state distillation and injection.

\keywords{Quantum computing \and Reversible computation \and Quantum error correction \and Qutrits \and Exact circuit synthesis \and ZX-calculus}
\end{abstract}

\section{Introduction}
When computing with bits, the first gates we often learn are AND, OR, and NOT. When computing with qubits, one of the first concepts that we often learn is that the AND gate is not reversible.
However, there \textit{is} a reversible quantum AND gate---just not for qubits. We escape this restriction specific to qubits by working with the simplest possible extension---qutrits, i.e. three-level quantum systems instead of two-level. 

Qutrits open up promising avenues due to many advantages, including more efficient algorithms, gates, and error-correcting codes. In many existing hardware architectures for quantum computers, qubits are a choice of two-dimensional subspace in what is natively a higher-dimensional setting. In choosing to actively encode  and control quantum information in more levels, quantum computing with qudits is an active area of research in both theory and practice. Qudit computation has been experimentally demonstrated on systems such as nuclear magnetic resonance~\cite{GedikZ2015quditalgperm}, nuclear spins~\cite{GodfrinC2017quditnuclearspins}, cold atoms~\cite{AndersonBE2015quditcoldatoms}, Rydberg atoms~\cite{WeggemansJ2022quditRydberg}, linear optics~\cite{HuXM2018ququart,lysaght_2024}, integrated photonics~\cite{ChiY2022quditphotonicprocessor}, trapped ion~\cite{ringbauer_universal_2022,HrmoP2023quditions}, and superconducting~\cite{goss_high-fidelity_2022,CaoS2023ququartemuvqe}, with a recent demonstration of GKP qutrit error correction beyond break-even~\cite{brock_quantum_2025}.
Qudit entanglement has demonstrated improved noise robustness compared to qubits~\cite{Srivastav_2022}, and leveraging qutrit states can improve fidelity of qubit operations~\cite{BaekkegaardT2019scqubitqutrit,BrownT2022qutritqubitentangle}.
As the most well-studied qudit setting, qutrits can efficiently simulate qubit computation by reducing circuit depth or ancilla overhead, as shown for diagonal qubit unitaries~\cite{wetering_building_2023}, 
logarithmic-depth Toffoli~\cite{gokhale_asymptotic_2019}, and binary AND gates~\cite{chu_scalable_2023}, as well as factoring algorithms~\cite{bocharov_factoring_2017}, probing lattice gauge theories~\cite{Meth2025,joshi_2025_lattice}, variational quantum algorithms~\cite{Cao_2024, lysaght_2024}, and unitary decomposition~\cite{lysaght_2024}. Qutrit operations can also be used to suppress leakage errors to the $\ket{2}$ state~\cite{miao_overcoming_2023}.

However, there is a gap between the qudit operations realizable on real-world hardware, and those that conceivably satisfy definitions of fault-tolerance; while definitions vary, the main principles are unchanged from what Gottesman laid out in 1997~\cite{Gottesman_1998}.
A large reason for this is that pairwise two-level operations in a three-level quantum system, while natural to do in experimental implementations, does not embed naturally into qutrit stabilizer theory the way that it does for qubits. Another reason is that the qubit stabilizer fragment does not embed into the qutrit stabilizer fragment.
Qutrit computation also offers a promising path to more efficient error-correcting codes, with better thresholds and yield rates in magic state distillation~\cite{campbell_magic-state_2012}. While prior work has explored various constructions of qutrit, prime-dimensional qudit, and composite-dimensional~\cite{gunderman2025beyond} codes, less attention has been given to finding qutrit codes with specific transversal logical operators, especially ones that are both non-Clifford and entangling. In this work, we take a logical operator-centric approach to code construction by reversing the typical procedure for designing stabilizer codes. Instead of deriving the logical operators from the fixed stabilizer set of a code, we start with a chosen logical operator, constructing the code around it by adding stabilizers.
This work constitutes one of few efforts on improving qubit computation framed with respect to qutrit stabilizer computation, and to our knowledge is the first to construct qutrit stabilizer error-correcting codes expressedly to this end.

Our approach to constructing qutrit codes with a binary AND gate logical operation centers on two components. The first is exact circuit synthesis, leveraging the known connection between exact circuit synthesis and error-correcting code constructions~\cite{campbell_smallest_2016,CampbellET2017msdsynth}.
The second is the ZX-calculus, by means of which a number of works have investigated new ways to represent and compile for quantum error-correcting codes. However, while more progress has been made on verifying and proving properties of codes and on compiling fault-tolerant circuits, there has been comparatively less use of ZX-calculus to construct specific new codes and protocols.

Combining the motivating ideas of (1) emulating qubit gates using qutrits and (2) constructing a QEC code around a particular logical operator, the central question we set out to answer in this work is: 
\begin{center}\textit{How can we construct a non-trivial qutrit QEC code with a transversal implementation of a qubit logical gate?}
\end{center}

\section{Qubit and Qutrit Reversible Computation}
For qutrits, the computational basis states are $\ket{0}$, $\ket{1}$, and $\ket{2}$. A normalized qutrit state can be written as $\ket{\psi}=\alpha\ket{0}+\beta\ket{1}+\gamma\ket{2}$, where $\alpha,\beta,\gamma\in \mathbb{C}$ and $|\alpha|^2+|\beta|^2+|\gamma|^2=1$. In a generalized $d$-dimensional system with computational basis states $\ket{k}$, where $k\in \mathbb{Z}_d$, the Pauli X and Z gates are defined as
$$X\ket{k}=\ket{k+1\pmod{d}}, \qquad Z\ket{k}=\omega^k\ket{k}$$
where $\omega = e^{\frac{2\pi i}{d}}$. For qutrits, $\omega=e^{\frac{2\pi i}{3}}$, and $\overline{\omega}\coloneq \omega^2 =\omega^{-1}$.
\begin{definition}[Qutrit X gates]
    For qutrits, there are five non-trivial permutations of the computational basis states (all are Clifford): $X_{01},X_{02},X_{12},X_{+1},X_{-1}$. $X_{+1}$ is the Pauli X gate for qutrits. $X_{-1}=X_{+1}^\dagger$, and $X^\dagger = X^2$. $X_{01}$ acts as $\ket{0}\mapsto \ket{1}$, $\ket{1}\mapsto\ket{0}$, and $\ket{2}\mapsto\ket{2}$. $X_{02}$ and $X_{12}$ are defined analogously.
\end{definition}
\begin{definition}[Qutrit X basis]
    The X basis consists of the states $\ket{+}\coloneq \frac{1}{\sqrt{3}}(\ket{0}+\ket{1}+\ket{2})$, $\ket{\omega} \coloneq \frac{1}{\sqrt{3}}(\ket{0}+\omega\ket{1}+\overline{\omega}\ket{2})$, and $\ket{\overline{\omega}}\coloneq \frac{1}{\sqrt{3}}(\ket{0}+\overline{\omega}\ket{1}+\omega\ket{2})$.
\end{definition}
\begin{definition}[Hadamard gate]
    The $H$ gate is acts as $\ket{0}\mapsto \ket{+},\ket{1}\mapsto\ket{\omega},\ket{2}\mapsto\ket{\overline{\omega}}$. Note that unlike for qubits, $H\neq H^\dagger$, but $H^3=H^\dagger$. $H^2 = X_{12}$ is the dualizer (also known as the antipode) for qutrits, and $H^4=I$.
\end{definition}
\begin{definition}[Z phase gate]
    The Z phase gate is $Z(a,b)\coloneq \diag(1,\omega^a,\omega^b)$, where $a,b\in \mathbb{R}$. It acts as $\ket{k} \rightarrow \omega^k \ket{k}$ for $k \in \mathbb{Z}_3$. The Pauli Z gate is $Z(1,2)$.
\end{definition}
\begin{definition}[X phase gate]
    The X phase gate is defined as $X(a,b)\coloneq HZ(a,b)H^\dagger$, where $a,b\in \mathbb{R}$. The Pauli X gate is $X(2,1)$ and $X^\dagger = X(1,2)$.
\end{definition}
\begin{definition}[T gate]
    The T gate is $T\coloneq Z(\frac{1}{3},-\frac{1}{3}) = \diag(1, e^{\frac{2\pi i}{9}},e^{\frac{-2\pi i}{9}})$. Just as in qubits, the qutrit T gate is in the third level of the Clifford hierarchy.
\end{definition}
\noindent For qutrit controlled gates, there are different notions of control. We use the following two in this work:
\begin{definition}[$\Lambda$-controlled $U$]
    Given a qutrit unitary $U$, the $\Lambda$-controlled $U$ gate (defined in \cite{bocharov_factoring_2017}) acts as $\ket{i}\ket{j}\mapsto\ket{i}\otimes (U^i\ket{j})$, for $i,j\in\{0,1,2\}$. An example of this is the two-qutrit CX gate, defined as $\ket{i,j}\mapsto \ket{i,i+j\pmod 3}$. Unlike the CNOT gate for qubits, the CX gate is not self adjoint, but $(CX)^2 = (CX)^\dagger$.
\end{definition}
\begin{definition}[$\ket{0}$-controlled $U$ gate]
    Given a qutrit unitary $U$, the $\ket{0}$-controlled $U$ is defined as $\ket{0}\otimes\ket{\psi}\mapsto \ket{0}\otimes U\ket{\psi}$, $\ket{1}\otimes\ket{\psi}\mapsto\ket{1}\otimes\ket{\psi}$, and $\ket{2}\otimes\ket{\psi}\mapsto \ket{2}\otimes\ket{\psi}$. The $\ket{1}$- and $\ket{2}$-controlled $U$ gates are analogously defined. We can change the control value by conjugating the control qutrit by $X$ or $X^\dagger$.
\end{definition}
\noindent The CX gate is Clifford, while the $\ket{0}$(or $\ket{1}$ or $\ket{2}$-controlled gates are non-Clifford. Similar to the qubit case, the generators of the qutrit Clifford group are $H, S, CX$. Adding the non-Clifford T gate to the set, we get the Clifford+T gate set, which is universal~\cite{cui_universal_2015} and admits an exact circuit synthesis algorithm~\cite{glaudell_exact_2024}.

Like in the qubit case, for any prime qudit dimension, X and Z generate the Pauli group, and the CX, S, and H gates generate the Clifford group.
The $\ket{0}$-controlled gate, or equivalently the CCX gate, when added to the H gate, generates the qudit Toffoli+Hadamard gate set~\cite{Roy_2023}. \cite{yeh_transdimensional_2025}~presents a comprehensive coverage of the Clifford+T and Toffoli+Hadamard gate sets for prime-dimensional qudits, and their representations in qudit ZX- and ZH-calculus.

\subsection{Qubit and Qutrit ZX-calculus}
The ZX-calculus is a graphical formalism for quantum computing in which linear maps are represented by diagrams built from basic generators called Z and X spiders, together with structural elements that allow wires to bend and cross~\cite{Coecke_2011}. Each diagram has a precise interpretation as a linear map: Sequential composition corresponds to matrix multiplication, parallel composition to tensor product, and the number of inputs and outputs determines whether the diagram represents a state, effect, map, or scalar. The calculus is equipped with rewrite rules that equate diagrams representing the same linear map. The qubit ZX-calculus is \emph{sound}---any derivable equality reflects genuine equality of the underlying maps; \emph{universal} for qubit linear maps---any qubit operation can be expressed as a ZX diagram; and \emph{complete}---for any two equal diagrams, their equality can be proved entirely within the calculus~\cite{Hadzihasanovic_2018}.

We do not use the qubit ZX-calculus in this work; we use the qutrit ZX-calculus, first introduced in~\cite{RanchinA2014quditzx, WangQ2014qutritzx}. All its wires represent a qutrit (three-level quantum information carrier) instead of a qubit (two-level).
For qubits~\cite{Backens_2014}, qutrits~\cite{wang_qutrit_2018}, and qudits of any odd prime dimension~\cite{BoothR2022qupitstab,PoorB2023qupitzxtra}, the ZX-calculus is complete for the stabilizer or Clifford fragment of quantum computation.

The qutrit ZX-calculus is generated by the five diagrams 
\begin{equation}
\Biggl\{
\input{qutrit_Z_spider.tikz},\quad
\input{qutrit_X_spider.tikz},\quad
\input{qutrit_H.tikz},\quad
\input{qutrit_swap.tikz},\quad
\input{qutrit_id.tikz}
\Biggr\}
\end{equation}
The interpretations of $Z$ and $X$ spiders as linear maps are
\begin{align}
    \Biggl\llbracket\quad
\mathllap{\scriptstyle m}
\Biggl\{\input{qutrit_Z_spider.tikz}\Biggr\}
\mathrlap{\scriptstyle n}\quad
\Biggr\rrbracket
& \coloneq \ket{0}^{\otimes m}\bra{0}^{\otimes n}+\omega^a \ket{1}^{\otimes m}\bra{1}^{\otimes n} + \omega^b \ket{2}^{\otimes m}\bra{2}^{\otimes n}\\
\Biggl\llbracket\quad
\mathllap{\scriptstyle m}
\Biggl\{\input{qutrit_X_spider.tikz}\Biggr\}
\mathrlap{\scriptstyle n}\quad
\Biggr\rrbracket
& \coloneq \ket{+}^{\otimes m}\bra{+}^{\otimes n}+\omega^a \ket{\omega}^{\otimes m}\bra{\omega}^{\otimes n} + \omega^b \ket{\overline{\omega}}^{\otimes m}\bra{\overline{\omega}}^{\otimes n}
\end{align}
Note that when $m=n=1$, the green Z spider corresponds to the Z phase gate $Z(a,b)$ and the X spider corresponds to the X phase gate $X(a,b)$. We use the non-flex-symmetric version of the qutrit ZX-calculus. This means that, unlike the qubit ZX-calculus, it is no longer the case that only connectivity matters. In particular, CX is not equivalent to CX$^\dagger$:
\begin{equation}
    \tikzfigscale{0.7}{qutrit_CX}\neq \tikzfigscale{0.7}{qutrit_CX_dagger} = \tikzfigscale{0.7}{qutrit_CX_dual} = \tikzfigscale{0.7}{qutrit_CX_dual_box}
\end{equation}
In this work, we will mostly follow the notation from~\cite{townsend-teague_simplification_2022}, where $H$ is the yellow box labeled 1 and $H^\dagger$ is the yellow box labeled 2. However, we use the horizontal orientation (left to right), rather than vertical (bottom to top), in keeping with common ZX-calculus usage in quantum error correction and circuit compilation works. We represent the dualizer (antipode), $X_{12}=H^2$, by a white box labeled $D$. The qutrit ZX-calculus rules are reprinted from~\cite{townsend-teague_simplification_2022} in Figure \ref{fig:townsendteaguequtritrules} in the appendix.

\section{The AND Gate as a Quantum Circuit Primitive}

\subsection{Emulation: Qubit Computation using Qutrits}
The first difficulty with developing a fault-tolerant theory of emulation is that qubit Cliffords are not a subgroup of qudit Cliffords.
As a counterexample to show this: the $\diag(1,-1,1)$ gate is Clifford-equivalent (by conjugating by Pauli X) to the non-Clifford R gate.

We can also see that restricting a qutrit Clifford gate to its qubit subspace, is not necessarily a qubit Clifford operation.
For example, the qutrit Pauli $Z$ gate acts on the qubit subspace as $\diag(1,\omega)$; $\omega = e^{i\frac{2\pi}{3}}$ is not an element of the ring which all qubit Clifford unitary matrices must be over~\cite{GilesB2013multiqubitcliffordplustsynthesis}.

It is impossible to implement binary AND gate using two qubits. However, it \textit{is} possible to do so using two qu\textit{trits}. By this we mean that we can construct two-qutrit circuit such that the behavior (i.e., truth table rows) of the first two levels $\ket{0}$ and $\ket{1}$ of the qutrit circuit matches that of the qubit binary AND gate. The behavior of the third level of the qutrit circuit is left unconstrained. The additional space afforded by the third level allows for the binary AND gate to be implemented unitarily. Let us denote that one linear map $M_1$ emulates another, $M_2$, by $M_2 \emulates M_1$.
\begin{lemma}\label{lemma:binaryand}
    The binary AND gate can be emulated by a two-qutrit Clifford+T unitary with T-count 3.
\end{lemma}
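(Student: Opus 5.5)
The plan is to exhibit an explicit two-qutrit permutation (up to phases) whose restriction to the inputs $\ket{a,b}$ with $a,b\in\{0,1\}$ outputs $a\wedge b$ on one wire, to realize it as a Clifford circuit wrapped around a T-count-3 diagonal layer, and then to check the truth table directly. Using the third level to escape irreversibility, I would aim for the map
$$\ket{a,b}\mapsto \ket{b-a \bmod 3,\; ab}, \qquad a,b\in\{0,1\}.$$
This is injective on the four binary inputs: they go to $\ket{00},\ket{10},\ket{20},\ket{01}$. The second wire carries $a\wedge b$, and the first wire records enough to keep the map reversible. The other five basis states may be sent anywhere, so the map extends to a unitary. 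The two-qutrit case therefore reduces to synthesizing any Clifford+T unitary that agrees with this assignment on the binary subspace.

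The synthesis would go in three steps.
\begin{enumerate}
\item Use Clifford gates to compute $b-a$ onto the first wire. A CX$^\dagger$-type gate, possibly together with $X_{12}$ on one wire, gives a reversible linear map of this kind.
\item Realize the nonlinear part as a controlled shift of the second wire. Conjugate by $H$ on the target, so that it becomes a diagonal gate $\diag$ of ninth roots of unity, i.e.\ a phase polynomial in the two (Clifford-transformed) computational-basis values $x,y$.
\item Use the fact that a qutrit T gate applied to a linear form $\ell(x,y)$ contributes the phase $e^{2\pi i\, g(\ell)/9}$, where $g(0)=0$, $g(1)=1$, $g(2)=-1$. Such T gates can be placed via a CX parity network.
\end{enumerate}
The target is to express the required controlled-$X^{\pm1}$ phase function, restricted to the relevant $(x,y)$ values, as $\pm g(x)\pm g(y)\pm g(x+y)$ (or with $x-y$) modulo $9$, up to Clifford-level phases $\omega^{q(x,y)}$. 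This uses exactly three T/T$^\dagger$ gates, arranged in the paper's symmetric form: a CX circuit, T and T$^\dagger$ gates, then the CX circuit reversed.

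The main obstacle is step 3: matching the cubic-type phase function to three T gates while keeping the outcome correct on the binary inputs. The relevant identity is the qutrit analogue of the qubit trick $4xy = x + y - (x\oplus y)$. Here $g$ is effectively a ninth-root phase for the cubic $(x^3-x)/3$-like function, so a combination of $g(x),g(y),g(x\pm y)$ yields a $\Lambda$-controlled $Z(\tfrac13,\cdot)$-like term, and cubing it yields a controlled Pauli.

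I would first try brute force over the small space of sign choices and linear forms, checked by direct matrix computation. The flexibility that the third level is unconstrained should absorb any mismatch. In ZX terms this amounts to fusing the phase gadgets and checking that the resulting diagram acts as the AND truth table on $\ket{0},\ket{1}$.

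Finally, I would verify the four truth-table rows. I would also note that the choice of Clifford pre- and post-processing is not unique, which gives the multiple circuits mentioned in the abstract.
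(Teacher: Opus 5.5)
Your plan is essentially the paper's proof. The paper uses Vlasov's circuit: a single CX-type gate writing $a-b$ onto one qutrit, followed by a $\ket{2}$-controlled $X_{\pm 1}$ on the other. Your map is the same up to negating the first wire, so your shift is the $\ket{1}$-controlled $X^{-1}$. The paper does not derive the T-count $3$ of that controlled shift; it imports it from Bocharov et al.

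Your step 3 needs no search and no ``cubing''. T gates on the forms $y$, $y+x$, $y-x$ give phase $e^{2\pi i\,(g(y)+g(y+x)+g(y-x))/9}$, which is $\omega^{y}$ when $x=0$ and $1$ otherwise, i.e.\ exactly the $\ket{0}$-controlled $Z$. However, on two qutrits without an ancilla these three T gates must be interleaved with three CX gates (T, CX, T, CX, T, CX). They cannot sit in one symmetric T-depth-one layer as you claim; that arrangement needs a third qutrit, as in the $\llbracket 3,2,1\rrbracket$ code.
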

\begin{proof}
    The below Clifford+T two-qutrit unitary emulates the binary AND gate~\cite[Equation 9]{vlasov2015notes}:
    \begin{equation}
        \label{eq:binaryand}
        \tikzfigscale{0.8}{"binary_AND_base_circuit"} \;\emulates \;\;\tikzfigscale{0.8}{qutrit/andsmall}
    \end{equation}
    Its cost is T-count 3 and Clifford CX-count 4.  The cost calculation follows from the $\ket{2}$-controlled $X_{+1}$ gate comprising of 3 Clifford CX gates and 3 qutrit T gates~\cite{bocharov_factoring_2017}.
\end{proof}
\begin{remark}
    After submission of this manuscript, it was brought to our attention that the above construction is near-identical (up to conjugation by the two-qutrit SWAP gate) to that of~\cite[Equation 9]{vlasov2015notes}, which proposed emulation of binary classical reversible computation with ternary classical reversible computation as early as 2011. We thank the author for pointing out their work to us.
\end{remark}

Interestingly, it outperforms the Toffoli gate, for which it is known that the optimal cost is 7 qubit T gates in the no-ancilla case~\cite{GossetD2014opttoffolit}, 6 qubit Clifford CX gates independent of ancillae count~\cite{ShendeVopttoffolicx}, and 5 two-qubit gates~\cite{NengkunY2013opttoffoli2q}.  Even when ancillae and operations conditioned on measurement are permitted, the best known qubit Clifford+T decomposition of the Toffoli gate is 4 T gates~\cite{JonesC2013toffolit}.

A symmetric qutrit circuit which emulates the qutrit binary AND gate is:
\begin{equation}\label{eq:andrev}
    \tikzfigscale{0.8}{"binary_AND_base_circuit"} \;\emulates \;\;\tikzfigscale{0.8}{"qutrit_binary_AND_1"}
\end{equation}
We can check that both these circuits indeed emulate AND through Table~\ref{table:andtruth}.
\begin{table}[h]
\centering
\[\label{thm:anddoubleemu}
\tikzfigscale{0.8}{qutrit/andsmall}
\;
\vcenter{\hbox{{
\setlength{\tabcolsep}{2.5pt}
\renewcommand{\arraystretch}{1.0}
\begin{tabular}{|c|c|>{\columncolor{lightgray}}c|c|}
\hline
a & b & c & a$\land$b \\
\hline
0 & 0 & 0 & 0 \\
\hline
0 & 1 & 2 & 0 \\
\hline
\rowcolor{lightgray}
0 & 2 & 1 & 2 \\
\hline
1 & 0 & 1 & 0 \\
\hline
1 & 1 & 0 & 1 \\
\hline
\rowcolor{lightgray}
1 & 2 & 2 & 1 \\
\hline
\rowcolor{lightgray}
2 & 0 & 2 & 2 \\
\hline
\rowcolor{lightgray}
2 & 1 & 1 & 1 \\
\hline
\rowcolor{lightgray}
2 & 2 & 0 & 2 \\
\hline
\end{tabular}
}}}
\;\;
\vcenter{\hbox{{
\setlength{\tabcolsep}{2.5pt}
\renewcommand{\arraystretch}{1.0}
\begin{tabular}{|c|c|>{\columncolor{lightgray}}c|c|}
\hline
a & b & c & a$\land$b \\
\hline
0 & 0 & 0 & 0 \\
\hline
0 & 1 & 2 & 0 \\
\hline
\rowcolor{lightgray}
0 & 2 & 0 & 2 \\
\hline
1 & 0 & 1 & 0 \\
\hline
1 & 1 & 1 & 1 \\
\hline
\rowcolor{lightgray}
1 & 2 & 0 & 1 \\
\hline
\rowcolor{lightgray}
2 & 0 & 1 & 2 \\
\hline
\rowcolor{lightgray}
2 & 1 & 2 & 1 \\
\hline
\rowcolor{lightgray}
2 & 2 & 2 & 2 \\
\hline
\end{tabular}
\;\;\;
\tikzfigscale{0.8}{qutrit_binary_AND_1}
}}}\]
\caption{Truth tables for AND emulations of Equations~\eqref{eq:binaryand} (left) and~\eqref{eq:andrev} (right).}
\label{table:andtruth}
\end{table}
We can change the control value of the $\ket{2}$-controlled X by conjugating by Pauli X gates, and from controlled X to Z by conjugating by Hadamard gates.
\begin{remark}\label{remark:freedom}
    Through fixing only the rows and columns corresponding to the binary AND gate (i.e., only rows where the inputs are 0 or 1, and the output column a$\land$b), the remaining truth table entries shaded in Table~\ref{table:andtruth} are a degree of freedom. We can therefore explore other circuit constructions so long as the AND action is preserved for binary inputs.
\end{remark}

By applying the X$_{01}$ gate (which emulates the binary NOT gate) to both inputs and to the AND'd output of Equation~\eqref{eq:binaryand}, in accordance with DeMorgan's law, the binary OR gate is emulated with the same T-count and Clifford CX-count.
This leads us to:
\begin{proposition}
    All binary classical logic circuits can be emulated through exact synthesis in the qutrit Clifford+T gate set.
\end{proposition}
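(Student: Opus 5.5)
The plan is to reduce the statement to functional completeness of a small set of binary gates, and then emulate each gate, and their composition, with qutrit Clifford+T circuits. Every binary classical logic circuit is a composition of AND, NOT, and fan-out (copying) gates. It can be made reversible by writing each gate's output onto a fresh wire initialised to $\ket{0}$.

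So it suffices to show three things. First, each primitive has an exact qutrit Clifford+T emulation. Second, emulation is preserved under sequential and parallel composition.

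For the primitives, we take the following.
\begin{itemize}
\item[AND:] Lemma~\ref{lemma:binaryand} gives the two-qutrit unitary of Equation~\eqref{eq:binaryand}, with T-count 3. When the original inputs must be kept, we instead use a target-style version that writes $a\land b$ onto an ancilla, obtained from the $\ket{2}$-controlled $X_{+1}$ construction.
\item[NOT:] $X_{01}$ is a Clifford permutation, and on $\{\ket{0},\ket{1}\}$ it acts exactly as binary NOT.
\item[Fan-out:] The qutrit CX gate applied to a fresh $\ket{0}$ ancilla maps $\ket{i,0}\mapsto\ket{i,i}$, so it copies binary values exactly.
\item[OR:] As already noted in the text, De~Morgan's law gives OR as $X_{01}$ conjugation around AND. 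We include it only for convenience.
\end{itemize}

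The key structural step is the composition argument. Write $U_1,U_2$ for the unitaries emulating $M_1,M_2$.
\begin{itemize}
\item[Sequential:] Suppose $U_1$ maps every binary basis input to the binary basis state prescribed by $M_1$. Then $U_2$ receives only binary basis inputs, so $U_2U_1$ agrees with $M_2M_1$ on binary inputs. The unconstrained behaviour on $\ket{2}$, highlighted in Remark~\ref{remark:freedom}, is never reached.
\item[Parallel:] Tensor products of emulations are emulations, since binary basis states factor.
\end{itemize}

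By induction on the number of gates in the circuit, the resulting Clifford+T circuit emulates the whole binary circuit. The emulation is exact, because every gate used is exactly a Clifford+T word, and the T gate is given as a diagonal matrix with exact entries.

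I expect the main obstacle to be a point of precision. Emulation as defined constrains only the output column $a\land b$ on binary inputs, not the garbage wire. For example, in Equation~\eqref{eq:binaryand} the garbage value $c$ for input $(0,1)$ is $2$, which lies outside the qubit subspace.

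If that garbage wire is fed into a later gate, the sequential argument fails. I would therefore strengthen the inductive invariant to this statement: the wires later consumed carry binary basis values. There are two ways to achieve this.
\begin{itemize}
\item Discard garbage wires and never reuse them, at the cost of ancilla overhead.
\item Use the symmetric variant of Equation~\eqref{eq:andrev}, or uncompute with a Bennett-style compute--copy--uncompute (using CX fan-out of the result), so that all non-output wires return to binary values.
\end{itemize}
I would check, using Table~\ref{table:andtruth}, that the chosen AND variant keeps every reused wire in $\{0,1\}$ on binary inputs.
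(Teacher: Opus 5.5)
Your proposal follows the paper's proof: reduce to a functionally complete set, emulate AND by Lemma~\ref{lemma:binaryand} (and OR by De~Morgan) and NOT by the Clifford $X_{01}$. It also usefully makes explicit the fan-out and garbage-wire bookkeeping that the paper's two-line argument leaves implicit.

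One correction: the symmetric variant of Equation~\eqref{eq:andrev} does not keep the non-output wire binary, since Table~\ref{table:andtruth} (right) also sends input $(0,1)$ to $c=2$. Drop that option and rely on one of the following:
\begin{itemize}
\item your target-style AND (CX from $a$ onto $b$, then $\ket{2}$-controlled $X_{+1}$ onto a fresh $\ket{0}$, then CX$^\dagger$ from $a$ onto $b$), which keeps every wire in $\{0,1\}$ on binary inputs;
\item discarding garbage wires;
\item Bennett uncomputation.
\end{itemize}
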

\begin{proof}
    AND, OR, NOT is a universal gate set for binary classical logic. As NOT can be emulated by a qutrit Clifford, the T-count is upper bounded by 3 times the total number of AND or OR gates.
\end{proof}

We present in Table~\ref{table:countstable} a summary of the most efficient qutrit emulation of each binary or qubit gate that we calculated in Sections~\ref{sec:XZSCXCZ} to~\ref{sec:mctand} of the appendix. There, we also give constructions for the qubit X, Z, S, CX and CZ gates, and $\ket{+}$ and $\ket{-}$ states, and prove why the qubit H gate cannot be exactly synthesized unitarily in qutrit Clifford+T.  In addition to giving the non-Clifford gate count, for near-term applications we give the number of two-qutrit operations required, as on present-day qubit quantum computers, two-qubit operations are the dominant noise source.  We also give the qutrit Clifford CX count where all other operations are single-qutrit Clifford+T gates.
\definecolor{LightGreen}{rgb}{0.9,0.98,0.9}
\begin{table}[H]
    \centering
    \begin{tabular}{|l|r|r|r|r|c|l|r|r|r|r|}
        \hline
        Gates & T & R & CX & 2Q & & Gates & T & R & CX & 2Q \\
        \hline
        \rowcolor{LightGreen}
        AND/OR & 3 & 0 & 4 & 1 & & $n$-ary AND~\cite{chu_scalable_2023} & $3n-3$ & 0 & $4n-4$ & $n-1$ \\
        X (i.e. NOT) & 0 & 0 & 0 & 0 & & Z & 0 & 1 & 0 & 0 \\
        \rowcolor{LightGreen}
        CX~\cite{bocharov_factoring_2017} & 6 & 0 & 8 & 1 & & CZ~\cite{wetering_building_2023} & 0 & 3 & 3 & 1 \\
        CCX~\cite{bocharov_factoring_2017} & 12 & 0 & 15 & 3 & & CCZ~\cite{wetering_building_2023} & 0 & 3 & 5 & 3 \\
        \rowcolor{LightGreen}
        C$^n$X, linear depth & $6n$ & 0 & $6n+3$ & $2n-1$ & & C$^n$Z, linear depth & $6n-12$ & 3 & $6n-7$ & $2n-1$ \\
        C$^n$X, log depth~\cite{chu_scalable_2023} & $6n$ & 0 & $8n$ & $2n-1$ & & C$^n$Z, log depth & $6n-12$ & 3 & $8n-5$ & $2n-1$ \\
        \hline
    \end{tabular}
    \vspace{0.6em}
    \caption{T gate, R gate, Clifford CX gate, and 2-qutrit gate counts for emulations of qubit and binary gates, unitarily without ancillae in the Clifford+T+R gate set. For decompositions in the literature given in terms of ternary classical reversible gates, these are computed by substituting the best known decompositions gate-wise, optimizing where possible. The $\ket{+}$, $\ket{0}$, and S gate counts are omitted because they are probabilistic constructions.}\label{table:countstable}
\end{table}
These counts are straightforward, but they show that the asymptotic non-Clifford gate counts of $n$-ary AND and qubit C$^n$X in ancilla-free qutrit Clifford+T, outperform the best known constant-ancilla qubit decomposition according to~\cite{dutta2025}: \cite{Khattar_2025} having T-count $8n-12$. The $n$-ary AND moreover has lower T-count than the best known $O(n)$ ancilla qubit decompositions~\cite{gidney2021ccczgateperformed6} having T-count $4n-6$ and T-depth $\ceil{\text{log}_2 \frac{n}{3}} + 2$~\cite{nakanishi2024}.
More importantly, these qutrit decompositions are unitary, meaning that unlike the best-known qubit decompositions, they do not require feedforward of measurement outcomes to classically control any of their gates. We remark the possibility that similar AND gate-based techniques that halved T-count of multiple-controlled Toffolis and other binary computations in the qubit case~\cite{Gidney_2018}, could be used to improve the depth of these qutrit constructions by allowing ancillae, for instance by preceding $n$-ary AND by a depth-1 layer of $n$ CX gates to $n$ ancilla in the $\ket{0}$ state.

Considering evidence that qudit emulation of individual multi-qubit gates in qubit circuits is of limited effectiveness~\cite{lysaght_2024}, instead of resynthesizing gates in qubit circuits on an individual basis, a more promising approach would be to generalize the above approach for the AND gate, to binary classical reversible functions emulated with ternary classical reversible functions.
Explicit constructions exist for any qudit dimension $d$, of any $d$-ary classical reversible function $f:\mathbb{Z}_d^n\to \mathbb{Z}_d^n$ on $n$ dits by a circuit of $O(d^n n)$ $\ket{0}$-controlled $X$ gates and $O(n)$ ancillae prepared in the $\ket{0}$ state (\cite{ZiW2023optsynthmultictrlqudit, Roy_2023}, building on~\cite{yeh_qutrit_ctrl_2022}), thus putting the T-count for any odd prime $d$ at $O(d^n n)$.
This puts the best known upper and lower bounds at a log factor separation: For any qudit dimension $d$, there exist $d$-ary classical reversible functions $f:\mathbb{Z}_d^n\to \mathbb{Z}_d^n$ that require at least $O(n d^n/\log n)$ single-qudit and two-qudit gates to construct, even when allowed $\Omega (n)$ ancillae~\cite{Roy_2023}.
More generally, techniques to compress qubit circuits using qudit decompositions~\cite{lysaght_2024} or using mixed-dimensional quantum systems by clustering weighted graphs~\cite{mato2023compression} can also be leveraged---from there, the difficult next steps are to decompose into conceivably fault-tolerant gates, or to build error-correcting codes around them as we will next do.

\subsection{A T-depth 1 Symmetric Circuit as a \texorpdfstring{$\llbracket 3,2,1 \rrbracket$}{[[3,2,1]]} Qutrit Code}
As noted in Remark~\ref{remark:freedom} there are many qutrit circuits that emulate the binary AND gate. From prioritizing those with low T-count, but more importantly, are symmetric and T-depth one, we found:
\begin{equation}
    \tikzfigscale{0.8}{"qutrit_binary_AND_2"}
\end{equation}
where we can apply the symmetric T-depth one decomposition of the $\ket{0}$-controlled Z gate from Figure 7 of~\cite{bocharov_factoring_2017}.
Therefore, the full Clifford+T symmetric T-depth one qutrit circuit decomposition of the qubit binary AND gate is given by
\begin{equation}\label{eq:boxedcircuit}
    \tikzfigscale{0.78}{circuit_with_boxes}
\end{equation}
In the equivalent ZX-diagram, we see that $E_{pf}$ is a phase-free ZX-diagram:
\begin{equation}\label{eq:symmetricANDzx}
    \tikzfigscale{0.8}{"full_binary_and_ZX_diagram"}
\end{equation}

To see how this forms the encoder (and decoder) of a QEC code, we can use ZX-calculus to transform the first half of circuit into the Generalized Parity Form (GPF). From this we can directly ``read off'' the Z and X-type logical basis operators and stabilizer generators. Notice that this circuit contains other gates besides CNOTs, which means that the corresponding code is not a CSS code. However, for the purposes of finding the stabilizer generators and logical operators and determining code distance, it is sufficient to focus on just the CSS portion of the circuit, which is composed of the inner CX gates~\cite{kissinger_phase-free_2022}. 
Therefore, from the CNOT encoder portion, we get the encoder in Z-X (left) and X-Z (right) normal forms~\cite{kissinger_phase-free_2022,HuangJ2023zxcss,KissingerA2024PQS}:
\begin{equation}
    \tikzfigscale{0.8}{distance_one_Z-X_enc} = \tikzfigscale{0.8}{distance_one_enc} = \tikzfigscale{0.8}{distance_one_X-Z_enc}
\end{equation}
From the connectivity of the Z-X form of the encoder, the X-type logical operators are $\overline{X}_1 = X_1X_3^2$ and $\overline{X}_2 = X_1X_2X_3$. There are no X-type stabilizers. From the X-Z form of the encoder, the Z-type logical operators and stabilizer generators are $\overline{Z}_1 = Z_1Z_2^2$ and $\overline{Z}_2 = Z_2$, and $S_1^{Z} = Z_1Z_2Z_3$. We can observed the code distance is 1, since there is a logical Z-type operator of weight 1.

\section{Reconstructing the Code at Higher Distance}
To increase the code distance, we would need to add X-type stabilizers, while ensuring that the distance stays the same. We take inspiration from an observation about the qubit $\llbracket 8,3,2 \rrbracket$ code construction from~\cite{campbell_smallest_2016}, which we deconstruct in the ZX-calculus in Sections~\ref{sec:circzx} and~\ref{sec:transversalccz} of the appendix.

In an attempt to reproduce how to construct the $\llbracket 8,3,2 \rrbracket$ code by adding an X-type stabilizer to all physical qubits of a $\llbracket 7,3,1 \rrbracket$ code with transversal CCZ, we add an X-type stabilizer that checks all qutrits: $S_1^{X} = X_1 X_2 X_3$. Adding to our earlier encoder, which we conjugate $T^{\otimes 3}$ by, 
and after applying ZX rewrite rules to ``push'' the stabilizer outside of the encoder/decoder, we obtain the following symmetric circuit:
\begin{equation}
    \tikzfigscale{0.8}{"expanded_CZ0_conj_circuit"}
\end{equation}
Since the inner portion of the circuit is still the same as before adding the stabilizer, this is the $\ket{0}$-controlled $Z$ gate with a CX gate applied from an extra ancilla (which is initialized and postselected to $\ket{+}$) to the second data qutrit on either sides. Therefore, the equivalent circuit is
\begin{equation}
    \tikzfigscale{0.8}{"CZ0_conj_circuit"}
\end{equation}
However, we can see that the logical gate corresponding to this circuit is no longer the same.

\subsection{Recovering Logical AND}\label{sec:recoveringAND}
To figure out how to modify the circuit to preserve the logical binary AND gate, let's first see what is the logical gate implemented by this circuit. Since the ancilla qutrit is in the qutrit $\ket{+}$ state just before the first CX gate, we can consider three cases based on the Z basis state of this superposition. For $\ket{0}$, the CX gates controlled by the ancilla does nothing, so the circuit is the same as simply the $\ket{0}$-controlled $Z$ from the first to the second qutrit. The behavior corresponding to this $\ket{0}$ case of the ancilla can be enforced by adding a second control from the ancilla, as in the leftmost case below.
\begin{equation*}
    \begin{matrix}
    \textbf{Ancilla in }\ket{0} \qquad & \qquad \textbf{Ancilla in }\ket{1} \qquad & \qquad \textbf{Ancilla in }\ket{2}\\
    \tikzfigscale{0.8}{"CZ0_conj_ket_0_double_control"} \qquad & \qquad \tikzfigscale{0.8}{"CZ0_conj_ket_1_double_control"} \qquad & \qquad \tikzfigscale{0.8}{"CZ0_conj_ket_2_double_control"}
\end{matrix}
\end{equation*}
For the $\ket{1}$ case of the ancilla, the CX gates apply a $X$ to the left and $X^\dagger$ to the right of $Z$ on the second qutrit. Observing that $X^\dagger ZX = \omega Z$, this simplifies to the middle case above.
Similarly, for the $\ket{2}$ case where two $X$ and $X^\dagger$ gates are applied, since $(X^\dagger)^2 Z X^2 = \omega^2 Z$, we obtain the rightmost case above.

Putting together these three cases, we find that
\begin{equation}
    \tikzfigscale{0.8}{"CZ0_conj_circuit"} = \tikzfigscale{0.8}{"CZ0_conj_circuit_expanded"}
\end{equation}
We can expand out the gates by separating out the $Z$ and the extra phases, then permuting the gates to simplify the circuit:
    \begin{equation}\label{eq:CZ0lemma1}
        \tikzfigscale{0.8}{CZ0_first_lemma_1} = \tikzfigscale{0.8}{CZ0_first_lemma_2} = \tikzfigscale{0.8}{CZ0_first_lemma_3} = \tikzfigscale{0.8}{CZ0_first_lemma_4}
    \end{equation}
Moreover, one can check that
\begin{equation}\label{eq:CZ0lemma2}
        \tikzfigscale{0.8}{CZ0_second_lemma_1} = \tikzfigscale{0.8}{CZ0_second_lemma_2}
    \end{equation}
Putting together Eqs.~\ref{eq:CZ0lemma1} and~\ref{eq:CZ0lemma2}, we can see that the following steps produce a CX-conjugated circuit which still implements the $\ket{0}$-controlled $Z$ gate (the inner portion of the circuit for AND):
\begin{equation}\label{eq:CZ0codecircuit}
    \tikzfigscale{0.8}{"CZ0_base_circuit"} = \tikzfigscale{0.8}{"CZ0_conj_circuit_simplified_conj_again"} = \tikzfigscale{0.8}{CZ0_code_circuit_permuted_last_two_gates} = \tikzfigscale{0.8}{"CZ0_code_circuit"}
\end{equation}
Thus, the final circuit in Eq~\ref{eq:CZ0codecircuit} preserves the logical AND gate and has an additional X-type stabilizer. We will now use this to derive the code.

\subsection{Deriving the Code as a Symmetric, T-depth One Circuit}
In order to determine the corresponding QEC code, our approach is to solve for a symmetric, T-depth one synthesis of the circuit. We can rewrite the circuit in its symmetric form using \emph{phase gadgets}---symmetric multi-qubit interactions native to many hardware architectures~\cite{PinoJM2021honeywellarchitecture,PetitL2022silicontwoqubitgates,SheldonS2016IBMCRgate} and useful for optimizing quantum circuits~\cite{CowtanA2020phasegadget,deBeaudrapN2020reducepifourphase,vandeWeteringJ2021globalgates}.

First, we will find the phase gadgets representation of the $\ket{0}$-controlled $Z$ and $Z^\dagger$ gates. Then, we will find the combined phase gadgets form. From this, we can obtain a T-depth one symmetric form of the circuit.

We begin with the symmetric circuit for $\ket{0}$-controlled from~\cite{bocharov_factoring_2017}. To rewrite this in phase gadgets form, we will need to use Eq (6) of~\cite{wetering_building_2023} (which can be derived using spider fusion, bialgebra rule, and dualizing).
Therefore,
\begin{align}
    &\tikzfigscale{0.7}{"CZ0_base_circuit"} = \tikzfigscale{0.7}{"microsoft_symmetric_CZ0_ZX_diagram"}\eqbc{(f)(s)} \tikzfigscale{0.7}{"symmetric_CZ0_to_phase_gadget_1"}\\
    \;& \eqbc{\text{\cite{wetering_building_2023}}}\;\tikzfigscale{0.8}{"symmetric_CZ0_to_phase_gadget_2"}\;\eqbc{(f)}\;\;\tikzfigscale{0.8}{"symmetric_CZ0_to_phase_gadget_3"} = \tikzfigscale{0.8}{"symmetric_CZ0_to_phase_gadget_4"} = \tikzfigscale{0.8}{"symmetric_CZ0_to_phase_gadget_6"}
\end{align}
To find the phase gadget form of its adjoint, $\ket{0}$-controlled $Z^\dagger$, we take the adjoint of the phase gadget diagram that we just found, by taking its horizontally reflection and negating the phases. Applying dualizers to all wires surrounding the red $X$ spiders flips back the direction of the phase gadgets, so we end up with the same diagram for $\ket{0}$-controlled $Z^\dagger$ except with $T^\dagger$ instead of $T$ phases. Therefore, we can replace the $\ket{0}$-controlled $Z$ and $Z^\dagger$ gates with their respective phase gadget representation, then rewrite and apply the rule from~\cite{wetering_building_2023} to obtain a combined phase gadgets form.
\subsection{Finding the Stabilizers and Logical Operators}
From the combined phase gadgets diagram, we can obtain a symmetric circuit by reading off the connectivity for each of the phase gadgets' spiders. This is because the above diagram corresponds to phase polynomials based on sum over paths for each of the phase gadgets, which can equivalently be expressed in a symmetric form that computes a parity map on the left then uncomputes on the right, which is also equivalent to a CNOT circuit with phase gates and pre/postselection (see Chapter 7.1 of \cite{KissingerA2024PQS}). Thus, from the combined phase gadget diagram above, we obtain the following symmetric forms:
\begin{equation}
  \tikzfigscale{0.8}{"Z-X_enc_dec"} \quad = \quad \tikzfigscale{0.8}{"X-Z_enc_dec_simplified"}
\end{equation}
The left portion of the diagrams represents the encoder maps, and symmetrically, the right portion the decoder map. In the left hand side diagram, the encoder is in the Z-X form~\cite{kissinger_phase-free_2022}. From this, we can read off the X-type stabilizers and logical operators. The diagram on the right hand side is the corresponding X-Z encoder, obtained by applying bialgebra, dual, and fusion rewrite rules. From the X-Z form, we can read off the Z-type stabilizers and logical operators.
Thus, for this inner CSS code (which implements the $\ket{0}$-controlled $Z$), the two equivalent encoder maps, along with the corresponding stabilizer generators and logical operators are:
\begin{equation}\label{eq:Epf}
    \vcenter{
  \hbox{$
    \begin{aligned}
      \overline{X}_1 &= X_1^2X_2X_4^2X_5 \\
      \overline{X}_2 &= X_1X_2X_3\\
      S_{X1} & = X_1X_2X_3X_4X_5X_6
    \end{aligned}
  $}}\quad
\tikzfigscale{0.8}{Z-X_enc}\quad=\quad \tikzfigscale{0.8}{X-Z_enc_simplified}\quad \vcenter{
  \hbox{$
    \begin{aligned}
      \overline{Z}_1 &= Z_4Z_5^2\\
      \overline{Z}_2 &= Z_3Z_6^2\\
      S_{Z1} & = Z_1Z_2Z_3\\
      S_{Z2} & = Z_2Z_3^2Z_4^2Z_5\;\;\;\;\;\;\;\;\;\;\;\,\\
      S_{Z3} & = Z_4Z_5Z_6
    \end{aligned}
  $}
}
\end{equation}
Recall (see Eq.~\ref{eq:symmetricANDzx}) that the overall symmetric T-depth 1 circuit for AND gate consists of outer Clifford gates in addition to the inner CX circuit. The encoders above are for the CSS code corresponding to this inner CX circuit.

By pushing logical X and Z spiders through the full encoder, we obtain the following logical operators for the ``overall'' code implementing transversal AND:
\begin{equation}
\begin{array}{@{}l@{\hspace{1.5cm}}l@{}}
\overline{X}_1 = X_1^2 X_2 X_4^2 X_5
& \overline{Z}_1 = X_1 X_2 X_3 Z_4 Z_5^2 \\[4pt]
\overline{X}_2 = X_1 X_2^2 Z_3^2 X_4 X_5^2 Z_6
& \overline{Z}_2 = X_1 X_2 X_3
\end{array}
\end{equation}
We verify that the code distance is 2 by checking the minimum weight of all possible logical operators (such as by multiplying each logical operator with all possible linear combinations of the stabilizers, for this small code).

We will use $E_{pf}$ to refer to the inner CSS encoder in Eq.~\ref{eq:Epf} (comprised of phase-free ``pf'' spiders) for the CSS code implementing the $\ket{0}$-controlled Z. We will use $E_{AND}$ to refer to the full encoder (consisting of outer Clifford gates and $E_{pf}$) for the $\llbracket 6,2,2\rrbracket$ qutrit code implementing transversal AND.

\subsection{Getting to Distance 4 Through Code Concatenation}\label{sec:concat}
We remark that our $\llbracket 6,2,2 \rrbracket$ qutrit code with a transversal AND gate (implemented by 3 T and 3 T$^\dagger$ gates), can be concatenated as the outer code with the inner $\llbracket 8,1,2 \rrbracket$ qutrit CSS code with transversal T gate from~\cite{Campbell2012msdprime}.
As both are qutrit CSS codes and the inner CSS code's X and Z logical operators are implemented by physical X's and Z's respectively, the resulting code distance is the product of that of the inner and outer codes\footnote{This is a well-known result \cite{gottesman2024survivingch9}, deducible by presuming that any logical error can occur if at least $d_{\text{outer}}$ logical qubits of the inner CSS code experienced errors, each of which had to have experienced at least $d_{\text{inner}}$ physical errors.}: $\llbracket 48,2,4 \rrbracket$, with transversal AND gate implemented by 24 T and 24 T$^\dagger$ gates.

\section{Qubit Subspace Codes}
In this section, we will introduce Qubit Subspace Projection as a logical operation in qutrit codes. In addition, depending on how these results are utilized, these can be viewed from the perspectives of gauge fixing of subsystem codes, and as adding additional stabilizers to construct mixed qubit-qutrit codes which we call Qubit Subspace Codes.
We begin by deriving for the qutrit setting, the physical action of logical ZX diagrams for CSS codes in Section~\ref{sec:pte} of the appendix.

\subsection{Logical Qubit, Physical Qutrits}
We describe how to derive a quantum code that encodes a logical qubit ($k=1$) using $n=6$ physical qutrits. While we suspect this first Qubit Subspace Code is of more limited use than the one in the next subsection, this also serves as a simpler demonstration of code construction techniques:

Begin with $E_{pf}$, the encoder for the $\llbracket 6,2,2\rrbracket$ inner qutrit CSS code implementing $\ket{0}$-controlled Z (henceforth denoted ZCZ for short). Fix the bottom logical qutrit to $\ket{0}_L$, apply logical $X$ to the top qutrit (to the left of the encoder), measure the bottom logical qutrit in the X basis, and check if it is $\bra{+}_L$ using Steane error correction \cite{steane1997active} on a $\ket{0}_L\ket{+}_L$ resource state. This procedure, presented in more detail in Section~\ref{sec:andmsdi} of the appendix, uses gauge fixing to enable a logical operation, by projecting the top logical qutrit into the qubit subspace.

\subsection{Qubit Subspace Projection}
Now, we will show how to obtain a code with $k=2$ logical qubits and $n=6$ physical qutrits that has the same stabilizers as that of $E_{AND}$, but has two additional stabilizers which project the two logical qutrits into the qubit subspace.
The idea is that on the binary-valued inputs and output of the AND gate, physical projective measurements can be made that ensure that the logical qutrit states are in their qubit subspace:
\begin{equation}\label{eq:qspboxeq}
    \tikzfigscale{0.8}{subspace_proj_AND}
\end{equation}

The following logical circuit implements $\Pi_L$, the projection of a qutrit into its $\ket{0}$ and $\ket{1}$ subspace. We call this the qubit subspace projection (SP) gadget. Using the ZH diagram for $\lnot\ket{0}$-controlled $X$ from \cite{Roy_2023}, we can derive the ZX diagram for the SP gadget, as shown on the right hand side.
\begin{equation}
    \tikzfigscale{0.8}{leakage_gadget} = \tikzfigscale{0.8}{leakage_gadget_2} = \tikzfigscale{0.8}{leakage_gadget_zx} = \tikzfigscale{0.8}{leakage_gadget_zx_simplified}
\end{equation}
To find the physical implementation of the SP gadget in the $\llbracket 6,2,2\rrbracket$ code, we can push this diagram through the full Clifford encoder.
Below is the derivation for obtaining the physical SP gadget for the second (bottom) logical qutrit\footnote{The second logical qutrit is the ``answer'' qutrit of the AND circuit, mapping to $a\land b$. The physical SP gadget for the first (top) logical qutrit can be analogously derived by pushing the logical SP gadget as input to the top wire through the encoder.}. This is done by plugging in the SP gadget into the second logical input wire, then applying rewrite rules to push this through the outer Cliffords.
\begin{align}
    & \tikzfigscale{0.8}{PTE_full_enc_1_simplified}  =  \tikzfigscale{0.8}{PTE_full_enc_3_simplified}
\end{align}
After using Lemma~\ref{lemma:pte} to push the gadget through the inner CSS encoder $E_{pf}$, we find the physical implementation of the SP gadget for the second logical qutrit, supported on physical qutrits 3 and 6 as it matches $\overline{Z_2} = Z_3 Z_6^2$ of $E_{pf}$:
\begin{equation}\label{eq:qsp}
\tikzfigscale{0.8}{Pi_bot} \;\;=\;\; \tikzfigscale{0.8}{physicalQSPGflippedallwires} \;\;=\;\;\tikzfigscale{0.8}{physicalQSPGcirc}
\end{equation}
The right hand side is an equivalent diagram in which the gadget is viewed as a resource state.
We observe that the left hand side of the gadget, starting from the $\ket{1}$ input (represented by the red $\frac{2}{1}$ spider) to before the second $H^\dagger$, is a qutrit state prepared as the \textit{qubit} $\ket{+}$, an equal superposition of $\ket{0}$ and $\ket{1}$.

Beyond a static code construction, this technique can be viewed as a form of code switching through subsystem code gauge fixing. This is because we derive the projection operator that uses the qubit $\ket{+}$ state as part of the resource state. When the gauge logical qutrit state is this full gadget state (preparing a qubit $\ket{+}$ followed by $H^\dagger$), it results in a logical qubit. When instead the gauge logical qutrit state the qutrit $\ket{+}$ state, there is no Qubit Subspace Projection.

\section{Magic State Distillation and Injection}
The $\ket{0}$-controlled Z gate is deterministically injectable because it is a diagonal gate in the third level of the Clifford hierarchy, which we prove in Sections~\ref{sec:cczmsd} to~\ref{sec:andmsdi} of the appendix:
\begin{equation}
    \input{ZCZ/ZCZinj.tikz}
\end{equation}
There we also prove correctness of magic state distillation and deterministic injection of the AND gate for qutrit CSS codes. The injection circuit is:
\begin{equation}\label{eq:msd}
    \tikzfigscale{1.0}{ANDmsdcorrectedRIGHT}
\end{equation}
The correction is a Clifford unitary where all but one gate---the $Z(2b,2b)$ gate---are Pauli or CX gates, and hence transversally implementable on CSS codes~\cite{GottesmanD1997thesis}.

\section{Conclusion}
In this work, we demonstrated that moving to qutrit systems enables an efficient unitary realization of the classical binary AND gate within the Clifford+T framework, and we developed explicit low–T-count constructions such as generalizing to multi-input AND. By reinterpreting symmetric compute–phase–uncompute decompositions as encoding and decoding circuits, we showed how to design qutrit stabilizer codes that admit transversal non-Clifford implementations of logical AND, including a $\llbracket 6,2,2\rrbracket$ code and a concatenated extension with increased distance. We further introduced mixed-dimensional subspace constructions and protocols for magic state distillation and deterministic injection compatible with qutrit CSS codes. Together, these results illustrate how higher-dimensional and mixed-dimensional code design can expand the landscape of fault-tolerant logical gates beyond qubit-based constraints, and motivate further investigation of their practical performance and scalability.

A natural question to investigate through in-depth comparative study would be how theorized qutrit codes perform under different noise models and hardware constraints. These could utilize recent software libraries for qudit computation such as~\cite{mato2024mqtqudits, kabir2025sdim}, in order to benchmark via numerical simulation against qubit codes with transversal CCZ gates, such as the $\llbracket 8,3,2 \rrbracket$ color code (implemented by physical T and T$^\dagger$ gates) or the $\llbracket 27,3,3\rrbracket$ code of~\cite{li2025jump} (implemented by a depth-2 physical CCZ circuit).

The other promising direction is to continue pushing the theory of higher-dimensional QEC code constructions. For discovering more codes and dynamic QEC protocols for a fixed qudit dimension, we can leverage normal forms such as for qutrit Clifford circuits~\cite{Li_2025}, for qubit Clifford isometries~\cite{khesin2025graph, Yeh2026zxstabtab} (which we expect admit straightforward generalization to prime-dimensional qudits due to~\cite{Roy_2023,PoorB2023qupitzxtra}), and qutrit Clifford+T circuits~\cite{glaudell_exact_2024}.

\begin{credits}
\subsubsection{\ackname}
Thank you to Nadish de Silva for explaining the approach for efficient multi-qudit gate teleportation described in~\cite{deSilvaN2021quditgatetp}.
We would like to thank Noah Goss, Matt McEwen, Costin Iancu, and Ed Younis for discussions about useful circuit constructions for three-level quantum systems.
We would like to thank Sarah Meng Li and Jacky (Jiaxin) Huang for the collaboration on pushing through the encoder and gauge fixing in~\cite{HuangJ2023zxcss} that greatly informed this work.
We would like to thank Aleks Kissinger for discussions on the transversal CCZ of $\llbracket 8,3,2\rrbracket$ code.
LY would like to thank the Google PhD Fellowship, and the Tencent Post-Doctoral Research Fellowship at the University of Cambridge, for funding.

\end{credits}

\bibliographystyle{splncs04}
\bibliography{refs.bib}
\newpage
\appendix
\include{appendix.tex}
\end{document}

%% file: zx.tikzstyles
\tikzstyle{box}=[shape=rectangle, text height=1.5ex, text depth=0.25ex, yshift=0.5mm, fill=white, draw=black, minimum height=5mm, yshift=-0.5mm, minimum width=5mm, font={\small}]
\tikzstyle{Z dot}=[inner sep=0mm, minimum size=2mm, shape=circle, draw=black, fill={rgb,255: red,216; green,248; blue,217}, tikzit fill={rgb,255: red,216; green,248; blue,217}]
\tikzstyle{Z phase dot}=[minimum size=4.75mm, font={\footnotesize}, shape=rectangle, rounded corners=1.9mm, inner sep=0.1mm, outer sep=-2mm, scale=0.8, tikzit shape=circle, draw=black, fill={rgb,255: red,216; green,248; blue,216}, tikzit draw=blue, tikzit fill={rgb,255: red,216; green,248; blue,216}]
\tikzstyle{X dot}=[Z dot, shape=circle, draw=black, fill={rgb,255: red,232; green,165; blue,164}, tikzit fill={rgb,255: red,232; green,165; blue,164}]
\tikzstyle{X phase dot}=[Z phase dot, tikzit shape=circle, tikzit fill={rgb,255: red,232; green,165; blue,164}, fill={rgb,255: red,232; green,165; blue,164}, font={\footnotesize}, tikzit draw=black]
\tikzstyle{hadamard}=[fill={rgb,255: red,255; green,255; blue,130}, draw=black, shape=rectangle, inner sep=0.6mm, minimum height=1.5mm, minimum width=1.5mm]
\tikzstyle{vertex}=[inner sep=0mm, minimum size=1mm, shape=circle, draw=black, fill=black]
\tikzstyle{vertex set}=[inner sep=0mm, minimum size=1mm, shape=circle, draw=black, fill=white, font={\footnotesize\boldmath}]
\tikzstyle{white dot}=[Z dot]
\tikzstyle{gray dot}=[X dot]
\tikzstyle{white phase dot}=[Z phase dot]
\tikzstyle{gray phase dot}=[X phase dot]
\tikzstyle{tiny none}=[none, font={\tiny}]
\tikzstyle{filament}=[hadamard, fill=yellow, draw=none, minimum height=0.01mm]
\tikzstyle{unit circle}=[shape=circle, minimum size=42.5 mm, fill=none, draw={rgb,255: red,223; green,223; blue,223}, tikzit draw={rgb,255: red,223; green,223; blue,223}]
\tikzstyle{white circle}=[shape=ellipse, fill=white, draw=black]
\tikzstyle{2gate}=[fill=white, draw=black, shape=rectangle, minimum width=0.75cm, minimum height=1.25cm]
\tikzstyle{small dot}=[vertex, minimum size=1.5 mm, draw=black, tikzit draw=black, tikzit fill=black, tikzit shape=circle]
\tikzstyle{targ}=[vertex set, minimum size=0.5mm, inner sep=-0.5mm, tikzit shape=circle, shape=circle, tikzit draw=black]
\tikzstyle{2 control}=[vertex set, draw=black, inner sep=0.5pt]
\tikzstyle{1 control}=[2 control, draw=black]
\tikzstyle{0 control}=[2 control, draw=black]
\tikzstyle{neg0 control}=[0 control, inner sep=0 mm, fill=white, draw=black]
\tikzstyle{gn}=[inner sep=0mm, minimum size=2mm, shape=circle, draw=black, fill={rgb,255: red,216; green,248; blue,216}, tikzit fill={rgb,255: red,216; green,248; blue,216}]
\tikzstyle{rn}=[gn, shape=circle, draw=black, fill={rgb,255: red,232; green,165; blue,165}, tikzit fill={rgb,255: red,232; green,165; blue,165}]
\tikzstyle{gn_phase}=[shape=rectangle, fill={zx_green}, draw=black, minimum size=1.2em, rounded corners=0.48em, inner sep=0.2em, outer sep=-0.2em, scale=0.8, font={\footnotesize}, tikzit shape=circle, tikzit fill={rgb,255: red,181; green,215; blue,181}]
\tikzstyle{rn_phase}=[{gn_phase}, fill={rgb,255: red,232; green,165; blue,165}, draw=black, tikzit fill={rgb,255: red,232; green,165; blue,165}]
\tikzstyle{small hadamard}=[fill=yellow, draw, inner sep=0.6mm, minimum height=1.5mm, minimum width=1.5mm, tikzit shape=rectangle]
\tikzstyle{dualizer}=[hadamard, fill=white, shape=rectangle, tikzit fill=white, tikzit draw=black, draw=black]

\tikzstyle{directedarrow}=[draw={rgb,255: red,223; green,223; blue,223}, ->, tikzit draw={rgb,255: red,223; green,223; blue,223}, line width=1 pt]
\tikzstyle{simple}=[-]
\tikzstyle{hadamard edge}=[-, color={rgb,255: red,100; green,200; blue,248}, dashed, dash pattern=on 2pt off 0.7pt, tikzit draw={rgb,255: red,120; green,220; blue,248}, line width=0.2mm]
\tikzstyle{brace edge}=[-, tikzit draw=blue, decorate, decoration={brace,amplitude=1mm,raise=-1mm}]
\tikzstyle{gray}=[-, draw={rgb,255: red,223; green,223; blue,223}, line width=1 pt]
\tikzstyle{arrow}=[<-, draw={rgb,255: red,128; green,128; blue,128}]
\tikzstyle{double-arrow}=[draw={rgb,255: red,128; green,128; blue,128}, <->, line width=1.5 pt]
\tikzstyle{dashed edge}=[-, dashed, dash pattern=on 2pt off 0.5pt, draw=black]
\tikzstyle{diredge}=[->]
\tikzstyle{double edge}=[-, double, shorten <=-1mm, shorten >=-1mm, double distance=2pt]
\tikzstyle{thin}=[-, line width=0.05mm]
\tikzstyle{thin gray}=[-, draw={rgb,255: red,223; green,223; blue,223}, line width=0.05mm]
\tikzstyle{less thin}=[-, line width=0.1mm]
\tikzstyle{dashed gray edge}=[-, dashed edge, draw={rgb,255: red,223; green,223; blue,223}]
\tikzstyle{light right directed arrow}=[->, directedarrow, draw={rgb,255: red,223; green,223; blue,223}, line width=0.2mm]
\tikzstyle{diredge0.3}=[->, line width=0.3 mm]
\tikzstyle{less thin gray}=[-, draw={rgb,255: red,223; green,223; blue,223}]
\tikzstyle{hadamard arrows}=[dashed, color={rgb,255: red,128; green,0; blue,128}, tikzit draw={rgb,255: red,128; green,0; blue,128}, <->, dash pattern=on 2pt off 0.7pt]
\tikzstyle{double arrow}=[<->]
\tikzstyle{light blue edge}=[-, color={rgb,255: red,100; green,200; blue,248}, tikzit draw={rgb,255: red,100; green,200; blue,248}]
\tikzstyle{black0.3}=[-, line width=0.3mm]
\tikzstyle{pink0.3}=[-, line width=0.3mm, color={rgb,255: red,255; green,100; blue,200}, tikzit draw={rgb,255: red,255; green,100; blue,200}]
\tikzstyle{light blue0.3}=[-, line width=0.3mm, color={rgb,255: red,100; green,200; blue,248}, tikzit draw={rgb,255: red,100; green,200; blue,248}]
\tikzstyle{green0.3}=[-, color={rgb,255: red,100; green,255; blue,50}, tikzit draw={rgb,255: red,100; green,255; blue,50}, line width=0.3mm]
\tikzstyle{diredge light0.3}=[->, diredge, color=blue, line width=0.3mm, tikzit draw=blue]
\tikzstyle{light blue diredge0.08}=[light blue edge, ->, tikzit draw={rgb,255: red,100; green,200; blue,248}, line width=0.08mm]
\tikzstyle{filllayer}=[-, fill={rgb,255: red,250; green,225; blue,200}, opacity=0.5]
\tikzstyle{snekedgeb}=[-, snekedge, line width=1 pt, draw={rgb,255: red,38; green,100; blue,100}]
\tikzstyle{filllayergray}=[-, filllayer, fill={rgb,255: red,150; green,150; blue,150}, draw=white, line width=0 pt]
\tikzstyle{filllayerlight}=[-, filllayergray, fill={rgb,255: red,220; green,220; blue,220}]
\tikzstyle{h edge}=[-, dashed, dash pattern=on 3pt off 2.5pt, draw={rgb,255: red,42; green,134; blue,255}, line width=0.9 pt]
\tikzstyle{filllayerwhite}=[-, filllayer, fill=white, opacity=1]
\tikzstyle{white edge}=[-, draw=white]
\tikzstyle{filllayerlightlightgray}=[-, filllayergray, fill={rgb,255: red,240; green,240; blue,240}]
\tikzstyle{geo}=[-, double, dash pattern=on 1pt off 0.7pt, line width=0.2pt, draw={rgb,255: red,209; green,129; blue,223}]

%% file: qcircuit_common.tex
\usepackage[braket, qm]{qcircuit}
\usepackage{amsmath}
\usepackage{tikz}

\newcommand*\qcontrolcolor[2]{\push{\footnotesize \tikz[baseline=(char.base)]{
                              \node[shape=circle,draw,inner sep=0.6pt,color=#1] (char) {#2};}}
                              \qw}  

\newcommand*\onecontrol{\qcontrolcolor{black}{1}}
\newcommand*\twocontrol{\qcontrolcolor{black}{2}}

%% file: figures/qutrit_Z_spider.tikz
\begin{tikzpicture}
	\begin{pgfonlayer}{nodelayer}
		\node [style=Z phase dot] (0) at (0, 0) {$\frac{\ \alpha\ }{\ \beta\ }$};
		\node [style=none] (1) at (-2, 1) {};
		\node [style=none] (2) at (-2, -1) {};
		\node [style=none] (3) at (2, 1) {};
		\node [style=none] (4) at (2, -1) {};
		\node [style=none] (7) at (-1.5, 0.25) {$\vdots$};
		\node [style=none] (8) at (1.5, 0.25) {$\vdots$};
	\end{pgfonlayer}
	\begin{pgfonlayer}{edgelayer}
		\draw [bend left] (1.center) to (0);
		\draw [bend right] (2.center) to (0);
		\draw [bend left] (0) to (3.center);
		\draw [bend right] (0) to (4.center);
	\end{pgfonlayer}
\end{tikzpicture}

%% file: figures/qutrit_X_spider.tikz
\begin{tikzpicture}
	\begin{pgfonlayer}{nodelayer}
		\node [style=none] (1) at (-2, 1) {};
		\node [style=none] (2) at (-2, -1) {};
		\node [style=none] (3) at (2, 1) {};
		\node [style=none] (4) at (2, -1) {};
		\node [style=none] (7) at (-1.5, 0.25) {$\vdots$};
		\node [style=none] (8) at (1.5, 0.25) {$\vdots$};
		\node [style=X phase dot] (9) at (0, 0) {$\frac{\ \alpha\ }{\ \beta\ }$};
	\end{pgfonlayer}
	\begin{pgfonlayer}{edgelayer}
		\draw [bend left] (1.center) to (9);
		\draw [bend right] (2.center) to (9);
		\draw [bend left] (9) to (3.center);
		\draw [bend left] (4.center) to (9);
	\end{pgfonlayer}
\end{tikzpicture}

%% file: figures/qutrit_H.tikz
\begin{tikzpicture}
	\begin{pgfonlayer}{nodelayer}
		\node [style=none] (0) at (-1.5, 0) {};
		\node [style=none] (1) at (1.5, 0) {};
		\node [style=hadamard] (2) at (0, 0) {1};
	\end{pgfonlayer}
	\begin{pgfonlayer}{edgelayer}
		\draw [in=180, out=0] (0.center) to (2);
		\draw (2) to (1.center);
	\end{pgfonlayer}
\end{tikzpicture}

%% file: figures/qutrit_swap.tikz
\begin{tikzpicture}
	\begin{pgfonlayer}{nodelayer}
		\node [style=none] (0) at (-2, 1) {};
		\node [style=none] (1) at (-2, -1) {};
		\node [style=none] (2) at (2, 1) {};
		\node [style=none] (3) at (2, -1) {};
		\node [style=none] (4) at (0, 0) {};
	\end{pgfonlayer}
	\begin{pgfonlayer}{edgelayer}
		\draw [bend left] (0.center) to (4.center);
		\draw [bend right] (4.center) to (3.center);
		\draw [bend left] (4.center) to (2.center);
		\draw [bend left] (4.center) to (1.center);
	\end{pgfonlayer}
\end{tikzpicture}

%% file: figures/qutrit_id.tikz
\begin{tikzpicture}
	\begin{pgfonlayer}{nodelayer}
		\node [style=none] (0) at (-1, 0) {};
		\node [style=none] (1) at (1, 0) {};
	\end{pgfonlayer}
	\begin{pgfonlayer}{edgelayer}
		\draw (0.center) to (1.center);
	\end{pgfonlayer}
\end{tikzpicture}

%% file: figures/ZCZ/ZCZinj.tikz
\begin{tikzpicture}
	\begin{pgfonlayer}{nodelayer}
		\node [style=none] (4) at (-10.625, -1.25) {$|\psi_{\textnormal{\scriptsize ZCZ}} \rangle$};
		\node [style=none] (7) at (-1.75, 1.25) {$=$};
		\node [style=box] (54) at (-6.75, -0.625) {};
		\node [style=box] (55) at (-6.75, -0.625) {$\!X\!$};
		\node [style=vertex set] (56) at (-6.75, 1.875) {$\Lambda$};
		\node [style=none] (57) at (-11.75, 1.875) {};
		\node [style=none] (58) at (-2.5, 1.875) {};
		\node [style=none] (59) at (-4.5, -0.625) {};
		\node [style=none] (60) at (-9.25, -0.625) {};
		\node [style=box] (61) at (-8.25, -0.625) {$\!H^2\!$};
		\node [style=box] (62) at (-5.5, -1.875) {};
		\node [style=box] (63) at (-5.5, -1.875) {$\!X\!$};
		\node [style=vertex set] (64) at (-5.5, 0.625) {$\Lambda$};
		\node [style=none] (65) at (-11.75, 0.625) {};
		\node [style=none] (66) at (-2.5, 0.625) {};
		\node [style=none] (68) at (-9.25, -1.875) {};
		\node [style=box] (69) at (-8.25, -1.875) {$\!H^2\!$};
		\node [style=none] (70) at (-9.375, -0.625) {};
		\node [style=none] (71) at (-9.375, -1.875) {};
		\node [style=none] (72) at (-9.5, -1.25) {};
		\node [style=none] (84) at (-4.45, -0.875) {};
		\node [style=none] (85) at (-4.5, -0.275) {};
		\node [style=none] (86) at (-4.5, -1.025) {};
		\node [style=none] (87) at (-3.5, -1.025) {};
		\node [style=none] (88) at (-3.5, -0.275) {};
		\node [style=none] (89) at (-3.55, -0.875) {};
		\node [style=none] (90) at (-4, -0.95) {};
		\node [style=none] (91) at (-3.75, -0.325) {};
		\node [style=none] (93) at (-4.275, -0.275) {};
		\node [style=none] (94) at (-4.225, -0.275) {};
		\node [style=none] (95) at (-4.5, -1.875) {};
		\node [style=none] (96) at (-4.45, -2.125) {};
		\node [style=none] (97) at (-4.5, -1.525) {};
		\node [style=none] (98) at (-4.5, -2.275) {};
		\node [style=none] (99) at (-3.5, -2.275) {};
		\node [style=none] (100) at (-3.5, -1.525) {};
		\node [style=none] (101) at (-3.55, -2.125) {};
		\node [style=none] (102) at (-4, -2.2) {};
		\node [style=none] (103) at (-3.75, -1.575) {};
		\node [style=none] (104) at (-4.275, -1.525) {};
		\node [style=none] (105) at (-4.225, -1.525) {};
		\node [style=none] (159) at (9.125, 1.875) {};
		\node [style=none] (165) at (9.125, 0.625) {};
		\node [style=none] (171) at (-4.5, 2) {};
		\node [style=none] (172) at (-4.5, 0.5) {};
		\node [style=none] (173) at (-3.5, 0.5) {};
		\node [style=none] (174) at (-3.5, 2) {};
		\node [style=none] (175) at (-3.5, 1.875) {};
		\node [style=none] (176) at (-3.5, 0.625) {};
		\node [style=none] (177) at (-4.5, 1.875) {};
		\node [style=none] (178) at (-4.5, 0.625) {};
		\node [style=none] (179) at (-4, 1.25) {$C$};
		\node [style=none] (180) at (-4.05, 0.5) {};
		\node [style=none] (181) at (-3.95, 0.5) {};
		\node [style=none] (182) at (-4.05, -0.275) {};
		\node [style=none] (183) at (-3.95, -0.275) {};
		\node [style=none] (184) at (-4.05, -1.025) {};
		\node [style=none] (185) at (-3.95, -1.025) {};
		\node [style=none] (186) at (-4.05, -1.525) {};
		\node [style=none] (187) at (-3.95, -1.525) {};
		\node [style=Z phase dot] (193) at (7.375, 0.65) {$\frac{a\!^2}{2a\!^2}$};
		\node [style=Z phase dot] (194) at (7.375, 1.875) {$\frac{2ab}{ab}$};
		\node [style=Z dot] (195) at (8.5, 0.65) {};
		\node [style=hadamard] (196) at (8.5, 1.175) {\scriptsize $a$};
		\node [style=Z phase dot] (197) at (8.5, 1.875) {$\frac{2b}{2b}$};
		\node [style=box] (276) at (0, 0.625) {$\!X\!$};
		\node [style=vertex set] (277) at (0, 1.875) {0};
		\node [style=none] (282) at (-0.875, 1.875) {};
		\node [style=none] (283) at (0.875, 1.875) {};
		\node [style=none] (284) at (-0.875, 0.625) {};
		\node [style=none] (285) at (0.875, 0.625) {};
		\node [style=none] (286) at (6.625, 1.875) {};
		\node [style=none] (287) at (6.625, 0.625) {};
		\node [style=none] (288) at (5.375, 1.875) {};
		\node [style=none] (289) at (5.375, 0.625) {};
		\node [style=none] (290) at (4.125, 2) {};
		\node [style=none] (291) at (4.125, 0.5) {};
		\node [style=none] (292) at (5.125, 0.5) {};
		\node [style=none] (293) at (5.125, 2) {};
		\node [style=none] (294) at (5.125, 1.875) {};
		\node [style=none] (295) at (5.125, 0.625) {};
		\node [style=none] (296) at (4.125, 1.875) {};
		\node [style=none] (297) at (4.125, 0.625) {};
		\node [style=none] (298) at (4.625, 1.25) {$C$};
		\node [style=none] (299) at (4.575, 0.5) {};
		\node [style=none] (300) at (4.675, 0.5) {};
		\node [style=none] (301) at (3.875, 1.875) {};
		\node [style=none] (302) at (3.875, 0.625) {};
		\node [style=none] (303) at (6, 1.25) {$=$};
		\node [style=none] (304) at (2.25, 0.75) {, where};
		\node [style=none] (305) at (-3.225, -0.575) {};
		\node [style=none] (306) at (-3.225, -0.675) {};
		\node [style=none] (307) at (-3.5, -0.575) {};
		\node [style=none] (308) at (-3.5, -0.675) {};
		\node [style=none] (309) at (-3.225, -1.825) {};
		\node [style=none] (310) at (-3.225, -1.925) {};
		\node [style=none] (311) at (-3.5, -1.825) {};
		\node [style=none] (312) at (-3.5, -1.925) {};
		\node [style=none] (313) at (-2.85, -0.625) {$\langle a|$};
		\node [style=none] (314) at (-2.85, -1.875) {$\langle b|$};
	\end{pgfonlayer}
	\begin{pgfonlayer}{edgelayer}
		\draw [style=simple] (56) to (55);
		\draw [style=simple] (57.center) to (56);
		\draw [style=simple] (60.center) to (55);
		\draw [style=simple] (55) to (59.center);
		\draw [style=simple] (64) to (63);
		\draw [style=simple] (65.center) to (64);
		\draw [style=simple] (68.center) to (63);
		\draw [style=less thin] (70.center)
			 to [in=15, out=180, looseness=0.75] (72.center)
			 to [in=180, out=-15, looseness=0.75] (71.center);
		\draw (88.center)
			 to (85.center)
			 to (86.center)
			 to (87.center)
			 to cycle;
		\draw [bend left=75, looseness=1.50] (84.center) to (89.center);
		\draw (91.center) to (90.center);
		\draw (100.center) to (97.center);
		\draw (97.center) to (98.center);
		\draw (98.center) to (99.center);
		\draw (99.center) to (100.center);
		\draw [bend left=75, looseness=1.50] (96.center) to (101.center);
		\draw (103.center) to (102.center);
		\draw (95.center) to (63);
		\draw (172.center)
			 to (171.center)
			 to (174.center)
			 to (173.center)
			 to cycle;
		\draw (56) to (177.center);
		\draw (64) to (178.center);
		\draw (176.center) to (66.center);
		\draw (58.center) to (175.center);
		\draw (180.center) to (182.center);
		\draw (181.center) to (183.center);
		\draw (184.center) to (186.center);
		\draw (185.center) to (187.center);
		\draw (197) to (195);
		\draw [style=simple] (277) to (276);
		\draw (283.center) to (282.center);
		\draw (284.center) to (285.center);
		\draw (286.center) to (159.center);
		\draw (165.center) to (287.center);
		\draw (291.center)
			 to (290.center)
			 to (293.center)
			 to (292.center)
			 to cycle;
		\draw (295.center) to (289.center);
		\draw (288.center) to (294.center);
		\draw (301.center) to (296.center);
		\draw (302.center) to (297.center);
		\draw (305.center) to (307.center);
		\draw (306.center) to (308.center);
		\draw (309.center) to (311.center);
		\draw (310.center) to (312.center);
	\end{pgfonlayer}
\end{tikzpicture}

%% file: appendix.tex
\section{Qutrit ZX-calculus Rules}
\begin{figure}[h]
    \centering
    \includegraphics[width=0.8\linewidth]{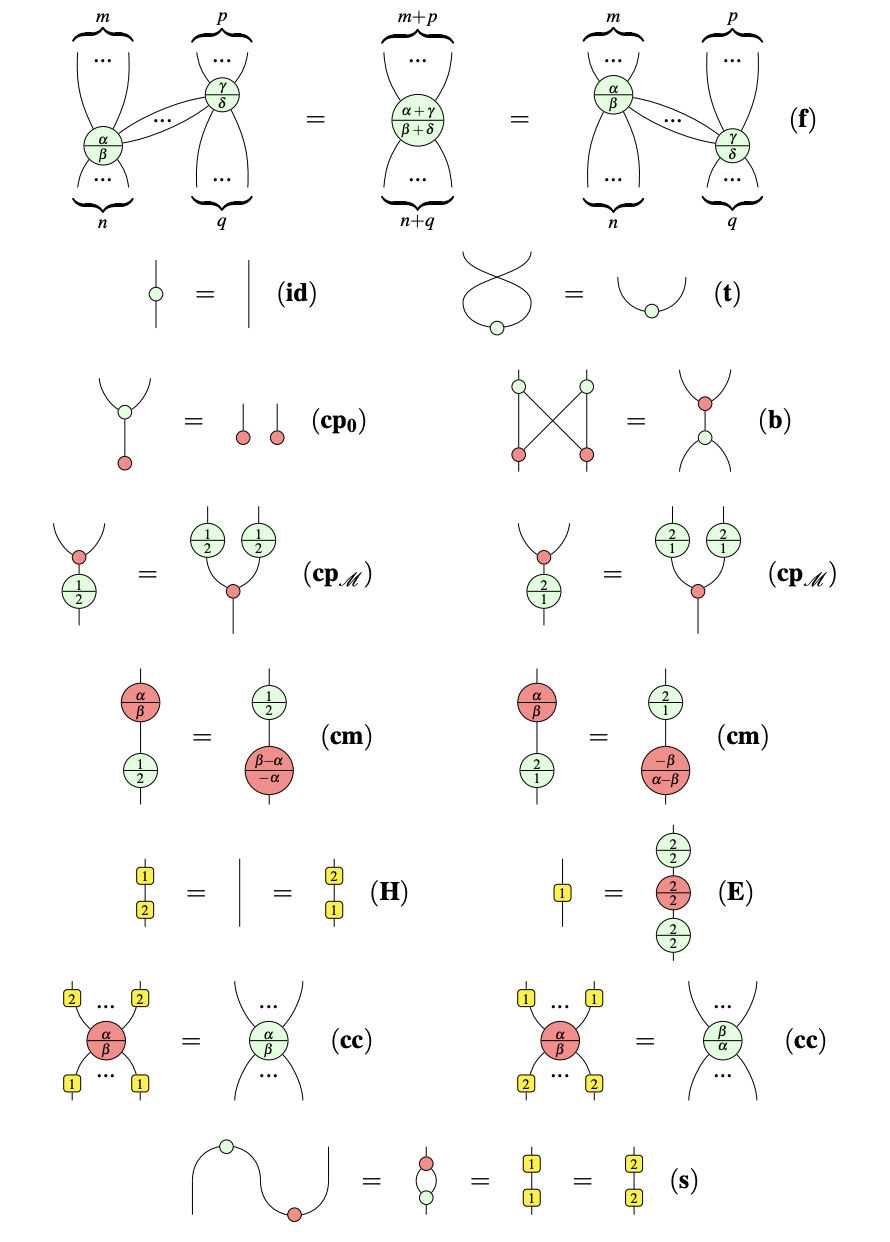}
    \caption{Rewrite rules for the qutrit ZX-calculus, reprinted from Figure 1 in~\cite{townsend-teague_simplification_2022}}
    \label{fig:townsendteaguequtritrules}
\end{figure}

\subsection{Pushing through the encoder} \label{sec:pte}
``Pushing through the encoder'' refers to the idea that, for any encoder $E$ for a CSS code (which is a phase-free diagram that is an isometry), any diagram $\mathcal{D}$ on the left of $E$ (which is the logical operator) has a corresponding diagram $\mathcal{D}_P$ on the right of $E$ (which is a physical implementation of $\mathcal{D}$). This was developed for qubit CSS codes in \cite{HuangJ2023zxcss}. We generalize this idea to qutrits as follows:
\begin{lemma}[``Pushing through the encoder'' (PTE) for qutrits \label{lemma:pte}]
    Let $E$ be the encoder of a qutrit CSS code. For any ZX diagram $\mathcal{D}_L$ on the left of $E$, it has a corresponding diagram $\mathcal{D}_P$ on the right of $E$ such that $E\mathcal{D}_L = \mathcal{D}_PE$.
\end{lemma}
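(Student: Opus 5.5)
The approach is to exploit that the encoder $E$ of a qutrit CSS code is an isometry, $E^\dagger E = I$. This gives a canonical candidate $\mathcal{D}_P$, and a refinement shows that it is again a ZX diagram.

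\textbf{Step 1: a candidate for $\mathcal{D}_P$.} Set
\[
\mathcal{D}_P := E\,\mathcal{D}_L\,E^\dagger + (I - EE^\dagger).
\]
Since $E^\dagger E = I$, we get $\mathcal{D}_P E = E\mathcal{D}_L E^\dagger E + E - EE^\dagger E = E\mathcal{D}_L$, which is the required identity.

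\textbf{Step 2: make $\mathcal{D}_P$ a ZX diagram.} The candidate above is a sum of diagrams, which is not what the lemma means. Here $EE^\dagger$ is the projector onto the code space, and $EE^\dagger = \frac{1}{|\mathcal{S}|}\sum_{s\in\mathcal{S}} s$ is an average over the stabilizer group $\mathcal{S}$. So the real content is that $\mathcal{D}_L$ can be generated by pieces that each push through individually.

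I would use universality of the qutrit ZX-calculus. Every $\mathcal{D}_L$ on the logical wires can be written as a composition, and tensor products with identity wires, of single-logical-wire generators: Z spiders with phases, X spiders with phases, Hadamard boxes, and multi-legged spiders that connect logical wires with each other or with extra ancilla wires.

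It suffices to push each generator through, since pushing composes: if $E A = A_P E$ and $E B = B_P E$, then $E A B = A_P B_P E$.

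\textbf{Step 3: handle spiders using the normal forms.} Use the Z-X normal form of $E$. There, logical wire $i$ enters a Z spider whose legs connect to X spiders on the physical wires; the pattern is given by the row $\overline{X}_i$, with exponents realized by multi-edges or dualizers.

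A Z spider with phase $(a,b)$ on logical wire $i$ fuses into that Z spider, by spider fusion (f). It then appears on the physical side as a Z-phase spider attached to the same X-spider network, i.e.\ a phase gadget on the support of $\overline{X}_i$, or dually on the support of $\overline{Z}_i$. Extra legs to ancillas attach the same way. This is the qutrit analogue of the argument in \cite{HuangJ2023zxcss}.

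For X spiders, I would switch to the X-Z normal form of the encoder and argue symmetrically. For Hadamard boxes, I would rewrite $H$ via its Euler decomposition into Z and X phase spiders, using the qutrit rules of Figure~\ref{fig:townsendteaguequtritrules}.

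\textbf{Main obstacle.} I expect the hardest part to be the qutrit non-flex-symmetry. The multi-edges carrying exponent $2$ (e.g.\ $X_3^2$) appear as dualizers $D$, and these must be tracked carefully when fusing the logical spider into the encoder's spiders. Commuting a Z-phase spider across a dualizer conjugates the phase: $(a,b)\mapsto(b,a)$.

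I would try first to show, using the bialgebra and dualizer rules, that a dualizer-decorated edge from a phase gadget to a physical wire is equivalent to a dualizer on that physical wire. The resulting $\mathcal{D}_P$ then remains a well-defined ZX diagram, with phases $(b,a)$ on the physical wires where the logical operator has exponent $2$.
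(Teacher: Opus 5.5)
\textbf{Genuine gap: Step~3 uses the wrong normal form and the wrong rule, so nothing is actually pushed across $E$.}

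First, a note on Step~1. If you drop the $(I-EE^\dagger)$ term, Step~1 already proves the literal statement: $E\mathcal{D}_LE^\dagger$ is a single ZX diagram, and $E^\dagger E=I$ gives $E\mathcal{D}_LE^\dagger E=E\mathcal{D}_L$. However, that $\mathcal{D}_P$ still contains the encoder. The paper uses the lemma in a stronger form: $\mathcal{D}_P$ is built directly on the physical wires from the logical operators. That is how it concludes that the subspace-projection gadget sits on qutrits 3 and 6, because $\overline{Z}_2=Z_3Z_6^2$. So the real content lies in your Steps~2--3. (Your composition rule in Step~2 also only covers endomorphisms of the logical wires. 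Spiders with legs to the rest of $\mathcal{D}_L$ need a leg-by-leg formulation anyway.)

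In Step~3 the mechanism fails. Fusing a Z spider on logical wire $i$ into the Z spider of the Z-X normal form only moves its phase or extra leg onto an \emph{internal} spider of $E$. The result is not of the form $\mathcal{D}_PE$, and fusion gives you no way to move it past the X spiders on the physical wires. Your description of the outcome as a phase gadget on the support of $\overline{X}_i$ is also wrong: a logical Z-type operation must land on the support of $\overline{Z}_i$.

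The missing idea is strong complementarity (bialgebra) against the \emph{opposite}-coloured normal form, which is what the paper's proof uses; you have the two normal forms swapped.
\begin{itemize}
\item Unfuse so that each logical Z spider has a single external leg.
\item Use the X-Z normal form, where logical wire $i$ enters an X spider wired to the physical Z spiders on $\mathrm{supp}(\overline{Z}_i)$.
\item Apply the bialgebra rule to the logical Z spider and this X spider. This copies the external leg onto a new X spider wired to the same physical Z spiders, while reconstituting the encoder's X spider on the logical wire.
\item Unfuse those physical Z spiders. This leaves $E$ intact, with a Z-type gadget on $\mathrm{supp}(\overline{Z}_i)$ on its right, attached to the rest of $\mathcal{D}_L$.
\end{itemize}
X spiders are handled dually, using the Z-X normal form.

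Finally, the dualizers you flag as the main obstacle are not one. $D$ on one leg of a phase-free Z spider equals $D$ on all its other legs. So after the bialgebra step, the new hub simply inherits the single/double edge pattern of $\overline{Z}_i$, which is exactly the paper's remark that connectivity degree is preserved.
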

\begin{proof}
Similar to the proof for Proposition 3.1 of \cite{HuangJ2023zxcss}, any Z or X spiders in $\mathcal{D}_L$ on the logical wires that has phases or multiple legs can be unfused so that each logical Z or X spider on the logical qutrit wire has just one external wire. Using strong complementarity with the corresponding choice of diagram of the encoder (for a Z spider, use X-Z normal form, and for X spider, use Z-normal form), one can ``push'' this supporting leg of the diagram to the right side. 
For any diagram $\mathcal{D}$ supported on an arbitrary number of logical wires,
    \begin{equation}
    \tikzfigscale{0.7}{PTEgeneral1} = \tikzfigscale{0.7}{PTEgeneral2}
\end{equation}
Note that for qutrits, there may be double wires (which can be represented as a single wire with a ``D'' dualizer box) in encoders connecting Z and X spiders. The connectivity degree (single or double wires) for each supporting logical qubit to $\mathcal{D}$ is preserved for the corresponding X spider for that logical on the right.
\end{proof}

\section{Deconstruction of the Qubit \texorpdfstring{$\llbracket 8,3,2 \rrbracket$}{[[8,3,2]]} Code}
\label{sec:interesting}

\subsection{Circuit Synthesis and ZX-calculus View}\label{sec:circzx}
Here, we will demonstrate the well-known example of the transversal CCZ gate of the $\llbracket 8,3,2 \rrbracket$ code~\cite{campbell_smallest_2016} (also called the smallest interesting color code, an instance of transversal CCZ gates described in~\cite{kubica2015unfolding}), in the ZX-calculus.

\begin{figure}[H]
    \centering
    \includegraphics[width=0.95\textwidth]{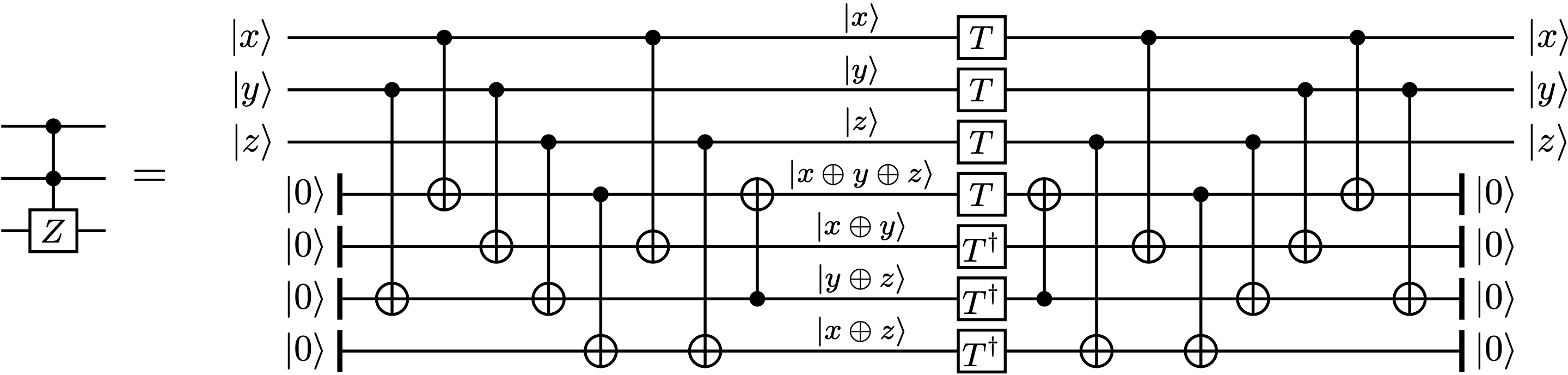}
    \caption{Figure for a T-depth one decomposition of the qubit CCZ gate, reprinted from~\cite[Figure 1]{SelingerP2013tdepthone}.}
\end{figure}
In an observation credited to Campbell~\cite{campbell_smallest_2016}, the above circuit consists of a CNOT circuit unitary embedding $E$, a layer of $T$ and $T^\dagger$ gates, and $E^\dagger$.  This therefore defines a $\llbracket 7,3,1\rrbracket$ code. Being distance 1, it has no error-correcting capabilities.
To improve the code distance, it suffices to add to the encoder $E$ (whose ZX diagram first appeared in~\cite{Garvie_2018}) one physical qubit, and a weight 8 X-type stabilizer generator connected to all 8 physical qubits:
\begin{equation}
    \input{832cubeMODIFIED.tikz}
\end{equation}

\subsection{Verifying Transversal CCZ Preservation}\label{sec:transversalccz}
This construction preserves the transversal CCZ gate, resulting in the $\llbracket 8,3,2 \rrbracket$ code. To show this, we will use the following fact about the phase polynomial representation of the CCZ gate as the map that acts as $\ket{x,y,z} \mapsto (-1)^{x\cdot y\cdot z} \ket{x,y,z}$, where $x,y,z$ are bits denoting computational basis states.
As was shown for qutrits in~\cite{wetering_building_2023}, the qubit CCZ gate is identified by the exponent of $(-1)$ being a multiplicative expression $x\cdot y\cdot z$; this can be replaced by an equivalent expression summing over modular addition.
For this reason, the qubit CCZ gate is equal to the following spider nest of phase gadgets:
\tikzstyle{tikzfig}=[baseline=-0.5em, scale=0.75]
\begin{equation}
    \input{ccz-nest.tikz}
\end{equation}
If instead all phases here are $\pm \frac{\pi}{2}$, then instead of the CCZ gate, this is the identity map on three qubits.
More generally, for $n$ qubits, having one of all possible connectivity phase gadgets of 0-$n$ legs, of phases $\pm \frac{\pi}{2^{n-2}}$ where the sign matches the parity of the number of legs, is equal to the $n$-qubit identity map.

We can substitute the encoder $E$ of the $\llbracket 8,3,2 \rrbracket$ code from~\cite{Garvie_2018}, to verify that $L = E; P; E^\dagger$ where $L$ is the CCZ gate, and $P = T T^\dagger T^\dagger T T^\dagger T T T^\dagger$.
\begin{equation}
    \tikzfigscale{0.6}{ccz-nest-2}
\end{equation}
Continuing from the last step, we now utilize the above observation to recognize that what we have is the spider nest of all 0-4 legged $\mp \frac{\pi}{4}$ phase gadgets, conjugated by all 0-3 legged $\pm \frac{\pi}{4}$ phase gadgets on the upper three qubits.  Therefore, we can then apply all 0-4 legged $\mp \frac{\pi}{4}$ phase gadgets, which equals the identity.  The end result is precisely the CCZ gate.
\begin{equation}
    \tikzfigscale{0.8}{ccz-nest-3}
\end{equation}

\input{msd.tex}

\input{qubitemulation.tex}

%% file: figures/832cubeMODIFIED.tikz
\begin{tikzpicture}
	\begin{pgfonlayer}{nodelayer}
		\node [style=none] (63) at (-1.5, 0) {$=$};
		\node [style=gn] (65) at (-3, 9) {};
		\node [style=gn] (66) at (-4.5, 9.5) {};
		\node [style=gn] (67) at (-6, 10) {};
		\node [style=rn] (69) at (-2.5, 5) {};
		\node [style=rn] (70) at (-2.5, 5.5) {};
		\node [style=rn] (71) at (-2.5, 6) {};
		\node [style=rn] (72) at (-2.5, 6.5) {};
		\node [style=rn] (77) at (-2.5, 7) {};
		\node [style=rn] (78) at (-2.5, 7.5) {};
		\node [style=rn] (79) at (-2.5, 8) {};
		\node [style=rn] (80) at (-2.5, 8.5) {};
		\node [style=none] (85) at (-6.5, 9) {};
		\node [style=none] (86) at (-6.5, 9.5) {};
		\node [style=none] (87) at (-6.5, 10) {};
		\node [style=gn] (120) at (11, 9) {};
		\node [style=gn] (121) at (9.5, 9.5) {};
		\node [style=gn] (122) at (8, 10) {};
		\node [style=gn] (123) at (7.75, 6.75) {};
		\node [style=rn] (124) at (11.5, 5) {};
		\node [style=rn] (125) at (11.5, 5.5) {};
		\node [style=rn] (126) at (11.5, 6) {};
		\node [style=rn] (127) at (11.5, 6.5) {};
		\node [style=none] (128) at (12, 6.5) {};
		\node [style=none] (129) at (12, 6) {};
		\node [style=none] (130) at (12, 5.5) {};
		\node [style=none] (131) at (12, 5) {};
		\node [style=rn] (132) at (11.5, 7) {};
		\node [style=rn] (133) at (11.5, 7.5) {};
		\node [style=rn] (134) at (11.5, 8) {};
		\node [style=rn] (135) at (11.5, 8.5) {};
		\node [style=none] (136) at (12, 8.5) {};
		\node [style=none] (137) at (12, 8) {};
		\node [style=none] (138) at (12, 7.5) {};
		\node [style=none] (139) at (12, 7) {};
		\node [style=none] (140) at (7.5, 9) {};
		\node [style=none] (141) at (7.5, 9.5) {};
		\node [style=none] (142) at (7.5, 10) {};
		\node [style=none] (143) at (12.25, 8.5) {\tiny $1$};
		\node [style=none] (144) at (12.25, 8) {\tiny $2$};
		\node [style=none] (145) at (12.25, 7.5) {\tiny $3$};
		\node [style=none] (146) at (12.25, 7) {\tiny $4$};
		\node [style=none] (147) at (12.25, 6.5) {\tiny $5$};
		\node [style=none] (148) at (12.25, 6) {\tiny $6$};
		\node [style=none] (149) at (12.25, 5.5) {\tiny $7$};
		\node [style=none] (150) at (12.25, 5) {\tiny $8$};
		\node [style=gn] (154) at (0.5, 6.75) {};
		\node [style=rn] (155) at (4.25, 5) {};
		\node [style=rn] (156) at (4.25, 5.5) {};
		\node [style=rn] (157) at (4.25, 6) {};
		\node [style=rn] (158) at (4.25, 6.5) {};
		\node [style=none] (159) at (4.75, 6.5) {};
		\node [style=none] (160) at (4.75, 6) {};
		\node [style=none] (161) at (4.75, 5.5) {};
		\node [style=none] (162) at (4.75, 5) {};
		\node [style=rn] (163) at (4.25, 7) {};
		\node [style=rn] (164) at (4.25, 7.5) {};
		\node [style=rn] (165) at (4.25, 8) {};
		\node [style=rn] (166) at (4.25, 8.5) {};
		\node [style=none] (167) at (4.75, 8.5) {};
		\node [style=none] (168) at (4.75, 8) {};
		\node [style=none] (169) at (4.75, 7.5) {};
		\node [style=none] (170) at (4.75, 7) {};
		\node [style=none] (174) at (5, 8.5) {\tiny $1$};
		\node [style=none] (175) at (5, 8) {\tiny $2$};
		\node [style=none] (176) at (5, 7.5) {\tiny $3$};
		\node [style=none] (177) at (5, 7) {\tiny $4$};
		\node [style=none] (178) at (5, 6.5) {\tiny $5$};
		\node [style=none] (179) at (5, 6) {\tiny $6$};
		\node [style=none] (180) at (5, 5.5) {\tiny $7$};
		\node [style=none] (181) at (5, 5) {\tiny $8$};
		\node [style=none] (182) at (6.25, 7) {$=$};
		\node [style=rn] (183) at (0.375, 0.5) {};
		\node [style=rn] (184) at (2.375, 0.5) {};
		\node [style=rn] (185) at (0.375, -1.25) {};
		\node [style=rn] (186) at (2.375, -1.25) {};
		\node [style=rn] (187) at (0.875, 1.5) {};
		\node [style=rn] (188) at (2.875, 1.5) {};
		\node [style=rn] (189) at (0.875, -0.25) {};
		\node [style=rn] (190) at (2.875, -0.25) {};
		\node [style=gn] (191) at (1.625, 0) {};
		\node [style=gn] (192) at (0.625, 2.5) {};
		\node [style=gn] (193) at (4.375, -0.25) {};
		\node [style=gn] (194) at (-0.125, -2.25) {};
		\node [style=none] (195) at (0.25, 2.875) {};
		\node [style=none] (196) at (4.875, -0.25) {};
		\node [style=none] (197) at (-0.5, -2.625) {};
		\node [style=none] (198) at (1.125, 1.65) {};
		\node [style=none] (199) at (0.625, 0.65) {};
		\node [style=none] (200) at (1.125, -0.1) {};
		\node [style=none] (201) at (0.625, -1.1) {};
		\node [style=none] (202) at (2.625, -1.1) {};
		\node [style=none] (203) at (3.125, -0.1) {};
		\node [style=none] (204) at (2.625, 0.65) {};
		\node [style=none] (205) at (3.125, 1.65) {};
		\node [style=rn] (206) at (8, 1.25) {};
		\node [style=rn] (207) at (12, 1.25) {};
		\node [style=rn] (208) at (9, 3.1) {};
		\node [style=rn] (209) at (13, 3.1) {};
		\node [style=gn] (210) at (10.5, 2.175) {};
		\node [style=rn] (211) at (8, -2.25) {};
		\node [style=rn] (212) at (12, -2.25) {};
		\node [style=rn] (213) at (9, -0.4) {};
		\node [style=rn] (214) at (13, -0.4) {};
		\node [style=gn] (215) at (12.5, 0.425) {};
		\node [style=gn] (216) at (10, -0.5) {};
		\node [style=gn] (217) at (10.5, 0.425) {};
		\node [style=none] (218) at (9.375, 3.175) {};
		\node [style=none] (219) at (13.375, 3.175) {};
		\node [style=none] (220) at (12.375, 1.325) {};
		\node [style=none] (221) at (8.375, 1.325) {};
		\node [style=none] (222) at (9.375, -0.325) {};
		\node [style=none] (223) at (13.375, -0.325) {};
		\node [style=none] (224) at (12.375, -2.175) {};
		\node [style=none] (225) at (8.375, -2.175) {};
		\node [style=none] (226) at (10.625, 1.75) {};
		\node [style=gn] (227) at (12.5, 0.425) {};
		\node [style=none] (228) at (12.625, 0) {};
		\node [style=none] (229) at (10.125, -0.925) {};
		\node [style=none] (230) at (6.125, 0) {$=$};
	\end{pgfonlayer}
	\begin{pgfonlayer}{edgelayer}
		\draw [in=150, out=-45, looseness=0.75] (65) to (80);
		\draw [in=135, out=-60, looseness=0.75] (65) to (79);
		\draw [in=135, out=-75, looseness=0.75] (65) to (78);
		\draw [in=-90, out=135, looseness=0.75] (77) to (65);
		\draw [in=165, out=-30, looseness=0.50] (66) to (80);
		\draw [in=150, out=-45, looseness=0.50] (66) to (79);
		\draw [in=150, out=-60, looseness=0.25] (66) to (72);
		\draw [in=150, out=-75, looseness=0.25] (66) to (71);
		\draw [in=180, out=-30, looseness=0.50] (67) to (80);
		\draw [in=165, out=-45, looseness=0.25] (67) to (78);
		\draw [in=165, out=-60, looseness=0.25] (67) to (72);
		\draw [in=165, out=-75, looseness=0.25] (67) to (70);
		\draw (87.center) to (67);
		\draw (86.center) to (66);
		\draw (65) to (85.center);
		\draw [in=150, out=-45, looseness=0.75] (120) to (135);
		\draw [in=135, out=-60, looseness=0.75] (120) to (134);
		\draw [in=135, out=-75, looseness=0.75] (120) to (133);
		\draw [in=-90, out=135, looseness=0.75] (132) to (120);
		\draw [in=165, out=-30, looseness=0.50] (121) to (135);
		\draw [in=150, out=-45, looseness=0.50] (121) to (134);
		\draw [in=150, out=-60, looseness=0.25] (121) to (127);
		\draw [in=150, out=-75, looseness=0.25] (121) to (126);
		\draw [in=180, out=-30, looseness=0.50] (122) to (135);
		\draw [in=165, out=-45, looseness=0.25] (122) to (133);
		\draw [in=165, out=-60, looseness=0.25] (122) to (127);
		\draw [in=165, out=-75, looseness=0.25] (122) to (125);
		\draw (142.center) to (122);
		\draw (141.center) to (121);
		\draw (120) to (140.center);
		\draw (135) to (136.center);
		\draw (137.center) to (134);
		\draw (133) to (138.center);
		\draw (139.center) to (132);
		\draw (127) to (128.center);
		\draw (129.center) to (126);
		\draw (125) to (130.center);
		\draw (131.center) to (124);
		\draw [in=90, out=-180, looseness=0.75] (135) to (123);
		\draw [in=-180, out=-90, looseness=0.75] (123) to (124);
		\draw [in=-75, out=180, looseness=0.50] (125) to (123);
		\draw [in=180, out=-45, looseness=0.50] (123) to (126);
		\draw [in=-15, out=-180, looseness=0.25] (127) to (123);
		\draw [in=180, out=15, looseness=0.25] (123) to (132);
		\draw [in=45, out=-180, looseness=0.50] (133) to (123);
		\draw [in=75, out=180, looseness=0.50] (134) to (123);
		\draw (166) to (167.center);
		\draw (168.center) to (165);
		\draw (164) to (169.center);
		\draw (170.center) to (163);
		\draw (158) to (159.center);
		\draw (160.center) to (157);
		\draw (156) to (161.center);
		\draw (162.center) to (155);
		\draw [in=90, out=-180, looseness=0.75] (166) to (154);
		\draw [in=-180, out=-90, looseness=0.75] (154) to (155);
		\draw [in=-75, out=180, looseness=0.50] (156) to (154);
		\draw [in=180, out=-45, looseness=0.50] (154) to (157);
		\draw [in=-15, out=-180, looseness=0.25] (158) to (154);
		\draw [in=180, out=15, looseness=0.25] (154) to (163);
		\draw [in=45, out=-180, looseness=0.50] (164) to (154);
		\draw [in=75, out=180, looseness=0.50] (165) to (154);
		\draw (80) to (166);
		\draw (79) to (165);
		\draw (78) to (164);
		\draw (77) to (163);
		\draw (72) to (158);
		\draw (71) to (157);
		\draw (70) to (156);
		\draw (69) to (155);
		\draw [style=geo] (189) to (187);
		\draw [style=geo] (187) to (188);
		\draw [style=geo] (188) to (190);
		\draw [style=geo] (190) to (189);
		\draw (191) to (187);
		\draw (191) to (183);
		\draw (189) to (191);
		\draw (191) to (185);
		\draw (191) to (186);
		\draw (190) to (191);
		\draw (191) to (184);
		\draw (191) to (188);
		\draw [in=75, out=-135] (185) to (194);
		\draw [in=-150, out=0] (194) to (186);
		\draw [in=-105, out=105, looseness=0.75] (194) to (183);
		\draw [in=-45, out=105, looseness=0.75] (193) to (188);
		\draw [in=345, out=150] (193) to (184);
		\draw [in=15, out=-135] (193) to (186);
		\draw [in=-165, out=-15] (190) to (193);
		\draw [in=120, out=-90] (192) to (187);
		\draw [in=-120, out=105, looseness=0.75] (183) to (192);
		\draw [in=0, out=150, looseness=0.75] (188) to (192);
		\draw [style=geo] (183) to (187);
		\draw [style=geo] (183) to (185);
		\draw [style=geo] (185) to (186);
		\draw [style=geo] (186) to (184);
		\draw [style=geo] (190) to (186);
		\draw [style=geo] (189) to (185);
		\draw [style=geo] (188) to (184);
		\draw [style=none] (195.center) to (192);
		\draw [style=none] (194) to (197.center);
		\draw [style=none] (193) to (196.center);
		\draw (198.center) to (187);
		\draw (199.center) to (183);
		\draw (200.center) to (189);
		\draw (201.center) to (185);
		\draw (202.center) to (186);
		\draw (203.center) to (190);
		\draw (204.center) to (184);
		\draw (205.center) to (188);
		\draw [style=geo] (184) to (183);
		\draw [in=-120, out=30] (194) to (184);
		\draw [in=120, out=-30] (192) to (184);
		\draw [style=geo] (208) to (209);
		\draw [style=geo] (206) to (208);
		\draw [style=geo] (209) to (207);
		\draw (208) to (210);
		\draw (210) to (207);
		\draw (209) to (210);
		\draw [style=geo] (213) to (214);
		\draw [style=geo] (212) to (211);
		\draw [style=geo] (211) to (213);
		\draw [style=geo] (214) to (212);
		\draw [style=geo] (206) to (211);
		\draw [style=geo] (213) to (208);
		\draw [style=geo] (209) to (214);
		\draw (207) to (215);
		\draw (215) to (214);
		\draw (209) to (215);
		\draw (215) to (212);
		\draw (206) to (216);
		\draw (216) to (212);
		\draw (207) to (216);
		\draw (216) to (211);
		\draw (206) to (217);
		\draw (217) to (214);
		\draw (208) to (217);
		\draw (217) to (212);
		\draw (209) to (217);
		\draw (217) to (211);
		\draw (218.center) to (208);
		\draw (219.center) to (209);
		\draw (223.center) to (214);
		\draw (220.center) to (207);
		\draw (221.center) to (206);
		\draw (222.center) to (213);
		\draw (225.center) to (211);
		\draw (224.center) to (212);
		\draw (226.center) to (210);
		\draw (216) to (229.center);
		\draw (228.center) to (227);
		\draw (217) to (213);
		\draw (217) to (207);
		\draw (210) to (206);
		\draw [style=geo] (207) to (206);
		\draw [style=geo] (207) to (212);
	\end{pgfonlayer}
\end{tikzpicture}

%% file: figures/ccz-nest.tikz
\begin{tikzpicture}
	\begin{pgfonlayer}{nodelayer}
		\node [style={gn_phase}] (0) at (2.25, 1) {$\frac{\pi}{4}$};
		\node [style={gn_phase}] (1) at (2.25, 0) {$\frac{\pi}{4}$};
		\node [style={gn_phase}] (2) at (2.25, -1) {$\frac{\pi}{4}$};
		\node [style=rn] (3) at (2.75, 0.5) {};
		\node [style=rn] (4) at (2.75, -0.5) {};
		\node [style=rn] (5) at (1.75, 0.5) {};
		\node [style=rn] (6) at (1.75, -0.5) {};
		\node [style=none] (7) at (1, 1) {};
		\node [style=none] (8) at (3.5, 1) {};
		\node [style=none] (9) at (1, 0) {};
		\node [style=none] (10) at (3.5, 0) {};
		\node [style=none] (11) at (1, -1) {};
		\node [style=none] (12) at (3.5, -1) {};
		\node [style={gn_phase}] (13) at (1.25, 0.5) {$\!\texttt{-}\frac{\pi}{4}\!$};
		\node [style={gn_phase}] (14) at (1.25, -0.5) {$\!\texttt{-}\frac{\pi}{4}\!$};
		\node [style={gn_phase}] (15) at (3.25, 0.5) {$\!\texttt{-}\frac{\pi}{4}\!$};
		\node [style={gn_phase}] (16) at (3.25, -0.5) {$\frac{\pi}{4}$};
		\node [style=none] (17) at (-2.5, 1) {};
		\node [style=none] (18) at (-1, 1) {};
		\node [style=none] (19) at (-2.5, 0) {};
		\node [style=none] (20) at (-1, 0) {};
		\node [style=none] (21) at (-2.5, -1) {};
		\node [style=none] (22) at (-1, -1) {};
		\node [style=none] (23) at (0, 0) {$=$};
		\node [style=gn] (24) at (-2, 1) {};
		\node [style=gn] (25) at (-2, 0) {};
		\node [style=gn] (26) at (-2, -1) {};
		\node [style=small hadamard] (27) at (-1.5, -0.5) {\small $\phantom{,}$};
	\end{pgfonlayer}
	\begin{pgfonlayer}{edgelayer}
		\draw (7.center) to (0);
		\draw (0) to (8.center);
		\draw (0) to (5);
		\draw (5) to (1);
		\draw (1) to (9.center);
		\draw (1) to (10.center);
		\draw (1) to (6);
		\draw (6) to (2);
		\draw (2) to (11.center);
		\draw (2) to (12.center);
		\draw (2) to (4);
		\draw (4) to (1);
		\draw (4) to (0);
		\draw (3) to (2);
		\draw (3) to (0);
		\draw (13) to (5);
		\draw (14) to (6);
		\draw (4) to (16);
		\draw (15) to (3);
		\draw (21.center) to (22.center);
		\draw (20.center) to (19.center);
		\draw (17.center) to (18.center);
		\draw (24) to (27);
		\draw (27) to (25);
		\draw (27) to (26);
	\end{pgfonlayer}
\end{tikzpicture}

%% file: msd.tex
\section{Proofs for Magic State Distillation and Injection}\label{sec:msdi}
\subsection{CCZ Magic State Distillation}\label{sec:cczmsd}
Here, we give a graphical portrayal of magic state distillation.
We have just derived that the logical CCZ gate in the $\llbracket 8,3,2 \rrbracket$ code is physically implemented by 8 $T$ and $T^\dagger$ gates:
\begin{equation}
    \tikzfigscale{1}{distillation-1}
\end{equation}
Taking the conjugate transpose of the above equation and inputting $\ket{+}^{\otimes 8}$, the ideal decoder acts on 8 $T$ and $T^\dagger$ magic states as:
\begin{equation}
    \tikzfigscale{1}{distillation-2}
\end{equation}
Substituting in the Z-X GPF of the encoder, the $\ket{+}$ states copy and fuse, leaving just the three-legged H-box.
\begin{equation}
    \tikzfigscale{0.8}{distillation-3}
\end{equation}
In other words, the decoding of 8 $T$ and $T^\dagger$ magic states results in a less noisy CCZ magic state.

Were it distilling the $\ket{0}$-controlled Z magic state from the inner CSS code of our qutrit $\llbracket 6,2,2 \rrbracket$ code, the proof steps are identical to the above.
Indeed, this proof applies for distilling magic states of any transversal diagonal gate in a CSS code whose X and Z logical operators are realized by physical X's and Z's respectively, i.e. whose encoder is a phase-free isometry in the ZX-calculus.

\tikzstyle{tikzfig}=[baseline=-0.25em, scale=0.5]
\subsection{CCZ Magic State Injection for Prime-Dimensional Qudits}
\begin{proposition}
    As is the case for qubits, for any odd prime qudit dimension, the Toffoli and CCZ gates are in the third level of the Clifford hierarchy.
\end{proposition}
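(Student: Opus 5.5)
We prove directly that for any odd prime $d$, conjugating each generator of the $n$-qudit Pauli group by $\mathrm{CCZ}$ (respectively Toffoli) yields a Clifford operator. The Toffoli case then follows from the CCZ case: in dimension $d$, $\mathrm{CCX}=(I\otimes I\otimes H)\,\mathrm{CCZ}\,(I\otimes I\otimes H^\dagger)$ up to the appropriate choice of $H$ versus $H^\dagger$, and the third level of the hierarchy is closed under conjugation by Cliffords. We take $\mathrm{CCZ}\ket{x,y,z}=\omega^{xyz}\ket{x,y,z}$ with $x,y,z\in\mathbb{Z}_d$. This is the natural $\Lambda$-controlled generalization, whose restriction to the qubit subspace reproduces the binary sign pattern only up to emulation, but whose hierarchy level is what we need.

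\textbf{Step 1: $Z$-type Paulis.} Every $Z_j$ commutes with the diagonal gate $\mathrm{CCZ}$, so conjugation fixes them, and these are trivially Clifford.

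\textbf{Step 2: $X$-type Paulis.} Compute on basis states:
\[
\mathrm{CCZ}\,X_1\,\mathrm{CCZ}^\dagger\ket{x,y,z}=\omega^{(x+1)yz-xyz}\ket{x+1,y,z}=\omega^{yz}\ket{x+1,y,z}.
\]
Hence $\mathrm{CCZ}\,X_1\,\mathrm{CCZ}^\dagger=\mathrm{CZ}_{23}\,X_1$, where $\mathrm{CZ}\ket{y,z}=\omega^{yz}\ket{y,z}$. This $\mathrm{CZ}$ is the qudit Clifford obtained from $CX$ by conjugating the target with $H$. By the symmetry of $xyz$ in its three variables, the same holds for $X_2$ and $X_3$. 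Since the product of two Cliffords is Clifford, every conjugated generator is Clifford. Conjugation is a group homomorphism, so the entire Pauli group is mapped into the Clifford group, which places $\mathrm{CCZ}$ in the third level.

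\textbf{Step 3: exact phases and non-Cliffordness.} To confirm $\mathrm{CCZ}$ lies exactly in the third level rather than the second, we check that $\mathrm{CZ}_{23}$ is not Pauli. Conjugating it by $X_2$ gives $\mathrm{CZ}_{23}\,X_2\,\mathrm{CZ}_{23}^\dagger=Z_3X_2$ up to a phase, which is not a scalar multiple of $X_2$. Hence $\mathrm{CZ}_{23}$ is not Pauli and $\mathrm{CCZ}$ is not Clifford.

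The only place where oddness and primality of $d$ enter is the claim that $\mathrm{CZ}$ (equivalently $CX$, $H$, $S$) generates the Clifford group. We assume this from the earlier statement that for any prime qudit dimension, $CX$, $S$ and $H$ generate the Clifford group.

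The main subtlety is Step 2's phase bookkeeping in the non-symmetric conventions. $\mathrm{CCZ}^\dagger$ carries $\omega^{-xyz}$, and for $X^\dagger=X^{d-1}$ one obtains $\mathrm{CZ}^{-1}$ instead of $\mathrm{CZ}$. One must check that these stay inside the Clifford group, which they do, since $\mathrm{CZ}^{d-1}$ is a power of a Clifford.
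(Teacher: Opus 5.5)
Your proof is correct and follows the paper's argument: Pauli $Z$ commutes with the diagonal CCZ, and conjugating Pauli $X$ by CCZ leaves a Clifford $\mathrm{CZ}$ correction. You verify this by a basis-state phase computation where the paper uses a ZX rewrite, and you also make the Toffoli case explicit via Hadamard conjugation and closure of the third level under Clifford conjugation, which the paper leaves implicit. One small correction is that oddness and primality of $d$ are not actually used: your Step~3 computation, together with $\mathrm{CZ}$ commuting with all $Z$'s, already shows directly that $\mathrm{CZ}$ normalizes the Pauli group in any dimension, so you do not need the theorem that $CX$, $S$, $H$ generate the Clifford group.
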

\begin{proof}
  Noting the ZX diagram interpretations of the Z-basis states:
  \begin{equation}
    \tikzfigscale{1}{Kj-definition}
  \end{equation}
  Pauli Z directly commutes through the CCZ gate because both are diagonal in the Z basis.  We can commute Pauli X through the CCZ gate as follows:
  \begin{equation}
      \tikzfigscale{1}{ccz-x-push}
  \end{equation}
\end{proof}

As it is in the third level of the Clifford hierarchy, the Toffoli and CCZ gates admit magic state injection for any prime qudit dimension.
Moreover, both these gates are semi-Clifford~\cite{deSilvaN2021quditgatetp}:
\begin{definition}
    A gate $G$ is \emph{semi-Clifford} if it admits a decomposition $G = C_2 D C_1$, as a diagonal gate in some level of the Clifford hierarchy $D$, pre- and post-composed by Clifford gates $C_1$ and $C_2$ respectively.
\end{definition}
Therefore, a more efficient (halving the number of ancillae qudits) gate teleportation can be utilized~\cite{deSilvaN2021quditgatetp}:
\begin{proposition}
  For any prime qudit dimension, the CCZ gate is deterministically implementable by one round of the below magic state injection protocol, with one copy of the CCZ magic state.
\end{proposition}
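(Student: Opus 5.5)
We follow the semi-Clifford gate teleportation of~\cite{deSilvaN2021quditgatetp}. Write CCZ on $d$-level qudits as $\ket{x,y,z}\mapsto \omega^{xyz}\ket{x,y,z}$ with $\omega=e^{2\pi i/d}$. Because CCZ is diagonal, it suffices to teleport through the Z basis. The protocol is as follows. Prepare the magic state $\ket{\mathrm{CCZ}}=\mathrm{CCZ}\ket{+}^{\otimes 3}$ on three ancillae. Apply a $\Lambda$-controlled CX from each data qudit $i$ to the corresponding ancilla $i$, arranged so that the ancilla register picks up $x_i - a_i$. Measure the ancillae in the Z basis with outcomes $a,b,c\in\mathbb{Z}_d$. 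Finally apply a Clifford correction $C(a,b,c)$ to the data.

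The first step is to show the uncorrected post-measurement data state. Since CCZ commutes with Z-basis parity maps, we can push the CCZ from the ancillae onto the data through the CX gates. In the ZX picture this is the same spider fusion and copy argument used for the Clifford-hierarchy proposition above. Using the $\ket{j}$ spider interpretations, we aim to show that, up to a global phase, the data receives the diagonal phase
\[
\omega^{(x+a)(y+b)(z+c)}.
\]
This equals the desired $\omega^{xyz}$ times the correction factor
\[
\omega^{ayz+bxz+cxy+abz+acy+bcx+abc}.
\]
The measured shifts are exactly a Pauli X conjugation, so this is precisely the computation $X^{a}\otimes X^{b}\otimes X^{c}$ pushed through CCZ that was already carried out in the previous proposition.

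The second step is to identify the correction. By the third-level property from the previous proposition, the residual factor is a product of two-qudit CZ-type gates $\mathrm{CZ}^{a}$ on qudits $(2,3)$, $\mathrm{CZ}^{b}$ on $(1,3)$ and $\mathrm{CZ}^{c}$ on $(1,2)$, together with single-qudit Pauli $Z$ powers and a global phase. All of these are Clifford. Inverting them therefore gives $C(a,b,c)$, and the protocol yields exactly CCZ for every outcome. This makes it deterministic and uses only one magic state, whereas the Toffoli-style protocol with doubled ancillae needs more.

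The main obstacle is bookkeeping in the qudit setting. We must get the signs and directions right, i.e.\ decide whether each connector is CX or CX$^\dagger$ and whether the correction uses $\mathrm{CZ}^{a}$ or $\mathrm{CZ}^{-a}$, because the non-flex-symmetric qutrit/qudit calculus distinguishes these. We would fix conventions by checking a single basis-state computation, and then verify diagrammatically that the dualizer boxes cancel.
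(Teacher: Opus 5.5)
Your proposal is correct and follows the paper's argument. The paper runs the single-qudit $Z$-type teleportation in parallel on the three wires and commutes CCZ through it, so the measurement byproduct appears as CCZ conjugated by $X^a\otimes X^b\otimes X^c$. The preceding third-level computation then shows that the residual error is a Clifford correction.

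The differences are cosmetic:
\begin{itemize}
\item You carry out the commutation as an explicit phase-polynomial expansion of $\omega^{(x+a)(y+b)(z+c)}$ rather than diagrammatically.
\item Your layout measures the ancillae and leaves the output on the data wires, so your correction is purely diagonal (powers of $\mathrm{CZ}$ and $Z$). The paper's circuit instead yields Pauli $X$ together with CZ corrections.
\end{itemize}

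Your orientation caveat is substantive rather than mere sign bookkeeping. With CX in place of CX$^\dagger$ and the magic state $\mathrm{CCZ}\ket{+}^{\otimes 3}$, the data would pick up $\omega^{-(x-a)(y-b)(z-c)}$, which is CCZ$^{-1}$ up to Clifford. For odd $d$ this cannot be corrected by a Clifford. Your stated formula already fixes the correct choice.
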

\begin{proof}
  The multiqudit case of the teleportation circuit follows from parallelism of the singlequdit case.
  \begin{equation}
    \tikzfigscale{1}{Kj-teleport}
  \end{equation}
  By commuting the intended gate to teleport (the CCZ gate) through the three-qubit teleportation circuit, we get that the CCZ magic state can be teleported by that circuit, with Pauli X and Clifford CZ gate corrections controlled on the measurement outcomes.
  \begin{equation}
      \tikzfigscale{1}{ccz-teleport}
  \end{equation}
\end{proof}

\subsection{\texorpdfstring{$\ket{0}$-}{0-}Controlled Z Injection}

We start by deriving the ZX diagram for the $\ket{0}$-controlled Z gate, by conjugating the $\ket{0}$-controlled $X^\dagger$ from \cite[Equation 5]{Roy_2023}:
\begin{equation}
    \input{ZCZ/ZCZzh.tikz}
\end{equation}

Its magic state is obtained by inputting $\ket{++}$:
\begin{equation}
    \input{ZCZ/ZCZms.tikz}
\end{equation}

The standard injection circuit is below. When the measurement outcome is $\bra{00}$, i.e. both $a$ and $b$ are zero, then we have that this circuit implements the $\ket{0}$-controlled Z gate.
\begin{equation}
    \tikzfigscale{0.73}{ZCZ/ZCZuncorrected}
\end{equation}

To see if this can be deterministically injectable, we will need the following lemma:
\begin{lemma}
For $a,b \in \mathbb{Z}$,
\begin{equation}
    \input{ZCZ/ZCZlem.tikz}
\end{equation}
\end{lemma}

We can thereafter push both X corrections from measurement through the magic state, to obtain a diagonal Clifford correction.
\begin{equation}
    \input{ZCZ/ZCZlvl3.tikz}
\end{equation}
Pauli Z commutes with the $\ket{0}$-controlled gate because they're both gates diagonal in the Z-basis. Therefore, this proves that the $\ket{0}$-controlled gate is in the third level of the Clifford hierarchy, because pushing through any Pauli operator results in a correction that is Clifford, i.e. in the second level of the Clifford hierarchy.

We can then append to our earlier circuit the inverse of this error term, as a correction parametrized by the measurement outcome.
\begin{equation}\label{eq:ZCZcorrected}
    \tikzfigscale{0.95}{ZCZ/ZCZcorrected}
\end{equation}
As a result, irrespective of measurement outcome, we have deterministic implementation of the $\ket{0}$-controlled gate by magic state injection.

\subsection{Implementing the AND Gate via Magic State Distillation and Injection}\label{sec:andmsdi}
Up to now, we have been working with the simpler setting---injecting diagonal logical gates distilled in a CSS code whose logical X's and Z's are implemented by physical X's and Z's respectively.
Our goal is now to distill and inject the AND gate into any qutrit CSS code, so long as you have the capability to do fault-tolerant S gates (for instance, transversal in the code or via subsequent injection).
Essentially, we want the protocol to not require the code to have a transversal H gate, i.e. be self-dual.

In the two-qutrit unitary $U$ that fixes our logical operators such that the transversal operation is AND instead of $\ket{0}$-controlled Z, it is the CX gate that causes a problem with injecting the AND magic state directly such that it cannot be corrected as to be deterministic.
Instead, as done for $U^\dagger$ below, we push the CX out to be the outermost gate, in order to inject the remaining gates as part of the magic state. As we are injecting into a CSS code, the CX can be executed transversally conjugating the injection.
\begin{equation}
    \input{UwoCX.tikz}
\end{equation}

By decoding 3 T and 3 T$^\dagger$ states (which can be recursively distilled) in the code without this CX, we get a magic state for the AND gate up to this CX.
\begin{equation}
    \tikzfigscale{0.83}{ANDwoCXmsd}
\end{equation}

We can then perform deterministic injection, through proving this as a modification of our earlier protocol in Equation~\eqref{eq:ZCZcorrected}.
\begin{equation}
    \tikzfigscale{0.77}{ANDwoCXcorrected}
\end{equation}
The resulting correction is a Clifford unitary where all but one gate---the $Z(2b,2b)$ gate---are Pauli or CX gates, and hence transversally implementable on CSS codes~\cite{GottesmanD1997thesis}.
As a single-qutrit diagonal Clifford unitary, the $Z(2b,2b)$ gate can be implemented by the standard magic state injection protocol.

The remaining piece of the puzzle, is that the protocol measures two qutrits in two different bases, X and Z.
Presume we have access to the $\overline{\ket{0+}}$ logical state in the code as a resource.
If there are more than two logical qubits per code block, the resource state should be $\overline{\ket{0+...+}}$ or $\overline{\ket{0...0+}}$, depending on whether you want to measure one logical qubit in the Z or X basis respectively.

Say that we want to measure logical $\overline{\bra{a}\otimes I}$, where $I$ is the identity operation on all logical qutrits except the one we want to measure; without loss of generality, let us say we want to measure the first logical qutrit.
Observe that the $\overline{\ket{0+}} = E\ket{0+}$ state is representable as a phase-free ZX diagram, where the logical qutrit you want to measure is set to $\ket{0}$ and the rest to $\ket{+}$.
\begin{equation}
    \input{zeroplusmagicstate.tikz}
\end{equation}
The resulting ZX diagram is a bipartite graph whose connectivity is that of logical $Z_1$ and Z stabilizer generators of the code.
This correspondence between ZX diagrams of stabilizer codes, stabilizer tableaus, and Tanner graphs is detailed in~\cite{Yeh2026zxstabtab}.

Injecting this resource state yields the desired Z-basis measurement of the first logical qutrit.
\begin{equation}
    \tikzfigscale{0.78}{zeroplusinj}
\end{equation}
It additionally performs Steane-style syndrome measurement~\cite{Steane_1997} of the Z checks.

Exchanging the roles of Z and X, and $\ket{0}$ and $\ket{+}$, likewise enables measuring $\overline{I \otimes \bra{b}H}$ in addition to Steane-style syndrome measurement of the X checks.

%% file: figures/ZCZ/ZCZzh.tikz
\begin{tikzpicture}
	\begin{pgfonlayer}{nodelayer}
		\node [style=none] (34) at (10.25, 0.75) {};
		\node [style=none] (35) at (10.25, -0.75) {};
		\node [style=Z phase dot] (38) at (9.625, -0.75) {$\frac{2}{1}$};
		\node [style=hadamard] (42) at (8.625, 0) {};
		\node [style=box] (0) at (-9, -0.75) {};
		\node [style=box] (1) at (-9, -0.75) {$Z$};
		\node [style=vertex set] (2) at (-9, 0.75) {0};
		\node [style=none] (3) at (-10, 0.75) {};
		\node [style=none] (4) at (-8, 0.75) {};
		\node [style=none] (5) at (-8, -0.75) {};
		\node [style=none] (6) at (-10, -0.75) {};
		\node [style=none] (7) at (-7, 0) {$=$};
		\node [style=box] (8) at (-3.5, -0.75) {};
		\node [style=box] (9) at (-3.5, -0.75) {$\!X^\dagger\!$};
		\node [style=vertex set] (10) at (-3.5, 0.75) {0};
		\node [style=none] (11) at (-6, 0.75) {};
		\node [style=none] (12) at (-1, 0.75) {};
		\node [style=none] (13) at (-1, -0.75) {};
		\node [style=none] (14) at (-6, -0.75) {};
		\node [style=box] (15) at (-5, -0.75) {$\!H\!$};
		\node [style=box] (16) at (-2, -0.75) {$\!H^\dagger\!$};
		\node [style=none] (17) at (0, 0) {$=$};
		\node [style=none] (21) at (1, 0.75) {};
		\node [style=none] (22) at (5.25, 0.75) {};
		\node [style=none] (23) at (5.25, -0.75) {};
		\node [style=none] (24) at (1, -0.75) {};
		\node [style=Z dot] (25) at (1.5, 0.75) {};
		\node [style=X phase dot] (27) at (4, -0.75) {$\frac{1}{2}$};
		\node [style=hadamard] (28) at (1.5, -0.75) {\tiny 2};
		\node [style=hadamard] (29) at (4.75, -0.75) {\tiny 1};
		\node [style=hadamard] (30) at (2.375, 0) {\tiny 1};
		\node [style=hadamard] (31) at (3, 0) {\tiny 2};
		\node [style=none] (32) at (6.25, 0) {$=$};
		\node [style=none] (33) at (7.25, 0.75) {};
		\node [style=none] (36) at (7.25, -0.75) {};
		\node [style=Z dot] (37) at (7.75, 0.75) {};
		\node [style=hadamard] (41) at (8.625, 0) {\tiny 1};
	\end{pgfonlayer}
	\begin{pgfonlayer}{edgelayer}
		\draw [style=simple] (2) to (1);
		\draw [style=simple] (3.center) to (2);
		\draw [style=simple] (2) to (4.center);
		\draw [style=simple] (6.center) to (1);
		\draw [style=simple] (1) to (5.center);
		\draw [style=simple] (10) to (9);
		\draw [style=simple] (11.center) to (10);
		\draw [style=simple] (10) to (12.center);
		\draw [style=simple] (14.center) to (9);
		\draw [style=simple] (9) to (13.center);
		\draw [in=150, out=-30] (25) to (30);
		\draw [in=-60, out=-150] (30) to (25);
		\draw (21.center) to (22.center);
		\draw (23.center) to (24.center);
		\draw [in=-180, out=-15, looseness=1.25] (31) to (27);
		\draw (30) to (31);
		\draw [in=150, out=-30] (37) to (41);
		\draw [in=-60, out=-150] (41) to (37);
		\draw (33.center) to (34.center);
		\draw (35.center) to (36.center);
		\draw [in=-180, out=-15, looseness=1.25] (42) to (38);
	\end{pgfonlayer}
\end{tikzpicture}

%% file: figures/ZCZ/ZCZms.tikz
\begin{tikzpicture}
	\begin{pgfonlayer}{nodelayer}
		\node [style=none] (34) at (1.625, 0.75) {};
		\node [style=none] (35) at (1.625, -0.75) {};
		\node [style=Z phase dot] (38) at (1, -0.75) {$\frac{2}{1}$};
		\node [style=none] (4) at (-4.375, 0) {$|ZCZ\rangle$};
		\node [style=none] (7) at (-2.625, 0) {$=$};
		\node [style=Z dot] (33) at (-1.625, 0.75) {};
		\node [style=Z dot] (36) at (-1.625, -0.75) {};
		\node [style=Z dot] (37) at (-0.875, 0.75) {};
		\node [style=hadamard] (41) at (0, 0) {\tiny 1};
		\node [style=none] (43) at (4.875, 0.75) {};
		\node [style=none] (44) at (4.875, -0.75) {};
		\node [style=Z phase dot] (45) at (4.25, -0.75) {$\frac{2}{1}$};
		\node [style=none] (47) at (2.625, 0) {$=$};
		\node [style=Z dot] (52) at (4.25, 0.75) {};
		\node [style=hadamard] (53) at (3.75, 0) {\tiny 1};
	\end{pgfonlayer}
	\begin{pgfonlayer}{edgelayer}
		\draw [in=150, out=-30] (37) to (41);
		\draw [in=-60, out=-150] (41) to (37);
		\draw (33) to (34.center);
		\draw (35.center) to (36);
		\draw [bend left=75] (52) to (53);
		\draw [bend left=60] (53) to (52);
		\draw (43.center) to (52);
		\draw [bend right=45] (53) to (45);
		\draw [in=-180, out=-15, looseness=1.25] (41) to (38);
		\draw (44.center) to (45);
	\end{pgfonlayer}
\end{tikzpicture}

%% file: figures/ZCZ/ZCZlem.tikz
\begin{tikzpicture}
	\begin{pgfonlayer}{nodelayer}
		\node [style=X phase dot] (0) at (-6.75, 0) {$\frac{a}{2a}$};
		\node [style=hadamard] (1) at (-5.75, 0) {\tiny $b$};
		\node [style=none] (2) at (-7.5, 0) {};
		\node [style=none] (3) at (-5, 0) {};
		\node [style=none] (4) at (-4, 0) {$=$};
		\node [style=Z phase dot] (5) at (-0.375, 0) {$\frac{2a}{a}$};
		\node [style=hadamard] (6) at (1.25, 0) {\tiny $1$};
		\node [style=none] (7) at (-2, 0) {};
		\node [style=none] (8) at (2, 0) {};
		\node [style=none] (9) at (3, 0) {$=$};
		\node [style=hadamard] (10) at (-1.25, 0) {\tiny $1$};
		\node [style=none] (13) at (-2.625, 1.875) {};
		\node [style=none] (14) at (-2.625, -1.875) {};
		\node [style=none] (15) at (-3, 0) {};
		\node [style=X phase dot] (16) at (-0.375, 1.75) {$\frac{a}{2a}$};
		\node [style=none] (18) at (-2, 1.75) {};
		\node [style=none] (19) at (2, 1.75) {};
		\node [style=none] (20) at (3, 1.75) {$=$};
		\node [style=Z dot] (21) at (0.5, 1.75) {};
		\node [style=Z dot] (22) at (1.25, 1.75) {};
		\node [style=hadamard] (23) at (0.5, 0) {\tiny $2$};
		\node [style=Z phase dot] (24) at (-0.375, -1.75) {$\frac{a}{2a}$};
		\node [style=hadamard] (25) at (1.25, -1.75) {\tiny $2$};
		\node [style=none] (26) at (-2, -1.75) {};
		\node [style=none] (27) at (2, -1.75) {};
		\node [style=none] (28) at (3, -1.75) {$=$};
		\node [style=hadamard] (29) at (-1.25, -1.75) {\tiny $2$};
		\node [style=hadamard] (30) at (0.5, -1.75) {\tiny $1$};
		\node [style=none] (32) at (4, 1.75) {};
		\node [style=none] (33) at (6.375, 1.75) {};
		\node [style=hadamard] (36) at (4.75, 1.75) {\tiny $0$};
		\node [style=Z phase dot] (37) at (5.625, 0) {$\frac{2a}{a}$};
		\node [style=none] (39) at (4, 0) {};
		\node [style=none] (40) at (6.375, 0) {};
		\node [style=hadamard] (41) at (4.75, 0) {\tiny $1$};
		\node [style=Z phase dot] (42) at (5.625, -1.75) {$\frac{a}{2a}$};
		\node [style=none] (44) at (4, -1.75) {};
		\node [style=none] (45) at (6.375, -1.75) {};
		\node [style=hadamard] (46) at (4.75, -1.75) {\tiny $2$};
		\node [style=Z phase dot] (47) at (5.625, 1.75) {$\frac{0}{0}$};
		\node [style=none] (48) at (7.25, 1.75) {\scriptsize ,  $b=0$};
		\node [style=none] (49) at (7.25, 0) {\scriptsize ,  $b=1$};
		\node [style=none] (50) at (7.25, -1.75) {\scriptsize ,  $b=2$};
		\node [style=none] (51) at (9.625, 0) {$=$};
		\node [style=none] (52) at (8.25, 1.875) {};
		\node [style=none] (53) at (8.25, -1.875) {};
		\node [style=none] (54) at (8.625, 0) {};
		\node [style=none] (55) at (10.625, 0) {};
		\node [style=none] (56) at (13, 0) {};
		\node [style=hadamard] (57) at (11.375, 0) {\tiny $b$};
		\node [style=Z phase dot] (58) at (12.25, 0) {$\frac{2ab}{ab}$};
	\end{pgfonlayer}
	\begin{pgfonlayer}{edgelayer}
		\draw (3.center) to (2.center);
		\draw (8.center) to (7.center);
		\draw [style=less thin] (13.center)
			 to [in=15, out=180, looseness=0.75] (15.center)
			 to [in=180, out=-15, looseness=0.75] (14.center);
		\draw (18.center) to (21);
		\draw (22) to (19.center);
		\draw (27.center) to (26.center);
		\draw (32.center) to (33.center);
		\draw (40.center) to (39.center);
		\draw (45.center) to (44.center);
		\draw [style=less thin, in=165, out=0, looseness=0.75] (52.center) to (54.center);
		\draw [style=less thin, in=0, out=-165, looseness=0.75] (54.center) to (53.center);
		\draw (55.center) to (56.center);
	\end{pgfonlayer}
\end{tikzpicture}

%% file: figures/ZCZ/ZCZlvl3.tikz
\begin{tikzpicture}
	\begin{pgfonlayer}{nodelayer}
		\node [style=Z phase dot] (0) at (-20, -1.625) {$\frac{2}{1}$};
		\node [style=Z dot] (1) at (-20, 1.375) {};
		\node [style=hadamard] (2) at (-22, 0) {\tiny 1};
		\node [style=X phase dot] (3) at (-19.125, 1.375) {$\frac{2a}{a}$};
		\node [style=X phase dot] (4) at (-19.125, -1.625) {$\frac{2b}{b}$};
		\node [style=none] (5) at (-18.5, 1.375) {};
		\node [style=none] (6) at (-18.5, -1.625) {};
		\node [style=none] (7) at (-17.5, 0) {$=$};
		\node [style=Z phase dot] (8) at (-13.75, -1.625) {$\frac{2}{1}$};
		\node [style=Z dot] (9) at (-13.75, 1.375) {};
		\node [style=hadamard] (10) at (-16.25, 0) {\tiny 1};
		\node [style=X phase dot] (11) at (-15.45, 1.375) {$\frac{2a}{a}$};
		\node [style=X phase dot] (12) at (-12.875, -1.625) {$\frac{2b}{b}$};
		\node [style=none] (13) at (-12.25, 1.375) {};
		\node [style=none] (14) at (-12.25, -1.625) {};
		\node [style=none] (15) at (-11.25, 0) {$=$};
		\node [style=X phase dot] (16) at (-14.5, 0.25) {$\frac{2a}{a}$};
		\node [style=Z phase dot] (40) at (-6.5, -1.625) {$\frac{2}{1}$};
		\node [style=Z dot] (41) at (-6.5, 1.375) {};
		\node [style=hadamard] (42) at (-9.5, 0) {\tiny 1};
		\node [style=X phase dot] (44) at (-5.625, -1.625) {$\frac{2b}{b}$};
		\node [style=none] (45) at (-5, 1.375) {};
		\node [style=none] (46) at (-5, -1.625) {};
		\node [style=none] (47) at (-4, 0) {$=$};
		\node [style=X phase dot] (48) at (-7.25, 0.25) {$\frac{2a}{a}$};
		\node [style=Z dot] (50) at (-8, -1.6) {};
		\node [style=hadamard] (51) at (-8, -0.825) {\scriptsize $a$};
		\node [style=Z dot] (52) at (-8, -0.125) {};
		\node [style=Z phase dot] (53) at (0.5, -1.625) {$\frac{2}{1}$};
		\node [style=Z dot] (54) at (0.5, 1.375) {};
		\node [style=hadamard] (55) at (-2.5, 0) {\tiny 1};
		\node [style=X phase dot] (56) at (1.375, -1.625) {$\frac{2b}{b}$};
		\node [style=none] (57) at (2, 1.375) {};
		\node [style=none] (58) at (2, -1.625) {};
		\node [style=none] (59) at (-17.5, -5) {$=$};
		\node [style=X phase dot] (60) at (-1.75, -0.125) {$\frac{2a}{a}$};
		\node [style=Z phase dot] (61) at (-1, -1.6) {$\frac{2a\!^2}{a\!^2}$};
		\node [style=hadamard] (62) at (-1, -0.825) {\scriptsize $a$};
		\node [style=Z dot] (63) at (-1, -0.125) {};
		\node [style=none] (64) at (-12.875, -6.625) {};
		\node [style=Z dot] (65) at (-13, -3.625) {};
		\node [style=hadamard] (66) at (-16, -5) {\tiny 1};
		\node [style=X phase dot] (67) at (-13.75, -6.625) {$\frac{2b}{b}$};
		\node [style=none] (68) at (-11.375, -3.625) {};
		\node [style=none] (69) at (-11.375, -6.625) {};
		\node [style=none] (70) at (-10.375, -5) {$=$};
		\node [style=Z dot] (72) at (-14.5, -6.6) {};
		\node [style=hadamard] (73) at (-14.5, -5.825) {\scriptsize $a$};
		\node [style=Z dot] (74) at (-14.5, -5.125) {};
		\node [style=Z dot] (75) at (-15.25, -6.35) {};
		\node [style=hadamard] (76) at (-15.25, -5.575) {\scriptsize $a$};
		\node [style=Z dot] (77) at (-15.25, -3.625) {};
		\node [style=Z phase dot] (78) at (-12, -6.6) {$\frac{2a\!^2}{a\!^2}$};
		\node [style=Z phase dot] (79) at (-12.875, -6.625) {$\frac{2}{1}$};
		\node [style=none] (80) at (-5.75, -6.625) {};
		\node [style=hadamard] (82) at (-8.875, -5) {\tiny 1};
		\node [style=X phase dot] (83) at (-6.625, -6.625) {$\frac{2b}{b}$};
		\node [style=none] (84) at (-4.25, -3.625) {};
		\node [style=none] (85) at (-4.25, -6.625) {};
		\node [style=Z dot] (87) at (-7.375, -6.6) {};
		\node [style=hadamard] (88) at (-7.375, -5.575) {\scriptsize $2a$};
		\node [style=Z dot] (89) at (-7.375, -3.625) {};
		\node [style=Z phase dot] (93) at (-4.875, -6.6) {$\frac{2a\!^2}{a\!^2}$};
		\node [style=Z phase dot] (94) at (-5.75, -6.625) {$\frac{2}{1}$};
		\node [style=none] (95) at (-3.25, -5) {$=$};
		\node [style=none] (96) at (0.625, -6.625) {};
		\node [style=hadamard] (98) at (-1.75, -5) {\tiny 1};
		\node [style=X phase dot] (99) at (-1, -6.375) {$\frac{2b}{b}$};
		\node [style=none] (100) at (2.125, -3.625) {};
		\node [style=none] (101) at (2.125, -6.625) {};
		\node [style=Z dot] (102) at (-0.25, -6.6) {};
		\node [style=hadamard] (103) at (-0.25, -5.575) {\scriptsize $2a$};
		\node [style=Z phase dot] (104) at (-0.25, -3.625) {$\frac{ab}{2ab}$};
		\node [style=Z phase dot] (105) at (1.5, -6.6) {$\frac{2a\!^2}{a\!^2}$};
		\node [style=Z phase dot] (106) at (0.625, -6.625) {$\frac{2}{1}$};
		\node [style=none] (107) at (-10.375, -10.5) {$=$};
		\node [style=none] (108) at (-6.5, -12.125) {};
		\node [style=hadamard] (109) at (-8.875, -10.5) {\tiny 1};
		\node [style=none] (111) at (-5, -9.125) {};
		\node [style=none] (112) at (-5, -12.125) {};
		\node [style=Z dot] (113) at (-7.375, -12.1) {};
		\node [style=hadamard] (114) at (-7.375, -11.075) {\scriptsize $2a$};
		\node [style=Z dot] (115) at (-7.375, -9.125) {};
		\node [style=Z phase dot] (116) at (-5.625, -12.1) {$\frac{2a\!^2}{a\!^2}$};
		\node [style=Z phase dot] (117) at (-6.5, -12.125) {$\frac{2}{1}$};
		\node [style=Z phase dot] (118) at (-5.625, -9.125) {$\frac{ab}{2ab}$};
		\node [style=hadamard] (119) at (-7.375, -8.325) {\scriptsize $b$};
		\node [style=none] (120) at (-4, -10.5) {$=$};
		\node [style=none] (121) at (-1, -12.125) {};
		\node [style=hadamard] (122) at (-2.5, -10.5) {\tiny 1};
		\node [style=none] (123) at (1.375, -9.125) {};
		\node [style=none] (124) at (1.375, -12.125) {};
		\node [style=Z dot] (127) at (-1, -9.125) {};
		\node [style=Z phase dot] (128) at (0.75, -12.1) {$\frac{2a\!^2}{a\!^2}$};
		\node [style=Z phase dot] (129) at (-1, -12.125) {$\frac{2}{1}$};
		\node [style=Z phase dot] (130) at (0.75, -9.125) {$\frac{ab}{2ab}$};
		\node [style=Z dot] (131) at (-0.125, -12.1) {};
		\node [style=hadamard] (132) at (-0.125, -10.5) {\scriptsize $2a$};
		\node [style=Z phase dot] (133) at (-0.125, -9.125) {$\frac{b}{b}$};
		\node [style=none] (134) at (-17.5, -10.5) {$=$};
		\node [style=none] (135) at (-13.625, -12.125) {};
		\node [style=hadamard] (136) at (-16, -10.5) {\tiny 1};
		\node [style=none] (138) at (-12.125, -9.125) {};
		\node [style=none] (139) at (-12.125, -12.125) {};
		\node [style=Z dot] (140) at (-14.5, -12.1) {};
		\node [style=hadamard] (141) at (-14.5, -11.075) {\scriptsize $2a$};
		\node [style=Z phase dot] (142) at (-14.5, -9.125) {$\frac{ab}{2ab}$};
		\node [style=Z phase dot] (143) at (-12.75, -12.1) {$\frac{2a\!^2}{a\!^2}$};
		\node [style=Z phase dot] (144) at (-13.625, -12.125) {$\frac{2}{1}$};
		\node [style=Z dot] (145) at (-15.25, -10.35) {};
		\node [style=hadamard] (146) at (-15.25, -9.725) {\scriptsize $b$};
		\node [style=Z dot] (147) at (-15.25, -9.125) {};
	\end{pgfonlayer}
	\begin{pgfonlayer}{edgelayer}
		\draw [bend left=45] (1) to (2);
		\draw [bend left=60] (2) to (1);
		\draw [in=180, out=-90] (2) to (0);
		\draw (1) to (3);
		\draw (4) to (0);
		\draw (5.center) to (3);
		\draw (6.center) to (4);
		\draw [bend left=45] (9) to (10);
		\draw [bend left=60] (10) to (9);
		\draw [in=-180, out=-75] (10) to (8);
		\draw (12) to (8);
		\draw (14.center) to (12);
		\draw (13.center) to (9);
		\draw [bend left=45] (41) to (42);
		\draw [bend left=60] (42) to (41);
		\draw [in=-180, out=-75, looseness=1.25] (42) to (40);
		\draw (44) to (40);
		\draw (46.center) to (44);
		\draw (45.center) to (41);
		\draw (52) to (50);
		\draw [bend left=45] (54) to (55);
		\draw [bend left=60] (55) to (54);
		\draw [in=-180, out=-75, looseness=1.25] (55) to (53);
		\draw (56) to (53);
		\draw (58.center) to (56);
		\draw (57.center) to (54);
		\draw (63) to (61);
		\draw [bend left=45] (65) to (66);
		\draw [bend left=60] (66) to (65);
		\draw (69.center)
			 to (64.center)
			 to [in=-75, out=-180, looseness=1.25] (66);
		\draw (68.center) to (65);
		\draw (74) to (72);
		\draw (77) to (75);
		\draw (85.center)
			 to (80.center)
			 to [in=-75, out=-180, looseness=1.25] (82);
		\draw (89) to (87);
		\draw (101.center) to (96.center);
		\draw [in=-75, out=-180, looseness=1.25] (96.center) to (98);
		\draw (104) to (102);
		\draw [bend left=45] (89) to (82);
		\draw [bend left=60] (82) to (89);
		\draw (84.center) to (89);
		\draw [bend left=45] (104) to (98);
		\draw [bend left=60] (98) to (104);
		\draw (100.center) to (104);
		\draw (112.center) to (108.center);
		\draw [in=-75, out=-180, looseness=1.25] (108.center) to (109);
		\draw (115) to (113);
		\draw [bend left=45] (115) to (109);
		\draw [bend left=60] (109) to (115);
		\draw (111.center) to (115);
		\draw [in=0, out=30, looseness=1.50] (115) to (119);
		\draw [in=150, out=180, looseness=1.50] (119) to (115);
		\draw (124.center) to (121.center);
		\draw [in=-90, out=-180, looseness=0.75] (121.center) to (122);
		\draw [bend left=45] (127) to (122);
		\draw [bend left=60] (122) to (127);
		\draw (123.center) to (127);
		\draw (133) to (131);
		\draw (139.center) to (135.center);
		\draw [in=-75, out=-180, looseness=1.25] (135.center) to (136);
		\draw (142) to (140);
		\draw [bend left=45] (142) to (136);
		\draw [bend left=60] (136) to (142);
		\draw (138.center) to (142);
		\draw (147) to (145);
	\end{pgfonlayer}
\end{tikzpicture}

%% file: figures/UwoCX.tikz
\begin{tikzpicture}
	\begin{pgfonlayer}{nodelayer}
		\node [style=none] (126) at (-8.5, 1) {};
		\node [style=none] (127) at (-8.5, -1) {};
		\node [style=X phase dot] (129) at (-7.25, 1) {$\frac{2}{1}$};
		\node [style=hadamard] (130) at (-7.25, -1) {\scriptsize 2};
		\node [style=X dot] (131) at (-6.25, 1) {};
		\node [style=Z dot] (132) at (-5.5, -1) {};
		\node [style=dualizer] (133) at (-4.75, -1) {\scriptsize D};
		\node [style=none] (134) at (-4, 1) {};
		\node [style=none] (135) at (-4, -1) {};
		\node [style=none] (136) at (-3, 0) {$=$};
		\node [style=none] (149) at (-2, 1) {};
		\node [style=none] (150) at (-2, -1) {};
		\node [style=X phase dot] (152) at (-0.75, 1) {$\frac{2}{1}$};
		\node [style=hadamard] (153) at (-0.75, -1) {\scriptsize 2};
		\node [style=X dot] (154) at (0.25, 1) {};
		\node [style=Z dot] (155) at (1, -1) {};
		\node [style=dualizer] (156) at (0, -1) {\scriptsize D};
		\node [style=none] (157) at (1.75, 1) {};
		\node [style=none] (158) at (1.75, -1) {};
		\node [style=dualizer] (337) at (0.625, 0) {\scriptsize D};
		\node [style=none] (338) at (2.75, 0) {$=$};
		\node [style=none] (339) at (3.75, 1) {};
		\node [style=none] (340) at (3.75, -1) {};
		\node [style=X phase dot] (341) at (5, 1) {$\frac{2}{1}$};
		\node [style=hadamard] (342) at (5, -1) {\scriptsize 1};
		\node [style=X dot] (343) at (6.75, 1) {};
		\node [style=Z dot] (344) at (6, -1) {};
		\node [style=none] (346) at (7.5, 1) {};
		\node [style=none] (347) at (7.5, -1) {};
	\end{pgfonlayer}
	\begin{pgfonlayer}{edgelayer}
		\draw (126.center) to (134.center);
		\draw (127.center) to (135.center);
		\draw (131) to (132);
		\draw (149.center) to (157.center);
		\draw (150.center) to (158.center);
		\draw (154) to (155);
		\draw (339.center) to (346.center);
		\draw (340.center) to (347.center);
		\draw (343) to (344);
	\end{pgfonlayer}
\end{tikzpicture}

%% file: figures/zeroplusmagicstate.tikz
\begin{tikzpicture}
	\begin{pgfonlayer}{nodelayer}
		\node [style=X dot] (1) at (-2.75, 2) {};
		\node [style=X dot] (2) at (-1.75, 2) {};
		\node [style=X dot] (4) at (-1.75, 0) {};
		\node [style=X dot] (5) at (-1.75, -1.25) {};
		\node [style=Z dot] (6) at (0, 2) {};
		\node [style=Z dot] (7) at (0, -1.5) {};
		\node [style=none] (8) at (0.75, 2) {};
		\node [style=none] (9) at (0.75, -1.5) {};
		\node [style=none] (10) at (-0.625, 0.5) {\rotatebox{90}{$...$}};
		\node [style=none] (11) at (-0.75, 1) {};
		\node [style=none] (12) at (-0.75, 0) {};
		\node [style=none] (13) at (-1, 0.5) {};
		\node [style=none] (14) at (-1, 0) {};
		\node [style=none] (15) at (-0.5, 1) {};
		\node [style=none] (16) at (-0.5, 0) {};
		\node [style=none] (17) at (0.5, 0.5) {\rotatebox{90}{$...$}};
		\node [style=none] (18) at (-1.75, -0.625) {\rotatebox{90}{$...$}};
		\node [style=none] (19) at (1.75, 0) {$=$};
		\node [style=X dot] (22) at (3, 2) {};
		\node [style=X dot] (24) at (3, 0.5) {};
		\node [style=X dot] (25) at (3, -0.75) {};
		\node [style=Z dot] (26) at (4.75, 2) {};
		\node [style=Z dot] (27) at (4.75, -1.5) {};
		\node [style=none] (28) at (5.5, 2) {};
		\node [style=none] (29) at (5.5, -1.5) {};
		\node [style=none] (30) at (4.125, 0.5) {\rotatebox{90}{$...$}};
		\node [style=none] (31) at (4, 1) {};
		\node [style=none] (32) at (4, 0) {};
		\node [style=none] (34) at (3.75, 0.25) {};
		\node [style=none] (35) at (4.25, 1) {};
		\node [style=none] (36) at (4.25, 0) {};
		\node [style=none] (37) at (5.25, 0.5) {\rotatebox{90}{$...$}};
		\node [style=none] (38) at (3, -0.125) {\rotatebox{90}{$...$}};
		\node [style=none] (39) at (5.5, 2.75) {\scriptsize $\overline{Z_1}$ connectivity};
		\node [style=none] (40) at (3.5, 2.5) {};
		\node [style=none] (41) at (3.25, 2.25) {};
		\node [style=none] (42) at (3.25, 2.5) {};
		\node [style=none] (43) at (3.5, 2.25) {};
		\node [style=none] (44) at (5.5, -0.375) {\scriptsize Z stabilizers'};
		\node [style=none] (45) at (3.875, -0.25) {};
		\node [style=none] (46) at (3.125, -0.125) {};
		\node [style=none] (47) at (3.25, -0.375) {};
		\node [style=none] (48) at (3.375, 0) {};
		\node [style=none] (49) at (5.5, -0.825) {\scriptsize connectivity};
		\node [style=Z dot] (50) at (-7.75, 1) {};
		\node [style=X dot] (51) at (-7.75, 2) {};
		\node [style=X dot] (52) at (-6.75, 2) {};
		\node [style=X dot] (53) at (-6.75, 1) {};
		\node [style=X dot] (54) at (-6.75, 0) {};
		\node [style=X dot] (55) at (-6.75, -1.25) {};
		\node [style=Z dot] (56) at (-5, 2) {};
		\node [style=Z dot] (57) at (-5, -1.5) {};
		\node [style=none] (58) at (-4.25, 2) {};
		\node [style=none] (59) at (-4.25, -1.5) {};
		\node [style=none] (60) at (-5.625, 0.5) {\rotatebox{90}{$...$}};
		\node [style=none] (61) at (-5.75, 1) {};
		\node [style=none] (62) at (-5.75, 0) {};
		\node [style=none] (63) at (-6.125, 1) {};
		\node [style=none] (64) at (-6, 0) {};
		\node [style=none] (65) at (-5.5, 1) {};
		\node [style=none] (66) at (-5.5, 0) {};
		\node [style=none] (67) at (-4.5, 0.5) {\rotatebox{90}{$...$}};
		\node [style=none] (68) at (-6.75, -0.625) {\rotatebox{90}{$...$}};
		\node [style=none] (69) at (-3.5, 0) {$=$};
		\node [style=none] (70) at (-6.5, 0.5) {};
		\node [style=none] (71) at (-6.3, 0.75) {\rotatebox{45}{$...$}};
		\node [style=none] (74) at (-1.125, 1) {};
		\node [style=none] (75) at (-1.5, 0.5) {};
		\node [style=none] (76) at (-1.3, 0.75) {\rotatebox{45}{$...$}};
		\node [style=Z dot] (77) at (-1.5, 1.25) {};
		\node [style=Z dot] (78) at (-2, 0.75) {};
	\end{pgfonlayer}
	\begin{pgfonlayer}{edgelayer}
		\draw (1) to (2);
		\draw (2) to (11.center);
		\draw (4) to (14.center);
		\draw (12.center) to (5);
		\draw (16.center) to (7);
		\draw (7) to (9.center);
		\draw (6) to (15.center);
		\draw (6) to (8.center);
		\draw (22) to (31.center);
		\draw (24) to (34.center);
		\draw (32.center) to (25);
		\draw (36.center) to (27);
		\draw (27) to (29.center);
		\draw (26) to (35.center);
		\draw (26) to (28.center);
		\draw (40.center) to (41.center);
		\draw (42.center)
			 to (41.center)
			 to (43.center);
		\draw (45.center) to (46.center);
		\draw (47.center)
			 to (46.center)
			 to (48.center);
		\draw (51) to (52);
		\draw (50) to (53);
		\draw (52) to (61.center);
		\draw (63.center) to (53);
		\draw (54) to (64.center);
		\draw (62.center) to (55);
		\draw (66.center) to (57);
		\draw (57) to (59.center);
		\draw (56) to (65.center);
		\draw (56) to (58.center);
		\draw (53) to (70.center);
		\draw (77) to (74.center);
		\draw (78) to (75.center);
	\end{pgfonlayer}
\end{tikzpicture}

%% file: qubitemulation.tex
\section{Gate Counts for Qutrit Emulation of Qubit and Binary Operations}\label{sec:emuappendix}

\subsection{Emulating the qubit X, Z, S, CX and CZ gates}
\label{sec:XZSCXCZ}
The qubit Hadamard and CNOT operations cannot be emulated utilizing qutrit Clifford gates~\cite{bocharov_factoring_2017}. 
Moreover, we can show that the qubit Hadamard gate cannot be exactly and deterministically emulated using qutrit Clifford+T, despite this being an approximately universal gate set.
This follows from normal forms for single-qutrit Clifford+T operators, which impose the necessary but not sufficient condition that the matrix entries belong to the ring $\mathbb{Z}\left[\frac{1}{1 - \zeta}\right]$, as independently determined by Prakash, Jain, Kapur, and Seth~\cite{PrakashS2018normalform}; and Glaudell, Ross, and Taylor~\cite{GlaudellA2019canonical}:
\begin{definition}[\texorpdfstring{$\mathbb{Z}\left[\frac{1}{1 - \zeta}\right]$}{Single-qutrit Clifford + T normal form ring}]
    \label{def:qutritnormalformring}
    Let $\zeta=e^{i\frac{2\pi}{9}}$. The matrix elements of all single-qutrit Clifford+T unitaries must be elements of the ring
    \begin{equation}
        \mathbb{Z}\left[\frac{1}{1 - \zeta}\right] = \left\{\frac{A}{(1 - \zeta)^k} \middle| A \in \mathbb{Z}\left[\zeta\right], k \in \mathbb{N}\right\}
    \end{equation}
    where
    \begin{equation}
        \mathbb{Z}\left[\zeta\right] = \left\{\sum_{i = 0}^5 a_i (\zeta)^i \middle| \forall i, a_i \in \mathbb{Z}\right\}
    \end{equation}  
\end{definition}
Because $\frac{1}{\sqrt{2}}$ is not in the ring in Definition~\ref{def:qutritnormalformring}, it is impossible to exactly and deterministically emulate the qubit Hadamard gate utilizing single-qutrit Clifford+T gates.  In general, the qubit Clifford operations do not neatly embed in the qutrit Clifford fragment.

In spite of this, we can still emulate quite a few qubit gates using qutrits.
The qubit X gate can be emulated by the qutrit Clifford $X_{01}$ gate.  The qubit Z gate can be emulated by Clifford equivalence to the qutrit R gate:
\begin{equation}
    \text{qubit } Z \emulates X_{12} \text{ R } X_{12}
\end{equation}

Likewise, the qubit S gate can be emulated by Clifford equivalence to the qutrit $\sqrt{R}$ gate:
\begin{equation}
    \text{qubit } S \emulates X_{12} \sqrt{R} X_{12}
\end{equation}
Extending magic state injection of the $R$ gate~\cite{anwar_qutrit_2012}, despite being nowhere in the Clifford hierarchy, it can be shown that the $\sqrt{R} = \text{diag}(1,1,i)$ gate admits implementation by a repeat-until-success injection protocol of the $(1,1,i)^\top$ state, where the probability of success rapidly converges to 1 with the number of rounds.

\begin{lemma}[\cite{bocharov_factoring_2017}]
    \label{lemma:qubitcx}
    The qubit CX gate can be emulated by qutrit Clifford+T gates unitarily without ancillae, with T-count 6 and Clifford CX-count 8.
\end{lemma}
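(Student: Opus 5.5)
The idea is to emulate the qubit CX by a qutrit gate that agrees with it on the binary inputs $\{0,1\}^2$, and to reduce that gate to a single $\ket{2}$-controlled $X_{+1}$ conjugated by Cliffords. Lemma~\ref{lemma:binaryand} costs this gate at 3 T gates and 3 Clifford CX gates. Concretely, we want a two-qutrit unitary that maps $\ket{a,b}\mapsto\ket{a,a\oplus b}$ for $a,b\in\{0,1\}$. The third level is left free, as in Remark~\ref{remark:freedom}.

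\textbf{Construction.} Following \cite{bocharov_factoring_2017}, the first step is to build the target's binary behaviour. A qutrit CX (a Clifford) sends $\ket{a,b}\mapsto\ket{a,a+b \bmod 3}$, which is correct on every binary input except $(1,1)$, where it produces $2$ instead of $0$. This is repaired by applying a $\ket{1}$-controlled $X_{12}$-type correction on the target. Equivalently, we apply the single-qutrit permutation $X_{02}$ (or $X_{12}$ composed with a shift) on the target, controlled on the control qutrit being $\ket{1}$.

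A controlled two-level swap like this is not a $\Lambda$-controlled Clifford. However, a controlled transposition can be written as a product of two $\ket{c}$-controlled $X_{\pm1}$ gates, interleaved with Clifford permutations. This works because in $S_3$ a transposition equals $X_{+1}$ multiplied by a Clifford permutation, and $X_{12}$, $X_{01}$, $X_{02}$ are Clifford, so they can be applied unconditionally. The plan is to arrange the circuit so that exactly two $\ket{c}$-controlled $X_{\pm1}$ gates appear. One way is to pick a Clifford $V$ with $V X_{+1} V^\dagger$ equal to the needed correction on the relevant rows, and then use the truth-table freedom on the rows involving $2$.

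\textbf{Counting.} Each $\ket{2}$-controlled $X_{+1}$ costs 3 T gates and 3 CX gates by \cite{bocharov_factoring_2017}. Changing the control value or using $X^\dagger$ only conjugates by single-qutrit Paulis and Cliffords, so these adjustments are free. Two such gates give T-count $6$ and contribute $6$ CX gates. The remaining Clifford two-qutrit parts contribute $2$ further qutrit CX gates, for example the CX implementing $a+b$ and one more CX to undo or shift. This gives $8$ in total, matching Table~\ref{table:countstable}.

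\textbf{Verification and main obstacle.} Finally, we check the four binary rows $\ket{00},\ket{01},\ket{10},\ket{11}$ directly, as in Table~\ref{table:andtruth}. The main obstacle is the middle step: choosing the Clifford conjugations so that only two controlled-$X_{\pm1}$ gates are needed rather than three, while the binary rows still come out right. My first attempt would be to write the target's required action on the second qutrit, for each control value $c\in\{0,1,2\}$, as an element of $S_3$. We are free to choose this element when $c=2$. I would pick it so that the sequence of permutations, indexed by $c$, factors as a Clifford $\Lambda$-CX power times two single-value-controlled shifts.
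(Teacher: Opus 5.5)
There is a genuine gap, located exactly at the step you flag as the main obstacle, and within your framework it cannot be closed.

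You keep the first qutrit as the control throughout. So every gate acts as $\ket{c,t}\mapsto\ket{c,\sigma_c(t)}$, and your circuit is a family $(\sigma_0,\sigma_1,\sigma_2)$ of permutations of $\{0,1,2\}$ composed blockwise. The qubit CX forces $\sigma_0=\mathrm{id}$ (it fixes $0$ and $1$, hence also $2$) and $\sigma_1=X_{01}$; only $\sigma_2$ is free.

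Now compare permutation signs block by block:
\begin{itemize}
\item A control-preserving Clifford permutation acts on the target as $t\mapsto\alpha t+\beta c+\gamma$ with $\alpha\in\{1,2\}$ independent of $c$. It therefore has the same sign in every block.
\item A single-value-controlled $X_{\pm1}$ is a $3$-cycle or the identity in each block, hence even everywhere.
\end{itemize}
Every product of such gates therefore satisfies $\mathrm{sgn}\,\sigma_0=\mathrm{sgn}\,\sigma_1$, while you need $+1$ and $-1$.

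This is why your claim that a controlled transposition is two controlled $X_{\pm1}$'s times unconditional Clifford permutations fails. An unconditional transposition contributes an odd factor to the $c=0$ block as well, and every available gate changes the parities of the $c=0$ and $c=1$ blocks together. Likewise, conjugating $X_{+1}$ by a permutation $V$ only ever gives $X_{\pm1}$, which fixes no basis state. So it cannot equal the correction $X_{02}$, which must fix $\ket{1}$. No choice of $\sigma_2$ and no number of controlled shifts helps, so your tally ``$6+2=8$'' does not describe any actual circuit.

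The missing idea is that the original control qutrit must itself be acted on. The circuit of Bocharov, Roetteler and Svore that the paper reproduces does this, in order:
\begin{enumerate}
\item $\Lambda$-controlled $X_{-1}$ from the second qutrit onto the first;
\item $X_{12}\otimes X_{12}$;
\item a $\ket{1}$-controlled $X_{-1}$ from the first qutrit onto the second;
\item a $\ket{1}$-controlled $X_{+1}$ from the second qutrit onto the first;
\item a qutrit SWAP;
\item $X_{12}\otimes X_{12}$;
\item $\Lambda$-controlled $X_{+1}$ from the second qutrit onto the first.
\end{enumerate}
One can check the four binary inputs directly. The reverse-direction controlled shift and the SWAP break the block structure, and that is what evades the parity obstruction.

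The T-count $6$ comes from the two $\ket{1}$-controlled shifts. For the CX-count, the paper decomposes the SWAP into three copies of a single two-qutrit Clifford gate applied in alternating directions. That naively gives $6+2+3=11$, which the paper then optimizes down to $8$.
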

\begin{proof}
    For reference, the decomposition by Bocharov, Roetteler, and Svor~\cite[Proposition 4]{bocharov_factoring_2017} is given as a circuit diagram below:
    \begin{equation}
        \input{qutrit/qubitcqubitx.tikz}
    \end{equation}
    Like the qubit SWAP gate, the qutrit SWAP gate can be decomposed as three of the same two-qutrit gate applied successively in alternating directions.  By optimizing thereafter, the Clifford CX-count can be further reduced to 8.
\end{proof}


\begin{lemma}
    \label{lem:qubitcz}
    Utilizing one R gate, we can exactly emulate the qubit CZ gate.
\end{lemma}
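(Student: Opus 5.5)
The plan is to emulate the qubit CZ, $\ket{x,y}\mapsto(-1)^{xy}\ket{x,y}$ for $x,y\in\{0,1\}$, in three steps: compute the ternary sum $x+y \pmod 3$ into a qutrit using the Clifford CX, apply a single phase flip on the level $\ket{2}$, and uncompute with CX$^\dagger$. Everything except the phase flip is Clifford. I take the R gate to be $\diag(1,1,-1)$. This matches the convention implied by the qubit $Z$ emulation above: $X_{12}\,\text{R}\,X_{12}=\diag(1,-1,1)$, whose restriction to $\{\ket{0},\ket{1}\}$ is the qubit $Z$.

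\textbf{Step 1 (compute).} Apply the qutrit CX from the first qutrit to the second, so that $\ket{x,y}\mapsto\ket{x,x+y \bmod 3}$. For binary inputs the second register holds $x+y\in\{0,1,2\}$. The key observation is that no wraparound modulo $3$ occurs, so the value $2$ is reached exactly when $x=y=1$, i.e.\ exactly when $x\land y=1$.

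\textbf{Step 2 (phase).} Apply R $=\diag(1,1,-1)$ to the second qutrit. This multiplies the state by $-1$ precisely when the second register is $\ket{2}$, giving the overall phase $(-1)^{x\land y}=(-1)^{xy}$.

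\textbf{Step 3 (uncompute).} Apply CX$^\dagger$ $(=\text{CX}^2)$ from the first qutrit to the second, restoring $\ket{x,y}$. Both CX and CX$^\dagger$ are permutations of computational basis states, and R is diagonal, so the composite is a diagonal unitary on all nine basis states. On the four binary basis states it equals $(-1)^{xy}$, which is exactly the qubit CZ. The shaded (non-binary) entries are a free degree of freedom in the sense of Remark~\ref{remark:freedom}, so their values are irrelevant. The resulting circuit is unitary, ancilla-free, and uses a single R gate together with two Clifford CX-type gates.

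I expect no real obstacle. The only point to check carefully is the convention for which level R flips; if R is instead defined as $\diag(1,-1,1)$, I would conjugate it by the Clifford $X_{12}$, as in the qubit $Z$ emulation, without changing the R-count. I would also confirm the identity with a short ZX rewrite: fuse the two CX spiders around the phase spider to form a single $\pi$-type phase gadget on the two wires. Finally, the resource count differs from Table~\ref{table:countstable}, which lists R-count 3 and CX-count 3, so the construction here would be stated as the one-R realization claimed by the lemma.
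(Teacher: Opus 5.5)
Your proposal is correct and is essentially the paper's proof. The paper uses exactly the circuit $\Lambda$-controlled $X_{+1}$, then R $=\diag(1,1,-1)$ on the target, then $\Lambda$-controlled $X_{-1}$, and argues as you do that the ternary sum of the binary inputs equals $2$ only for $\ket{11}$. Your remark about the R-count 3 entry in Table~\ref{table:countstable} is also apt: the paper itself says this lemma improves on the three-R construction from~\cite{wetering_building_2023}.
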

\begin{proof} 
    \begin{equation}
        \label{eq:qubitcz}
        \input{qutrit/qubitcz.tikz}
    \end{equation}
    On qubit input states $\ket{00}$, $\ket{01}$, and $\ket{10}$, the states' sum modulo 3 is 0 or 1, and so no phase factor is applied.
    When the input state is $\ket{11}$, the states' sum modulo 3 is 2, and so a phase factor of $-1$ is applied.
\end{proof}
This improves on the three R-count decomposition achieved by conjugating the control of the $\ket{2}$-controlled qubit Z gate from~\cite{wetering_building_2023} by $X_{12}$.


\subsection{The qubit \texorpdfstring{$\ket{+}$}{\|+\>} and \texorpdfstring{$\ket{-}$}{\|-\>} states}
The qubit $\ket{+}$ state is qutrit Clifford equivalent to the Norell state, for which two magic state distillation protocols have been developed by~\cite{DawkinsH2015qutritmsd,PrakashS2020qutritmsd}.  Below, we present a new protocol for synthesizing the qubit $\ket{+}$ state which requires only Clifford+T gates, and computational basis state preparation and measurement.  By applying one qutrit Clifford+T gate Clifford equivalent to the R gate before measurement, this protocol can be adapted to synthesize the qubit $\ket{-}$ state, which is Clifford equivalent to the $i$ eigenstate of the qutrit Hadamard gate.

\begin{lemma}
    The qubit $\ket{+} = \frac{\ket{0}+\ket{1}}{\sqrt{2}}$ state can be synthesized with 2/3 probability of success using 3 qutrit T gates and qutrit stabilizer states, gates, and measurements, with one ancilla qutrit.
\end{lemma}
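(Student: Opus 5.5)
The plan is to build the state by a \emph{projective} trick: start from a qutrit stabilizer state that already contains the desired qubit superposition, and use one non-Clifford entangling gate to flag the unwanted $\ket{2}$ component onto an ancilla, which is then measured. Concretely, I would prepare the data qutrit in the qutrit $\ket{+} = H\ket{0} = \frac{1}{\sqrt{3}}(\ket{0}+\ket{1}+\ket{2})$, which is a stabilizer state, and prepare one ancilla qutrit in $\ket{0}$.

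Next, I would apply the $\ket{2}$-controlled $X_{+1}$ gate with the data qutrit as control and the ancilla as target. By the decomposition of~\cite{bocharov_factoring_2017} already used in the proof of Lemma~\ref{lemma:binaryand}, this gate is exactly implemented by 3 Clifford CX gates and 3 qutrit T gates, together with single-qutrit Cliffords. This fixes the T-count at 3 and the ancilla count at one. On the computational basis the gate acts as $\ket{i}\ket{0}\mapsto\ket{i}\ket{\delta_{i,2}}$. The joint state is therefore
\[
\tfrac{1}{\sqrt{3}}\bigl(\ket{0}\ket{0}+\ket{1}\ket{0}+\ket{2}\ket{1}\bigr).
\]

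Finally, I would measure the ancilla in the computational (Z) basis. With probability $2/3$ the outcome is $0$, and the data qutrit collapses to $\frac{1}{\sqrt{2}}(\ket{0}+\ket{1})$, which is exactly the qubit $\ket{+}$ inside the qutrit. With probability $1/3$ the outcome is $1$, and the data is left in $\ket{2}$; this is a heralded failure, and one may simply restart. Since only preparation of $\ket{0}$, $H$, the $\ket{2}$-controlled $X_{+1}$ (Clifford+T) and a Z-basis measurement are used, the resource claims of the lemma follow.

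The main point to check carefully is that the cited T-count-3 circuit for the $\ket{2}$-controlled $X_{+1}$ is \emph{exact}: it must introduce no relative phases between the $\ket{0}$ and $\ket{1}$ branches of the control, since such phases would spoil the equal-amplitude superposition on success. I would verify this directly, for instance by conjugating the ancilla by $H$ to turn it into a $\ket{2}$-controlled Z and reading off its phase polynomial on basis states, or via its ZX diagram. Any global or branch-independent phase is harmless. If a residual phase diagonal in the control appeared, I would cancel it with a single-qutrit Clifford correction on the data qutrit, as long as it lies in the Clifford group.
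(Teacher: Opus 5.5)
Your proposal is correct and matches the paper's construction: prepare $H\ket{0}$ on the data qutrit and $\ket{0}$ on the ancilla, apply the T-count-3 $\ket{2}$-controlled $X_{+1}$, and postselect the ancilla on $\ket{0}$; before measurement the state is the uniform superposition of $\ket{00}$, $\ket{10}$, $\ket{21}$. Your worry about branch-dependent phases does not arise, because the underlying decomposition is exact: for instance, applying $T$ to the three parities $j$, $i+j$, $2i+j$ gives exactly the $\ket{0}$-controlled $Z$, and that gate is Clifford-equivalent to the one you use.
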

\begin{proof}
    The qutrit Clifford+T circuit below produces the qubit $\ket{+}$ state with 2/3 probability.
    \begin{equation}
        \input{qutrit/qubitplus.tikz}
    \end{equation}
    Before measurement, the state is in uniform superposition of $\ket{00}$, $\ket{10}$, and $\ket{21}$; measuring the bottom qubit to be the $\ket{0}$ state then produces the $\frac{\ket{0}+\ket{1}}{\sqrt{2}}$ state, equal to the qubit $\ket{+}$ state.
\end{proof}
\begin{corollary}
    By application of the $X_{12} R X_{12} = \text{diag}(1,-1,1)$ gate before measurement, the qubit $\ket{-} = \frac{\ket{0}-\ket{1}}{\sqrt{2}}$ state can be subsequently attained requiring only Clifford+T gates, and computational basis state preparation and measurement, with one ancilla qutrit.
\end{corollary}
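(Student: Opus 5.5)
Since the corollary applies $X_{12} R X_{12} = \mathrm{diag}(1,-1,1)$ before measurement, I will reuse the circuit from the preceding lemma and insert this one extra gate on the data (top) qutrit, just before the bottom qutrit is measured. From the preceding proof, the two-qutrit state just before measurement is the uniform superposition $\frac{1}{\sqrt{3}}(\ket{00}+\ket{10}+\ket{21})$. Acting with $\mathrm{diag}(1,-1,1)$ on the first qutrit gives $\frac{1}{\sqrt{3}}(\ket{00}-\ket{10}+\ket{21})$, since only the $\ket{1}$ level picks up a sign. The gate is diagonal and acts only on the first qutrit, so it commutes with the Z-basis measurement of the second qutrit. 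Post-selecting the outcome $\ket{0}$ on the bottom qutrit therefore leaves $\frac{\ket{0}-\ket{1}}{\sqrt{2}}$ on the top qutrit. This is exactly the qubit $\ket{-}$ state, and it occurs with the same success probability $2/3$.

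Next I justify the resource claim. The circuit of the lemma already uses only Clifford+T gates, computational basis preparation and measurement, and one ancilla. So it suffices to show that $X_{12} R X_{12}$ is itself implementable within this framework. $X_{12} = H^2$ is Clifford. The $R=\mathrm{diag}(1,1,-1)$ gate is the point needing care: the paper says in Section~\ref{sec:emuappendix} that qubit $Z$ is emulated by $X_{12} R X_{12}$ and that the protocol uses ``one qutrit Clifford+T gate Clifford equivalent to the R gate''. I therefore have two options. The first is to read the corollary as ``Clifford+T+R'', consistent with Table~\ref{table:countstable}. The second is to note that only the action on the span of $\ket{0},\ket{1}$ of the top qutrit matters after post-selection, since the $\ket{2}$ component is paired with the discarded outcome $\ket{1}$. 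Under the second option any Clifford+T gate agreeing with $\mathrm{diag}(1,-1,\ast)$ on that subspace suffices.

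The main obstacle is exactly this second option: a diagonal qutrit Clifford+T gate has phases that are powers of $\zeta=e^{2\pi i/9}$, which never equal $-1$. I would therefore handle it by moving the diagonal sign onto the relative phase after post-selection, i.e.\ a gate $\mathrm{diag}(1,\zeta^k,\ast)$ does not give $-1$ exactly. Failing that, I would adopt the statement as written: the gate $X_{12}RX_{12}$ is applied as given, and the corollary's content is the state computation above. Correctness then follows directly from the lemma's pre-measurement state together with the commutation of the diagonal gate with the ancilla measurement.
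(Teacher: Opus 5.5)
Your argument matches the paper's intended one-line proof: inserting $X_{12}RX_{12}=\mathrm{diag}(1,-1,1)$ on the data qutrit turns the lemma's pre-measurement state $\frac{1}{\sqrt{3}}(\ket{00}+\ket{10}+\ket{21})$ into $\frac{1}{\sqrt{3}}(\ket{00}-\ket{10}+\ket{21})$, and post-selecting $\ket{0}$ on the ancilla leaves $\frac{\ket{0}-\ket{1}}{\sqrt{2}}$ with the same success probability $2/3$.

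For the resource claim, the paper simply counts $X_{12}RX_{12}$ as one (Clifford-conjugated) Clifford+T gate, so drop your third paragraph rather than weakening the statement to Clifford+T+R. The fact that products of diagonal generators only give powers of $\zeta$ does not rule out non-diagonal or ancilla-assisted Clifford+T circuits. In fact, $R$ is known from prior work on qutrit metaplectic gates to be exactly synthesizable over qutrit Clifford+T once ancillae are allowed.
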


\begin{corollary}
    By Clifford equivalence, the above proposition and corollary respectively construct the Norell state $\frac{\ket{1}+\ket{2}}{\sqrt{2}}$, and the $i$ eigenstate of the qutrit Hadamard gate $\ket{H, i} = \frac{\ket{1} - \ket{2}}{2}$.
\end{corollary}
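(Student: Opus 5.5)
The approach is to post-compose the two preceding constructions, the $2/3$-probability preparation of the qubit $\ket{+}$ and its $\ket{-}$ variant from the corollary, with a single qutrit Pauli gate. Doing so leaves the resource accounting (3 T gates, stabilizer operations, one ancilla, success probability $2/3$) unchanged. Concretely, I will apply the qutrit Pauli $X = X_{+1}$, which is Clifford and sends $\ket{0}\mapsto\ket{1}$ and $\ket{1}\mapsto\ket{2}$, to the output qutrit after the successful measurement.

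\emph{Norell state.} Acting on the heralded state $\frac{\ket{0}+\ket{1}}{\sqrt{2}}$, $X_{+1}$ gives $\frac{\ket{1}+\ket{2}}{\sqrt{2}}$, which is the Norell state. Since $X_{+1}$ is a unitary Clifford applied after postselection, the success probability and the T-count are unchanged.

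\emph{The $i$ eigenstate of $H$.} Applying $X_{+1}$ to $\frac{\ket{0}-\ket{1}}{\sqrt{2}}$ gives $\frac{\ket{1}-\ket{2}}{\sqrt{2}}$. (The normalisation in the statement should read $\sqrt{2}$ rather than $2$.) It remains to check that this is an eigenvector of $H$ with eigenvalue $i$. By the definition of $H$,
\begin{equation*}
H(\ket{1}-\ket{2}) = \ket{\omega}-\ket{\overline{\omega}} = \tfrac{1}{\sqrt{3}}\bigl((\omega-\overline{\omega})\ket{1}+(\overline{\omega}-\omega)\ket{2}\bigr).
\end{equation*}
The $\ket{0}$ components cancel. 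Using $\omega-\overline{\omega} = 2i\sin(2\pi/3) = i\sqrt{3}$, the right-hand side equals $i(\ket{1}-\ket{2})$, as required.

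The only step needing care is this eigenvalue computation, which depends on the sign conventions fixed for $\ket{\omega}$ and $\ket{\overline{\omega}}$. Everything else follows directly from the fact that Clifford post-processing preserves the gate set and the success probability.
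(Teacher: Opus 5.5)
Your proposal is correct and follows the paper's one-line justification of Clifford equivalence. You make it explicit with the Pauli $X_{+1}$ and check directly that $\frac{\ket{1}-\ket{2}}{\sqrt{2}}$ is an $i$-eigenvector of $H$; your computation via $\omega-\overline{\omega}=i\sqrt{3}$ is right, and you correctly note that the normalization in the statement should be $\sqrt{2}$, not $2$.
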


\subsection{Qutrit Clifford+T Emulation of Multiqubit Toffoli and AND Gates}\label{sec:mctand}
In this subsection, we present qutrit T-counts for ternary classical reversible circuit decompositions of multiple-controlled Toffoli gates, ancilla-free in the qutrit Clifford+T gate set, as presented in Table~\ref{table:countstable}. We note that the gate counts (of either the CCX gate or the $\ket{0}$-controlled X gate, and the qutrit H gate) in the qutrit Toffoli+Hadamard gate set as defined in~\cite{Roy_2023, glaudell_exact_2024} are apparent from the decompositions. We will first present linear depth circuit decompositions, followed by logarithmic depth circuit decompositions--- all with linear gate count.

A qubit Toffoli gate can be built from qutrit gates using the decomposition by Gokhale~et.~al.~\cite{gokhale_asymptotic_2019,baker2020improved}:
\begin{equation}
\begin{array}{c}
     \Qcircuit @R=1.5em @C=0.25em {
    \lstick{\ket{q_{0}}} & \qw & \ctrl{2} & \qw \\
    \lstick{\ket{q_{1}}} & \qw & \ctrl{1} & \qw \\
    \lstick{\ket{q_{2}}} & \qw & \targ & \qw \\
    }
\end{array}
 \, \emulates \qquad
 \begin{array}{c}
 \Qcircuit @R=0.5em @C=0.25em {
    \lstick{\ket{q_{0}}} & \onecontrol & \qw  & \onecontrol & \qw & \\
    \lstick{\ket{q_{1}}} & \gate{X_{+1}}\qwx  & \twocontrol & \gate{X_{-1}}\qwx  &  \qw & \\
    \lstick{\ket{q_{2}}} & \qw  & \gate{X_{01}} \qwx  & \qw  &  \qw & \\
    }
 \end{array}
\label{eq:toffdecomp}
\end{equation}

To exactly emulate the qubit CCX gate, the $X$ gate on the target qutrit must be $X_{01}$.  In their presentation of this gate they simply denoted the gate on the target qutrit as $X$, referring to any of the five single-qutrit computational basis permutation gates.
The action of the gate is irrelevant for the cases where any of the input qutrits are in the $\ket{2}$ state, making it more efficient to use Clifford CX gates to replace the top two controlled gates:
\begin{equation}
    \begin{array}{c}\Qcircuit @R=0.5em @C=0.25em {
    \lstick{\ket{q_{0}}} & \onecontrol & \twocontrol & \qw &\\
    \lstick{\ket{q_{1}}} & \gate{X_{+1}}\qwx  & \gate{X_{-1}}\qwx & \qw &\\
    }\end{array}= \input{qutrit/cxplus.tikz}
    \label{eq:qutritcx}
\end{equation}
\begin{equation}
    \begin{array}{c}\Qcircuit @R=0.5em @C=0.25em {
    \lstick{\ket{q_{0}}} & \onecontrol & \twocontrol & \qw & \\
    \lstick{\ket{q_{1}}} & \gate{X_{-1}}\qwx  & \gate{X_{+1}}\qwx  &  \qw & \\
    }\end{array} = \input{qutrit/cxminus.tikz}
\end{equation}

Without compromising the correctness of the qubit Toffoli gate, the above decomposition in Equation~\eqref{eq:toffdecomp} was modified by Bocharov, Roetteler, and Svore as follows~\cite[Proposition 6]{bocharov_factoring_2017}:
\begin{equation}
\begin{array}{c}
     \Qcircuit @R=1.5em @C=0.25em {
    \lstick{\ket{q_{0}}} & \qw & \ctrl{2} & \qw \\
    \lstick{\ket{q_{1}}} & \qw & \ctrl{1} & \qw \\
    \lstick{\ket{q_{2}}} & \qw & \gate{X} & \qw \\
    }
\end{array}
 \, \emulates \input{qutrit/qubitccx.tikz}
\label{eq:toffdecompcliffordplust}
\end{equation}
They concluded that the qubit CCX gate takes 15 T-gates to emulate with qutrit Clifford+T gates, taking the bottom gate to be the $\ket{2}$-controlled $X_{01}$ gate, as the $X_{01}$ gate emulates the qubit X gate.  They lower the T-count to 12 by adding one clean ancilla.

We show that the T-count can be 12 without ancillae, by $\ket{2}$-controlling an emulation of the qubit X gate instead of the $X_{01}$ gate.
\begin{proposition}
    \label{prop:tcd01qutrit-qubit}
    The $\ket{2}$-controlled qubit X gate, which executes the qubit $X$ gate if and only if the control qutrit is in the $\ket{2}$ state, can be emulated fault-tolerantly by a two-qutrit Clifford+T unitary without ancillae, with a T-count of 12.
\end{proposition}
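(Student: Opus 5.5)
The plan is to realize the $\ket{2}$-controlled qubit X not as a $\ket{2}$-controlled $X_{01}$, but as a product of $\ket{2}$-controlled cyclic shifts. The emulation only constrains the action on target levels $\ket{0},\ket{1}$, so we may replace $X_{01}$ by any qutrit permutation that swaps $\ket{0}$ and $\ket{1}$. By Remark~\ref{remark:freedom}, its action on $\ket{2}$ is a free choice.

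The key observation is that the odd permutation $X_{01}$ cannot be a power of $X_{+1}$, so we factor it as $X_{01} = X_{+1}\,X_{12}$, up to the order of composition, which we check on basis states. Here $X_{12}=H^2$ is Clifford. The point is to arrange things so that only the cyclic factors need to be controlled.

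First, we write the controlled gate as $\ket{2}$-controlled $X_{+1}$ composed with $\ket{2}$-controlled $X_{12}$. We then examine the behaviour of the target on binary inputs: conjugating a controlled $X_{+1}$ by an uncontrolled Clifford permutation on the target changes which shift is applied. Concretely, $X_{12}X_{+1}X_{12}=X_{-1}$. So we aim for a form
\[
C_{2}(X_{01}) \;\sim\; (I\otimes X_{12})\,C_{2}(X_{\pm1})\,(I\otimes X_{12})\,C_{2}(X_{\pm1}),
\]
where $\sim$ means agreement whenever the target is binary. For a binary target we verify the truth table row by row. Control in $\ket{0},\ket{1}$ must give the identity, which holds since the $X_{12}$ conjugations cancel. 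Control $\ket{2}$ must send $0\leftrightarrow 1$. The action of the target on $\ket{2}$ is left unconstrained.

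Each $\ket{2}$-controlled $X_{\pm1}$ costs 3 T gates and 3 CX gates by~\cite{bocharov_factoring_2017}, as used in Lemma~\ref{lemma:binaryand}. The uncontrolled $X_{12}$ conjugations cost no T gates.

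The main obstacle is that a product of two cyclic shifts is even, so this naive count cannot yield the odd $X_{01}$ on the full qutrit. Matching only on binary targets is what rescues it, but two controlled shifts may still be insufficient. If so, I would use four $\ket{2}$-controlled shifts, interleaved with Clifford target permutations, giving $4\times 3=12$ T gates. This matches the claimed count.

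To find the precise interleaving, I would search small words $C_2(X_{s_1})\,P_1\,C_2(X_{s_2})\,P_2\cdots$, with each $P_i$ a Clifford permutation. An uncontrolled $P_i$ is acceptable only if the $P_i$ compose to the identity, so that the control-$\ket{0},\ket{1}$ rows are fixed. I would then check the $\ket{2}$-control binary rows by direct tabulation, and finally confirm that all gates are Clifford apart from the four controlled shifts.

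Fault-tolerance here is meant in the sense of the gate set, namely Clifford+T, unitary, and ancilla-free. This follows once the decomposition is exhibited.
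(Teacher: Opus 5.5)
There is a genuine gap. You never exhibit the decomposition, and the family of circuits you propose to search provably contains none. In your words $C_2(X_{s_1})P_1C_2(X_{s_2})P_2\cdots$, every controlled shift is controlled by the control qutrit and every $P_i$ acts on the target alone. So the control value is never changed, and each control branch acts on the target by a permutation of $\{0,1,2\}$. You require $\prod_i P_i = I$ to fix the $\ket{0},\ket{1}$ control rows. The $\ket{2}$ branch then acts by the $P_i$ interleaved with the $X_{s_i}$, which has the same sign as $\prod_i P_i$ because each $X_{\pm1}$ is a 3-cycle. Hence it is one of $I, X_{+1}, X_{-1}$. But a permutation of $\{0,1,2\}$ sending $0\mapsto 1$ and $1\mapsto 0$ must fix $2$, so it is exactly the odd permutation $X_{01}$. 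The freedom on the target's $\ket{2}$ level therefore gives no room at all in the control-$\ket{2}$ branch. Going from two to four controlled shifts changes nothing. Your two-shift candidate yields $X_{12}X_{\pm1}X_{12}X_{\pm1}\in\{I,X_{+1},X_{-1}\}$ for any choice of signs, and every longer word fails the same way.

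The missing idea is that the emulation leaves the \emph{control} qutrit unconstrained when the target is in $\ket{2}$, and that is where the odd parity must go. The paper's circuit does this with controlled shifts in both directions, conjugated by $X_{12}$ on the control qutrit. In between, it applies:
\begin{itemize}
\item a $\ket{2}$-controlled $X_{-1}$ from control to target;
\item a $\ket{1}$-controlled $X_{-1}$ from target to control;
\item a $\ket{1}$-controlled $X_{-1}$ from control to target;
\item a $\ket{1}$-controlled $X_{+1}$ from target to control;
\item the Clifford $\Lambda$-controlled $X_{-1}$ from control to target.
\end{itemize}
Tracing the six rows with binary target confirms the $\ket{2}$-controlled qubit X. On the remaining rows it maps $\ket{1,2}\mapsto\ket{2,2}$ and $\ket{2,2}\mapsto\ket{1,2}$, i.e.\ it applies $X_{12}$ to the control exactly when the target is $\ket{2}$. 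This supplies the odd parity your family cannot produce. The four $\ket{k}$-controlled shifts cost 3 T gates each, giving 12. Your T-count bookkeeping and your reading of ``fault-tolerantly'' are fine. To complete the proof, though, you must let the target act back on the control qutrit, as this circuit does.
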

\begin{proof}
    The decomposition is as follows.  Recall that each of the 4 $\ket{2}$-controlled $X_{\pm 1}$ gates has a T-count of 3.
    \begin{equation}
        \input{qutrit/tcd01qutrit-qubit.tikz}
    \end{equation}
\end{proof}
\begin{corollary}
    By Equation~\eqref{eq:toffdecompcliffordplust}, the qubit CCX gate can be emulated unitarily without ancillae using Clifford+T gates, with T-count 12.
\end{corollary}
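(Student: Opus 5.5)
The corollary follows by substitution into the Bocharov--Roetteler--Svore emulation in Equation~\eqref{eq:toffdecompcliffordplust}, followed by a count of non-Clifford gates. That circuit emulates the qubit CCX on three qutrits as a sandwich. The outer layer consists of two two-qutrit gates acting on $q_0,q_1$: a $\ket{1}$-controlled $X_{+1}$ and its uncomputation. The middle is a single $\ket{2}$-controlled gate from $q_1$ to the target $q_2$, whose job is to apply the qubit $X$ exactly when both controls were $1$.

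\textbf{Step 1: the outer pair is Clifford.} By the observation preceding Equation~\eqref{eq:toffdecompcliffordplust}, the action on inputs containing $\ket{2}$ is irrelevant for emulation. So, following Equation~\eqref{eq:qutritcx} and its $X_{-1}$ analogue, the outer controlled-$X_{\pm1}$ gates may be replaced by qutrit CX / CX$^\dagger$ gates. These are Clifford and contribute T-count $0$.

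On binary inputs, the control wire $q_1$ then holds $q_0+q_1 \in \{0,1,2\}$. This value equals $2$ if and only if $q_0=q_1=1$, and the outer pair restores $q_1$ afterwards. I will verify this on the four binary input pairs.

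\textbf{Step 2: the middle gate costs 12.} The middle gate must act as the qubit $X$ on the $\{\ket{0},\ket{1}\}$ subspace of $q_2$. By Proposition~\ref{prop:tcd01qutrit-qubit}, the $\ket{2}$-controlled qubit X gate is emulated ancilla-free with T-count $12$, using four $\ket{2}$-controlled $X_{\pm1}$ gates at T-count $3$ each. Because the target only ever receives binary inputs, it suffices for this gate to emulate, rather than equal, $X_{01}$ under the $\ket{2}$ control. That is exactly what Proposition~\ref{prop:tcd01qutrit-qubit} provides, and it is what lowers the count from 15 to 12 without needing the ancilla of~\cite{bocharov_factoring_2017}.

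\textbf{Step 3: totals and the main obstacle.} Summing gives T-count $0+12+0=12$, no ancillae, and a unitary circuit. The one point needing care is that emulation composes across the three stages. The intermediate value $q_0+q_1$ on $q_1$ may be $2$, which lies outside the binary domain. This is harmless because the middle gate only uses $q_1$ as a control and leaves it unchanged, so its behaviour on the $\ket{2}$ value of $q_1$ is precisely the specified control action. Meanwhile the target $q_2$ stays binary throughout, which is the only condition under which Proposition~\ref{prop:tcd01qutrit-qubit}'s emulation guarantee is needed. A brief truth-table check over the eight binary inputs then confirms the corollary.
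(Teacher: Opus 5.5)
Your proposal is correct and matches the paper's argument: the outer $\Lambda$-controlled gates in Equation~\eqref{eq:toffdecompcliffordplust} are the Clifford CX and CX$^\dagger$, and substituting the T-count-12 $\ket{2}$-controlled qubit X of Proposition~\ref{prop:tcd01qutrit-qubit} for the middle gate gives T-count $12$ with no ancillae. One small slip, which does not affect correctness: Equation~\eqref{eq:toffdecompcliffordplust} already uses $\Lambda$-controlled $X_{\pm1}$ gates rather than $\ket{1}$-controlled ones, so the replacement in your Step~1 is the passage from Equation~\eqref{eq:toffdecomp}, not a step needed here.
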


For the CCZ gate, from~\cite[Proposition 5.5]{wetering_building_2023}:
\begin{proposition}
    The qubit CCZ gate can be emulated unitarily without ancillae using Clifford+R gates, with R-count 3 and Clifford CX-count 5.
\end{proposition}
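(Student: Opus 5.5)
The plan is to use the fact that the qubit CCZ gate is Clifford-equivalent to the qubit CCX gate, and to exploit qutrit diagonal gates that act only on the qubit subspace. The goal is a two-qutrit-interaction-light circuit whose action on inputs $\ket{x,y,z}$ with $x,y,z\in\{0,1\}$ is $(-1)^{xyz}\ket{x,y,z}$. On the binary subspace, $xyz=1$ exactly when $x+y+z=3$. The idea is therefore to compute a ternary parity from which this condition can be read off, apply a single-qutrit phase, and then uncompute.

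The concrete choice, following Proposition 5.5 of \cite{wetering_building_2023}, proceeds in five steps.

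\textbf{Step 1: first partial sum.} Apply a qutrit CX from $x$ onto $y$, mapping $y\mapsto x+y \pmod 3$. On binary inputs this takes values in $\{0,1,2\}$, with value $2$ exactly when $x=y=1$.

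\textbf{Step 2: branching on the partial sum.} The target phase $-1$ should be applied iff $x+y=2$ and $z=1$. This is a $\ket{2}$-controlled qubit $Z$ from the second qutrit onto the third.

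\textbf{Step 3: realising the controlled $Z$ with one $R$ gate.} By Lemma~\ref{lem:qubitcz}, a qubit CZ on two qubit-valued qutrits needs one $R$ gate. It is realised by a CX that computes a sum modulo $3$, a Clifford-conjugated $R$ that flags the sum value $2$, and then the uncomputing CX. Here the control is the $\ket{2}$ value, not the $\ket{1}$ value, so the control qutrit is first mapped by a Pauli $X$ so that $2\mapsto 1$ (equivalently, adjust the target value of the $R$ conjugation). The non-binary control values $0,1$ must produce a trivial phase for every $z\in\{0,1\}$. The plan is to check this directly with the same sum-modulo-$3$ case analysis used in the proof of Lemma~\ref{lem:qubitcz}.

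\textbf{Step 4: the gate count.} If the single $R$ trick does not cover all control values, use the construction of \cite{wetering_building_2023}. That construction builds the $\ket{2}$-controlled $Z$ from the phase-polynomial identity for a controlled diagonal gate, using three $R$ gates and Clifford CX gates. Taking the overall structure as given there yields R-count $3$. The CX-count is $5$: two CX gates compute and uncompute $x+y$, and the diagonal part uses three CX gates, as in the qubit CZ entry of Table~\ref{table:countstable}.

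\textbf{Step 5: verification.} Uncompute with $CX^\dagger$ to restore $y$. Then verify on all eight binary inputs using a truth table in the style of Table~\ref{table:andtruth}, checking that the phase is $-1$ only on $\ket{111}$.

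The main obstacle is Step 3/4: confirming that the $\ket{2}$-controlled qubit-$Z$ on the control value $x+y$ can be done with exactly three $R$ gates and three CX gates. This requires checking that the phases on control values $0$ and $1$ cancel when $z\in\{0,1\}$. I would first attempt this via a case split on the control value, mirroring the sum-modulo-$3$ argument of Lemma~\ref{lem:qubitcz}. Failing that, I would cite the ZX phase-gadget derivation of \cite{wetering_building_2023} directly.
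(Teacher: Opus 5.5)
Your final construction is exactly the paper's proof: a CX from $x$ onto $y$, then the $\ket{2}$-controlled qubit $Z$ of \cite[Corollary 5.4]{wetering_building_2023} (R-count 3, CX-count 3) from the second qutrit onto the third, then $CX^\dagger$ to uncompute, for R-count $3$ and CX-count $1+3+1=5$. The single-$R$ shortcut you float in Step 3 cannot work, because no affine form $ac+bz+k \bmod 3$ takes a value at $(c,z)=(2,1)$ that it does not also take at another pair with $c\in\{0,1,2\}$, $z\in\{0,1\}$, so falling back to the three-$R$ construction is necessary rather than optional.
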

\begin{proof}
    Consider the qubit CCX gate emulation in Equation~\eqref{eq:toffdecompcliffordplust}.  If instead the controlled gate on the target is the $\ket{2}$-controlled qubit $Z$ gate in~\cite[Corollary 5.4]{wetering_building_2023}, which has R-count 3 and Clifford CX-count 3, the emulated gate is the qubit CCZ gate.
\end{proof}

The most straightforward way to generalize this to the qubit multiple-controlled Toffoli is the following linear gate count and linear depth decomposition:
\begin{proposition}
    The technique of emulating the qubit Toffoli can be generalized to $n$ controls, and for any unitary U which can be $\ket{2}$-controlled.
\end{proposition}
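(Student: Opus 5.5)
We generalize the Gokhale et al. construction in Equation~\eqref{eq:toffdecomp}, proving by induction on $n$ that the qubit $C^nU$ gate (apply $U$ to the target iff all $n$ binary controls are $1$) is emulated by a ladder of $\ket{1}$-controlled qutrit increments followed by a single $\ket{2}$-controlled $U$. The invariant is this: after the compute ladder, a designated control qutrit sits in $\ket{2}$ if and only if all binary controls were $1$, and otherwise it sits in $\{\ket{0},\ket{1}\}$.

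\emph{Compute ladder.} For controls $q_0,\dots,q_{n-1}$, apply $\ket{1}$-controlled $X_{+1}$ from $q_0$ to $q_1$. As in the Toffoli case, $q_1$ becomes $\ket{2}$ exactly when $q_0=q_1=1$. It becomes $\ket{1}$ when exactly one of them is $1$, and stays $\ket{0}$ when both are $0$. Next apply $\ket{2}$-controlled $X_{+1}$ from $q_1$ to $q_2$. The inductive claim is that after step $i$, qutrit $q_i$ is $\ket{2}$ iff $q_0=\dots=q_i=1$. This holds because $q_i$ can only reach $2$ from an initial $1$ that is incremented, and the increment happens only when $q_{i-1}=2$.

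\emph{Target and uncompute.} Apply the $\ket{2}$-controlled $U$ from $q_{n-1}$ to the target. Then reverse the ladder using the inverse gates ($X_{-1}$ in place of $X_{+1}$), so that every control returns to its binary input value. Correctness on the binary subspace follows directly from the invariant. Since the controls are restored, any unitary $U$ that admits a $\ket{2}$-controlled version can be placed on the target.

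\emph{Remaining points.} The induction is routine. The main obstacle is the uncompute step. There, the $\ket{2}$-controlled increments act on intermediate states in which earlier qutrits may hold the value $2$ instead of a binary value, so one must check that the inverse ladder exactly restores every basis state reachable from binary inputs. I would handle this by noting that each ladder gate is a permutation that is its own inverse's mirror, so the reversed ladder undoes it regardless of intermediate values.

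Gate counts then follow by substitution. There are $2(n-1)$ controlled increments, each with T-count $3$ (the first pair can be replaced by Clifford CX gates, as in Equation~\eqref{eq:qutritcx}), plus the cost of the final $\ket{2}$-controlled $U$. This reproduces the linear entries in Table~\ref{table:countstable}.
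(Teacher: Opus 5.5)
Your proposal is correct and is essentially the paper's proof: the paper's circuit is exactly this ladder (a Clifford $\Lambda$-controlled $X_{+1}$ from $q_0$ to $q_1$, which is what your $\ket{1}$-controlled first gate becomes after the CX replacement, then $\ket{2}$-controlled $X_{+1}$ gates, the $\ket{2}$-controlled $U$, and the mirrored $X_{-1}$ ladder), giving T-count $6(n-2)$ plus the cost of the controlled $U$; your invariant spells out the correctness check that the paper leaves to the figure. The uncompute step you flag is not really an obstacle: the reversed ladder is literally $L^{-1}$, and the $\ket{2}$-controlled $U$ never changes the computational-basis label of any control qutrit, so $L^{-1}(C_2U)L$ restores the controls on every basis input with no case analysis needed.
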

\begin{proof}
    \begin{equation}
        \label{eq:qubitcnu}
        \input{qutrit/qubitcnu.tikz}
    \end{equation}
    As each $\ket{2}$-controlled $X_{\pm 1}$ gate is comprised of three qutrit T gates, the T-count is $6(n-2)$ plus the cost of implementing the $\ket{2}$-controlled unitary.    
\end{proof}
\begin{corollary}
    As the qubit Toffoli gate $\ket{2}$-controlled qubit X gate can be constructed with T-count 12 according to Proposition~\ref{prop:tcd01qutrit-qubit}, the qubit generalized Toffoli with $n$ controls can be unitarily emulated without ancillae using qutrit Clifford+T gates, with T-count of exactly $6n$.
\end{corollary}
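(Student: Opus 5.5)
The statement to prove is the final corollary: the $n$-controlled qubit Toffoli can be emulated unitarily, without ancillae, in qutrit Clifford+T with T-count exactly $6n$. The plan is to instantiate the general decomposition of Equation~\eqref{eq:qubitcnu} with $U$ the qubit $X$ gate. For the $\ket{2}$-controlled $X$ we use the T-count-12 construction of Proposition~\ref{prop:tcd01qutrit-qubit}, and then add up the T gates. The preceding proposition already gives the cost of the ladder as $6(n-2)$ plus the cost of the final $\ket{2}$-controlled unitary.

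\textbf{Correctness.} First I will check that the ladder in Equation~\eqref{eq:qubitcnu} is a valid emulation on binary inputs. It is built from $\ket{1}$-controlled $X_{+1}$ and $\ket{2}$-controlled $X_{-1}$ pairs (or their Clifford-CX replacements as in Equation~\eqref{eq:qutritcx}). Following Equation~\eqref{eq:toffdecomp}, each stage raises an accumulator qutrit to $\ket{2}$ exactly when both its incoming binary control and the previous accumulator are ``on''. By induction on the number of controls, the last control wire holds $\ket{2}$ iff all $n$ controls are $\ket{1}$. The mirrored uncomputation then restores every control wire to its original binary value.

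\textbf{Final gate.} The central $\ket{2}$-controlled gate must act as the qubit $X$ on the target's $\{\ket{0},\ket{1}\}$ subspace. Proposition~\ref{prop:tcd01qutrit-qubit} provides exactly this, since its gate emulates qubit $X$ rather than merely permuting qutrit levels.

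\textbf{Counting.} The compute and uncompute ladders together contain $2(n-2)$ nontrivial controlled-$X_{\pm1}$ stages. Each stage is a $\ket{2}$-controlled $X_{\pm 1}$ of T-count 3 (by~\cite{bocharov_factoring_2017}, as used in Lemma~\ref{lemma:binaryand}), which gives $6(n-2)$. The first stage on the top two binary controls needs no T gates, because Equation~\eqref{eq:qutritcx} implements it with Clifford CX gates alone. Adding the 12 T gates from Proposition~\ref{prop:tcd01qutrit-qubit} gives $6(n-2)+12 = 6n$. No ancillae are introduced, because the accumulators are the control wires themselves, using their third level.

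\textbf{Main obstacle.} The delicate step is the bookkeeping of which stages are genuinely T-free Clifford CX replacements and which cost 3 T gates. The total must be exactly $6(n-2)$, with no extra stage at the boundary. I would verify this on the cases $n=2$ (total 12, matching the Toffoli corollary) and $n=3$ (total 18) directly from the circuit, and then argue by induction that each added control contributes exactly one compute and one uncompute stage at 3 T gates each.
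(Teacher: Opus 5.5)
Your proposal is correct and follows the paper's argument. You instantiate the ladder of Equation~\eqref{eq:qubitcnu} with $U$ the qubit $X$, observe that only the $2(n-2)$ $\ket{2}$-controlled $X_{\pm 1}$ stages cost 3 T gates each (the top-pair stages are a Clifford CX and CX$^\dagger$), and add the 12 T gates of Proposition~\ref{prop:tcd01qutrit-qubit} to get $6(n-2)+12=6n$. One harmless slip: your correctness paragraph describes the stages below the top pair as a $\ket{1}$-controlled $X_{+1}$ paired with a $\ket{2}$-controlled $X_{-1}$, but they are actually $\ket{2}$-controlled $X_{+1}$ (compute) and $\ket{2}$-controlled $X_{-1}$ (uncompute), since the accumulator is ``on'' at $\ket{2}$; your counting paragraph already uses the right gates.
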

This is an improvement over~\cite[Corollary 9]{bocharov_factoring_2017}, which emulates the qubit CCCNOT gate (i.e. $n = 3$) with T-count 18 using two clean ancillas, or T-count 21 using one clean ancilla.

In logarithmic depth, qubit multiple-controlled Toffoli gate can be built from qutrit Toffoli gates~\cite{gokhale_asymptotic_2019,baker2020improved}.
Gokhale et.~al.~\cite{gokhale_asymptotic_2019} leveraged the $\ket{22}$-controlled gate to construct a qutrit emulation of the qubit Toffoli with $n$ controls, with circuit depth logarithmic in $n$.  We reprint their decomposition in Figure~\ref{fig:logdepthcnuemulation}.  They cited the decomposition for the $\ket{22}$-controlled $X$ gates by Di and Wei, in terms of qubit Clifford+T gates pairwise between the three qutrit computational basis states.  As far as we know, a protocol to simultaneously error correct for these three gate sets has not been formulated.

\begin{figure}[H]
    \begin{subfigure}[b]{0.49\textwidth}
      \tikzfigscale{0.7}{qutrit/qubitcnulog-qcirc}
      \caption{Qubit Generalized Toffoli gate emulation using intermediate qutrits by Gokhale et.~al.~\cite[Figure 5]{gokhale_asymptotic_2019}.}
      \label{fig:logdepthcnuemulation}
    \end{subfigure}
    \begin{subfigure}[b]{0.49\textwidth}
        \tikzfigscale{0.7}{qutrit/qubitcnulog}
        \caption{We modified the left circuit to optimize T-count, given that we presently have a lower T-count decomposition for the $\ket{22}$-controlled $X_{12}$ gate than the $\ket{22}$-controlled $X_{+1}$ gate.}
        \label{fig:logdepthcnuemulationv2}
    \end{subfigure}
  \end{figure}

Using the decomposition for $\ket{22}$-controlling any qutrit unitary $V$ from~\cite[Lemma 1]{yeh_qutrit_ctrl_2022} to build $\ket{22}$-controlled $X_{12}$ in 51 T gates, we adapt their decomposition in Figure~\ref{fig:logdepthcnuemulation} to the qutrit Clifford+T gateset.  We thereby achieve linear qutrit T-count for the qubit generalized Toffoli gate, whilst retaining the desirable properties of logarithmic circuit depth and no logical ancillae needed (i.e. only using those necessary for magic state injection).  We find it to be more efficient in qutrit T-count to modify each of their $\ket{22}$-controlled $X_{+1}$ gates to $\ket{22}$-controlled $X_{12}$ gates, for a constant factor improvement in lowering the T-count.  To further decrease the resource requirements, we note that all boundary (either the inputs or outputs are qubits) $\ket{22}$-controlled gates in this decomposition are overconstrained; substituting them with our $CCX_{+1}$ unitary decomposition in Equation~\eqref{eq:toffdecomp} with T-count 3 suffices.

Despite this decomposition being linear in T-count, the coefficient of linearity is quite high, given that our decomposition for the $\ket{22}$-controlled $X_{12}$ gate has T-count 51.  It is the first fault-tolerant decomposition for it and we have not yet attempted to optimize this, reason being that there is a far more efficient construction using the $n$-ary AND gate familiar from binary classical computing, implemented on superconducting qutrits in~\cite{chu_scalable_2023}, and resynthesized here in qutrit Clifford+T.

\begin{proposition}
    \label{prop:narybinaryand}
    Consider the $n$-ary binary AND gate, which returns 1 iff all binary inputs are 1, and 0 otherwise.  It can be emulated unitarily by Clifford+T gates without ancillae, in log($n$) depth, with T-count $3n-3$ and Clifford CX-count $4n-4$.
\end{proposition}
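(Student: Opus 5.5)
The plan is to build the $n$-ary AND as a balanced binary tree of the two-input AND emulations of Lemma~\ref{lemma:binaryand}, and then count gates. Let $U$ be the two-qutrit Clifford+T unitary of Equation~\eqref{eq:binaryand}. By Table~\ref{table:andtruth}, whenever both inputs are binary ($a,b\in\{0,1\}$), the designated output wire of $U$ carries $a\land b\in\{0,1\}$. The other wire may carry garbage, possibly a $\ket{2}$; this is allowed because emulation only constrains the binary AND output.

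Given $n$ binary inputs, we pair them up and apply one copy of $U$ to each pair in parallel. The designated output wires of these copies then carry the pairwise ANDs, which are again binary. Because those outputs are binary, the lemma applies again at the next level. We recurse on the output wires, carrying an unpaired wire forward unchanged when the count is odd.

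\textbf{Correctness.} We argue by induction on the tree, using associativity of AND. The key invariant is that every wire fed into a copy of $U$ is a designated output from the previous level, so it lies in $\{0,1\}$. Hence the unconstrained shaded rows of Table~\ref{table:andtruth} are never exercised on any wire that matters. The root's output wire therefore carries $x_1\land\cdots\land x_n$, and the whole circuit is unitary since it is a composition of unitaries.

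\textbf{Counts.} A binary tree with $n$ leaves has exactly $n-1$ internal nodes, so we use $n-1$ copies of $U$. Each copy has T-count $3$ and Clifford CX-count $4$ (Lemma~\ref{lemma:binaryand}), giving T-count $3(n-1)=3n-3$ and CX-count $4(n-1)=4n-4$. Each copy is a single two-qutrit gate, which matches the $n-1$ entry of Table~\ref{table:countstable}. No ancillae are introduced, since every intermediate value lives on an input wire. The tree has $\lceil\log_2 n\rceil$ levels, and gates within a level act on disjoint wires. Each $U$ has constant depth, so the total depth is $O(\log n)$.

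\textbf{Main obstacle.} The step needing care is the invariant that every intermediate value fed to the next $U$ is binary. The proof relies on $U$ being an emulation only on the binary subspace, so its garbage wire must never re-enter the computation. We will therefore check explicitly against the left table of Table~\ref{table:andtruth} which wire of Equation~\eqref{eq:binaryand} holds $a\land b$ (in the left table this is column~$b$'s wire in binary rows), and route only that wire upward in the tree.
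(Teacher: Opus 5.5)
Your proposal is correct and is essentially the paper's proof. The paper also applies the T-count-3 two-qutrit AND emulation of Lemma~\ref{lemma:binaryand} $n-1$ times as a binary tree (drawn for $n=7$, with each AND'd output on the wire of input $b$ and only those wires feeding the next level), giving T-count $3n-3$, CX-count $4n-4$ and $\lceil \log_2 n\rceil$ layers, and your explicit invariant that garbage wires never re-enter the tree (so the unconstrained rows of Table~\ref{table:andtruth} are never exercised) is a step the paper leaves implicit.
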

\begin{proof}
This follows from the equivalent well-known construction in classical reversible computation. The idea is to apply the T-count $3$ emulation of the binary AND gate from Lemma~\ref{lemma:binaryand} $n-1$ times, as a binary tree structure.  This generalizes to any positive integer $n$. The $n=7$ case is shown below:
\begin{equation}
    \input{qutrit/7-arybinaryandud.tikz}
\end{equation}
\end{proof}
At first glance, it may seem that with the exception of the AND'd output, the other outputs can be disregarded.  However, they are necessary for the gate as a whole to be unitary and thus reversible.  At the end of this section, we will demonstrate the usefulness of being able to compute and uncompute the $n$-ary binary AND gate.

This means that the multiple-controlled qubit Toffoli admits the following decomposition when $n$-ary AND is available~\cite{chu_scalable_2023}, for which we computed the T-count:
\begin{theorem}
    The qubit C$^n$ X gate can be emulated unitarily without ancillae in qutrit Clifford+T gates in O(log($n$)) depth, with T-count exactly $6n$ and Clifford CX-count $8n$.
\end{theorem}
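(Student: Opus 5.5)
We follow the compute--act--uncompute pattern of Chu et al.~\cite{chu_scalable_2023}. First apply the $n$-ary binary AND of Proposition~\ref{prop:narybinaryand} to the $n$ control qutrits. This uses a binary tree of $n-1$ copies of the T-count~3 AND emulation from Lemma~\ref{lemma:binaryand}. On binary inputs, the designated output qutrit is then $\ket{1}$ exactly when all controls are $\ket{1}$, and $\ket{0}$ otherwise. The other $n-1$ wires hold intermediate garbage values, which are all in $\{0,1,2\}$ but are determined reversibly by the inputs.

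Next apply a singly controlled qubit-X emulation from this AND output to the target qutrit. We use the $\ket{1}$-controlled qubit X, obtained from the $\ket{2}$-controlled qubit X of Proposition~\ref{prop:tcd01qutrit-qubit} by conjugating the control with the Clifford $X_{12}$, or equivalently from the CX emulation of Lemma~\ref{lemma:qubitcx}. Either option has T-count at most~12. Since the AND output only takes binary values on binary inputs, it does not matter how this gate acts when the control is in $\ket{2}$.

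Finally, uncompute with the inverse of the $n$-ary AND circuit. This is the adjoint of a Clifford+T circuit with the same T-count, because $T^\dagger$ counts as a T gate. It restores every control wire, including the garbage wires, to its original binary value. The composite therefore acts as the identity on the controls and as qubit X on the target exactly when all controls are $1$, which is what it means to emulate C$^n$X. No ancillae are introduced, since the AND tree works in place on the control qutrits.

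\textbf{Counts and depth.} The T-count is $2(3n-3)+c$, where $c$ is the T-count of the controlled qubit X. With $c=6$ from Lemma~\ref{lemma:qubitcx} (the $\ket{1}$-controlled form obtained by conjugation), this gives exactly $6n$. The CX-count is $2(4n-4)+8=8n$, also matching the statement. The depth is twice the $O(\log n)$ tree depth plus a constant, so it is $O(\log n)$. The main obstacle is justifying that the controlled-X step costs only $6$ T gates and $8$ CX gates on the AND-output wire. The CX emulation of Lemma~\ref{lemma:qubitcx} is stated for a qubit control, but the AND output may be $\ket{2}$ only on non-binary inputs, which we do not care about. So I would check, using the truth table in Table~\ref{table:andtruth}, that the output wire of the tree stays in $\{0,1\}$ on all binary inputs. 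Once that holds, Lemma~\ref{lemma:qubitcx} applies directly and the counts $6n$ and $8n$ follow.
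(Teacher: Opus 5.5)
Your proposal is correct and follows the paper's proof. You compute the $n$-ary AND of Proposition~\ref{prop:narybinaryand}, apply the qubit CX emulation of Lemma~\ref{lemma:qubitcx} (T-count 6, CX-count 8) from the AND'd output, and uncompute, for $2(3n-3)+6=6n$ T gates and $2(4n-4)+8=8n$ CX gates; that output is binary on binary inputs because the tree only feeds AND outputs forward. Drop the ``either option'' alternative, though: the $X_{12}$-conjugated gate of Proposition~\ref{prop:tcd01qutrit-qubit} has T-count 12 and would give $6n+6$, so only Lemma~\ref{lemma:qubitcx} attains the stated counts.
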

\begin{proof}
    First apply the $n$-ary binary AND gate emulation of Proposition~\ref{prop:narybinaryand} with T-count $3(n-1)$ and Clifford CX-count $4(n-1)$, where the bottommost output wire is the AND'd output.  Then, we apply the qubit CX gate emulation of Lemma~\ref{lemma:qubitcx} controlled on the AND'd output, targeting the target of the C$^n$ X gate, with T-count 6 and Clifford CX-count 8.  Finally, we uncompute the $n$-ary binary AND gate by applying its inverse, also with T-count $3(n-1)$ and Clifford CX-count $4(n-1)$.
    \begin{equation}
        \input{qutrit/qubitcnuandud.tikz}
    \end{equation}
    The T-count totals exactly $6n$ and the Clifford CX-count totals exactly $8n$.
\end{proof}
\begin{corollary}
    This construction can be generalized to gates of the form C$^n$ U, controlled on $\ket{1}^{\otimes n}$ for qubit inputs, where U is any target unitary acting on $m$ qudits for which a controlled version, for some type of control, can be built.
\end{corollary}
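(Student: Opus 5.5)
The plan is to reuse the compute--act--uncompute skeleton from the proof of the preceding theorem, and only change the middle layer. Let $\mathrm{AND}_n$ denote the emulation of the $n$-ary binary AND gate from Proposition~\ref{prop:narybinaryand}. On binary inputs $\ket{x_1\cdots x_n}$, this circuit acts as a qutrit classical reversible permutation. It places $\bigwedge_i x_i\in\{0,1\}$ on a designated output wire and some deterministic garbage $g(x)$ on the remaining $n-1$ wires. Given any controlled version $\Lambda(U)$ of $U$ (for instance $\Lambda$-controlled, or $\ket{c}$-controlled for some $c$), I apply $\mathrm{AND}_n$, then $\Lambda(U)$ with control on the AND'd wire and targets the $m$ target qudits, then $\mathrm{AND}_n^\dagger$.

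The first step is to fix the control convention so that the middle gate acts as $U$ exactly when the AND'd wire is $\ket{1}$, and as the identity when it is $\ket{0}$. For a $\ket{c}$-controlled $U$, conjugate the control wire by the Clifford permutation that sends $1\mapsto c$ and fixes $0$: for $c=0$ use $X_{01}$, for $c=2$ use $X_{12}$, and for $c=1$ nothing is needed. For a $\Lambda$-controlled $U$, the gate acts as $U^{i}$ for control value $i$, so on the binary value $\{0,1\}$ it already gives $I$ or $U$. In either case only the binary values of the AND'd wire matter, since the wire carries $0$ or $1$ on binary inputs.

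The second step is correctness on basis inputs $\ket{x}\ket{\psi}$ with $x\in\{0,1\}^n$. Because $\mathrm{AND}_n$ is a permutation of basis states, the state after it is $\ket{g(x),\bigwedge x}\ket{\psi}$. The middle gate maps this to $\ket{g(x),\bigwedge x}\,U^{\bigwedge x}\ket{\psi}$, since it acts only on the target register and leaves the control wires untouched. Then $\mathrm{AND}_n^\dagger$ restores $\ket{x}$, giving $\ket{x}\,U^{[x=1^n]}\ket{\psi}$. By linearity this holds on superpositions of binary control inputs, which is exactly C$^n$U controlled on $\ket{1}^{\otimes n}$. Inputs containing $\ket{2}$ are unconstrained by the definition of emulation, so nothing needs checking there.

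The main subtlety is to confirm that the garbage wires are genuinely restored for every binary input. The danger is that the middle gate might entangle the target register with the AND'd wire in a way $\mathrm{AND}_n^\dagger$ cannot undo. This is ruled out because the control is diagonal in the computational basis of the AND'd wire, so that wire's basis value is never changed. Finally, I will record the resource count: two copies of $\mathrm{AND}_n$ contribute T-count $6(n-1)$ and CX-count $8(n-1)$, plus whatever the chosen controlled-$U$ costs, and the depth stays $O(\log n)$ plus the depth of $\Lambda(U)$.
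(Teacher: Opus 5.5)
Your proposal is correct and is the same compute--act--uncompute argument the paper intends: apply the $n$-ary AND emulation of Proposition~\ref{prop:narybinaryand}, apply the controlled-$U$ with control on the AND'd output (in place of the qubit CX emulation used in the preceding theorem), and uncompute with the inverse AND. One small slip: for $c=0$ no permutation can send $1\mapsto 0$ while fixing $0$; all you need is that $1$ is sent to $c$ and $0$ is not, which $X_{01}$ satisfies, so the argument is unaffected.
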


%% file: figures/qutrit/qubitcqubitx.tikz
\begin{tikzpicture}
	\begin{pgfonlayer}{nodelayer}
		\node [style=small dot] (0) at (-7, 1) {};
		\node [style=targ] (1) at (-7, -1) {$+$};
		\node [style=none] (2) at (-5.5, -0.125) {$=$};
		\node [style=vertex set] (3) at (-3.25, -1) {$\Lambda$};
		\node [style=box] (4) at (-3.25, 1) {$X_{-1}$};
		\node [style=box] (5) at (-1, -1) {$X_{12}$};
		\node [style=box] (6) at (-1, 1) {$X_{12}$};
		\node [style=1 control] (7) at (1.25, 1) {\textsf{1}};
		\node [style=box] (8) at (1.25, -1) {$X_{-1}$};
		\node [style=1 control] (9) at (3.5, -1) {\textsf{1}};
		\node [style=box] (10) at (3.5, 1) {$X_{+1}$};
		\node [style=none] (11) at (6.5, 1) {};
		\node [style=none] (12) at (4.75, 1) {};
		\node [style=none] (13) at (4.75, -1) {};
		\node [style=none] (14) at (6.5, -1) {};
		\node [style=box] (15) at (7.75, -1) {$X_{12}$};
		\node [style=box] (16) at (7.75, 1) {$X_{12}$};
		\node [style=vertex set] (17) at (10, -1) {$\Lambda$};
		\node [style=box] (18) at (10, 1) {$X_{+1}$};
		\node [style=none] (19) at (11.5, 1) {};
		\node [style=none] (20) at (11.5, -1) {};
		\node [style=none] (21) at (-4.75, 1) {};
		\node [style=none] (22) at (-4.75, -1) {};
		\node [style=none] (23) at (-6.25, 1) {};
		\node [style=none] (24) at (-6.25, -1) {};
		\node [style=none] (25) at (-7.75, 1) {};
		\node [style=none] (26) at (-7.75, -1) {};
	\end{pgfonlayer}
	\begin{pgfonlayer}{edgelayer}
		\draw (0) to (1);
		\draw [style=none] (3) to (4);
		\draw [style=none] (7) to (8);
		\draw [style=none] (9) to (10);
		\draw [in=0, out=-180] (14.center) to (12.center);
		\draw [in=-180, out=0] (13.center) to (11.center);
		\draw [style=none] (17) to (18);
		\draw [style=none] (21.center) to (4);
		\draw [style=none] (4) to (6);
		\draw [style=none] (6) to (7);
		\draw [style=none] (7) to (10);
		\draw [style=none] (10) to (12.center);
		\draw [style=none] (11.center) to (16);
		\draw [style=none] (16) to (18);
		\draw [style=none] (18) to (19.center);
		\draw [style=none] (20.center) to (17);
		\draw [style=none] (17) to (15);
		\draw [style=none] (15) to (14.center);
		\draw [style=none] (13.center) to (9);
		\draw [style=none] (9) to (8);
		\draw [style=none] (8) to (5);
		\draw [style=none] (5) to (3);
		\draw [style=none] (3) to (22.center);
		\draw [style=none] (25.center) to (0);
		\draw [style=none] (0) to (23.center);
		\draw [style=none] (26.center) to (1);
		\draw [style=none] (1) to (24.center);
	\end{pgfonlayer}
\end{tikzpicture}

%% file: figures/qutrit/qubitcz.tikz
\begin{tikzpicture}
	\begin{pgfonlayer}{nodelayer}
		\node [style=none] (0) at (-3.5, 1) {};
		\node [style=none] (1) at (-3.5, -1) {};
		\node [style=none] (2) at (-5, 1) {};
		\node [style=none] (3) at (-5, -1) {};
		\node [style=small dot] (4) at (-4.25, 1) {};
		\node [style=small dot] (5) at (-4.25, -1) {};
		\node [style=none] (20) at (-2.5, 0) {$\emulates$};
		\node [style=none] (37) at (4, 1) {};
		\node [style=none] (38) at (4, -1) {};
		\node [style=vertex set] (45) at (-0.375, 1) {$\Lambda$};
		\node [style=box] (46) at (-0.375, -1) {$X_{+1}$};
		\node [style=none] (49) at (-1.5, 1) {};
		\node [style=none] (50) at (-1.5, -1) {};
		\node [style=vertex set] (54) at (3, 1) {$\Lambda$};
		\node [style=box] (55) at (3, -1) {$X_{-1}$};
		\node [style=box] (56) at (1.25, -1) {$R$};
	\end{pgfonlayer}
	\begin{pgfonlayer}{edgelayer}
		\draw [style=none, in=180, out=0] (2.center) to (4);
		\draw [style=none] (4) to (0.center);
		\draw [style=none] (3.center) to (5);
		\draw [style=none] (5) to (1.center);
		\draw [style=none] (4) to (5);
		\draw [style=none] (45) to (46);
		\draw [style=none] (54) to (55);
		\draw [style=none] (49.center) to (45);
		\draw [style=none] (50.center) to (46);
		\draw (45) to (54);
		\draw (54) to (37.center);
		\draw (38.center) to (55);
		\draw (46) to (56);
		\draw (56) to (55);
	\end{pgfonlayer}
\end{tikzpicture}

%% file: figures/qutrit/qubitplus.tikz
\begin{tikzpicture}
	\begin{pgfonlayer}{nodelayer}
		\node [style=none] (0) at (-6, 0) {};
		\node [style=none] (1) at (-6.625, 0) {$\ket{+}$};
		\node [style=2 control] (4) at (3, 0) {\textsf{2}};
		\node [style=box] (5) at (3, -2) {$X_{+1}$};
		\node [style=none] (8) at (6.5, 0) {};
		\node [style=none] (12) at (-5, 0) {};
		\node [style=none] (17) at (-3.25, 0) {$\emulates$};
		\node [style=none] (18) at (-0.5, -2) {};
		\node [style=none] (19) at (-1.125, -2) {$\ket{0}$};
		\node [style=none] (21) at (-0.5, 0) {};
		\node [style=none] (22) at (-1.125, 0) {$\ket{0}$};
		\node [style=box] (23) at (1, 0) {$H$};
		\node [style=none] (24) at (5, -2) {};
		\node [style=none] (25) at (5.625, -2) {$\bra{0}$};
	\end{pgfonlayer}
	\begin{pgfonlayer}{edgelayer}
		\draw (4) to (5);
		\draw (4) to (8.center);
		\draw [style=none] (0.center) to (12.center);
		\draw [style=none] (21.center) to (23);
		\draw [style=none] (23) to (4);
		\draw [style=none] (18.center) to (5);
		\draw [style=none] (5) to (24.center);
	\end{pgfonlayer}
\end{tikzpicture}

%% file: figures/qutrit/cxplus.tikz
\begin{tikzpicture}
	\begin{pgfonlayer}{nodelayer}
		\node [style=none] (2) at (-2.25, 0.75) {};
		\node [style=none] (3) at (-2.25, -0.75) {};
		\node [style=none] (4) at (0.5, -0.75) {};
		\node [style=none] (5) at (0.5, 0.75) {};
		\node [style=none] (14) at (-2.75, 0) {};
		\node [style=vertex set] (15) at (-0.875, 0.75) {$\Lambda$};
		\node [style=box] (16) at (-0.875, -0.75) {$X_{+1}$};
	\end{pgfonlayer}
	\begin{pgfonlayer}{edgelayer}
		\draw [style=none] (15) to (16);
		\draw [style=none] (2.center) to (15);
		\draw [style=none] (15) to (5.center);
		\draw [style=none] (4.center) to (16);
		\draw [style=none] (16) to (3.center);
	\end{pgfonlayer}
\end{tikzpicture}

%% file: figures/qutrit/cxminus.tikz
\begin{tikzpicture}
	\begin{pgfonlayer}{nodelayer}
		\node [style=none] (15) at (-2.25, 0.75) {};
		\node [style=none] (16) at (-2.25, -0.75) {};
		\node [style=none] (17) at (0.5, -0.75) {};
		\node [style=none] (18) at (0.5, 0.75) {};
		\node [style=none] (19) at (-2.75, 0) {};
		\node [style=vertex set] (20) at (-0.875, 0.75) {$\Lambda$};
		\node [style=box] (21) at (-0.875, -0.75) {$X_{-1}$};
	\end{pgfonlayer}
	\begin{pgfonlayer}{edgelayer}
		\draw [style=none] (20) to (21);
		\draw [style=none] (15.center) to (20);
		\draw [style=none] (20) to (18.center);
		\draw [style=none] (17.center) to (21);
		\draw [style=none] (21) to (16.center);
	\end{pgfonlayer}
\end{tikzpicture}

%% file: figures/qutrit/qubitccx.tikz
\begin{tikzpicture}
	\begin{pgfonlayer}{nodelayer}
		\node [style=none] (2) at (-2.25, 1.5) {};
		\node [style=none] (3) at (-2.25, 0) {};
		\node [style=none] (14) at (-2.75, 0) {};
		\node [style=vertex set] (15) at (-0.875, 1.5) {$\Lambda$};
		\node [style=box] (16) at (-0.875, 0) {$X_{+1}$};
		\node [style=none] (18) at (-2.25, -1.5) {};
		\node [style=none] (19) at (5, -1.5) {};
		\node [style=none] (20) at (3.625, 0) {};
		\node [style=2 control] (21) at (1.375, 0) {\textsf{2}};
		\node [style=box] (22) at (1.375, -1.5) {$X$};
		\node [style=none] (25) at (5, 1.5) {};
		\node [style=vertex set] (26) at (3.625, 1.5) {$\Lambda$};
		\node [style=box] (27) at (3.625, 0) {$X_{-1}$};
		\node [style=none] (28) at (5, 0) {};
	\end{pgfonlayer}
	\begin{pgfonlayer}{edgelayer}
		\draw [style=none] (15) to (16);
		\draw [style=none] (2.center) to (15);
		\draw [style=none] (16) to (3.center);
		\draw [style=none] (21) to (22);
		\draw [style=none, in=180, out=0] (21) to (20.center);
		\draw [style=none] (19.center) to (22);
		\draw [style=none] (22) to (18.center);
		\draw [style=none] (26) to (27);
		\draw [style=none] (26) to (25.center);
		\draw [style=none] (15) to (26);
		\draw [style=none] (16) to (21);
		\draw [style=none] (27) to (28.center);
	\end{pgfonlayer}
\end{tikzpicture}

%% file: figures/qutrit/tcd01qutrit-qubit.tikz
\begin{tikzpicture}
	\begin{pgfonlayer}{nodelayer}
		\node [style=2 control] (0) at (-11.75, 1) {\textsf{2}};
		\node [style=box] (1) at (-11.75, -1) {$X_{01}$};
		\node [style=none] (3) at (-9.75, 0) {=};
		\node [style=none] (14) at (6.25, -1) {};
		\node [style=none] (19) at (-8.75, 1) {};
		\node [style=none] (20) at (-8.75, -1) {};
		\node [style=none] (21) at (6.25, 1) {};
		\node [style=none] (26) at (-12.75, 1) {};
		\node [style=none] (27) at (-12.75, -1) {};
		\node [style=none] (28) at (-10.75, 1) {};
		\node [style=none] (29) at (-10.75, -1) {};
		\node [style=box] (31) at (-7.25, 1) {$X_{12}$};
		\node [style=box] (32) at (4.75, 1) {$X_{12}$};
		\node [style=2 control] (33) at (-5.25, 1) {\textsf{2}};
		\node [style=box] (34) at (-5.25, -1) {$X_{-1}$};
		\node [style=1 control] (35) at (-3.25, -1) {\textsf{1}};
		\node [style=box] (36) at (-3.25, 1) {$X_{-1}$};
		\node [style=1 control] (37) at (-1.25, 1) {\textsf{1}};
		\node [style=box] (38) at (-1.25, -1) {$X_{-1}$};
		\node [style=1 control] (39) at (0.75, -1) {\textsf{1}};
		\node [style=box] (40) at (0.75, 1) {$X_{+1}$};
		\node [style=box] (41) at (2.75, -1) {$X_{-1}$};
		\node [style=vertex set] (42) at (2.75, 1.025) {$\Lambda$};
	\end{pgfonlayer}
	\begin{pgfonlayer}{edgelayer}
		\draw [in=90, out=-90] (0) to (1);
		\draw (26.center) to (0);
		\draw (0) to (28.center);
		\draw (27.center) to (1);
		\draw (1) to (29.center);
		\draw (42) to (41);
		\draw [style=none] (19.center) to (31);
		\draw [style=none] (31) to (33);
		\draw [style=none] (33) to (36);
		\draw [style=none] (36) to (37);
		\draw [style=none] (37) to (40);
		\draw [style=none] (40) to (42);
		\draw [style=none] (42) to (32);
		\draw [style=none] (32) to (21.center);
		\draw [style=none] (14.center) to (41);
		\draw [style=none] (41) to (39);
		\draw [style=none] (39) to (38);
		\draw [style=none] (38) to (35);
		\draw [style=none] (35) to (34);
		\draw [style=none] (34) to (20.center);
		\draw [style=none] (33) to (34);
		\draw [style=none] (35) to (36);
		\draw [style=none] (37) to (38);
		\draw [style=none] (39) to (40);
	\end{pgfonlayer}
\end{tikzpicture}

%% file: figures/qutrit/qubitcnu.tikz
\begin{tikzpicture}
	\begin{pgfonlayer}{nodelayer}
		\node [style=none] (3) at (-12, -1.5) {};
		\node [style=2 control] (4) at (-7.25, 0) {\textsf{2}};
		\node [style=box] (5) at (-7.25, -1.5) {$X_{+1}$};
		\node [style=none] (8) at (-12, 3) {};
		\node [style=none] (10) at (-6, -1.5) {};
		\node [style=none] (12) at (-6, 3) {};
		\node [style=none] (14) at (-12, 0) {};
		\node [style=2 control] (15) at (-9, 1.5) {\textsf{2}};
		\node [style=box] (16) at (-9, 0) {$X_{+1}$};
		\node [style=none] (17) at (-12, 1.5) {};
		\node [style=none] (18) at (-6, 0) {};
		\node [style=none] (19) at (-6, 1.5) {};
		\node [style=none] (22) at (1.5, -1.5) {};
		\node [style=2 control] (23) at (-3.25, 0) {\textsf{2}};
		\node [style=box] (24) at (-3.25, -1.5) {$X_{-1}$};
		\node [style=none] (25) at (1.5, 3) {};
		\node [style=none] (26) at (-4.5, -1.5) {};
		\node [style=none] (27) at (-4.5, 3) {};
		\node [style=none] (28) at (1.5, 0) {};
		\node [style=2 control] (29) at (-1.5, 1.5) {\textsf{2}};
		\node [style=box] (30) at (-1.5, 0) {$X_{-1}$};
		\node [style=none] (31) at (1.5, 1.5) {};
		\node [style=none] (32) at (-4.5, 0) {};
		\node [style=none] (33) at (-4.5, 1.5) {};
		\node [style=2 control] (36) at (-5.25, -3.25) {\textsf{2}};
		\node [style=none] (37) at (-12, -4.75) {};
		\node [style=none] (38) at (-12, -3.25) {};
		\node [style=none] (39) at (1.5, -4.75) {};
		\node [style=none] (40) at (1.5, -3.25) {};
		\node [style=box] (41) at (-5.25, -4.75) {$U$};
		\node [style=none] (42) at (-7.5, -4.5) {};
		\node [style=none] (43) at (-7.75, -5) {};
		\node [style=none] (44) at (-2.5, -4.5) {};
		\node [style=none] (45) at (-2.75, -5) {};
		\node [style=none] (46) at (-7.5, -4.25) {$m$};
		\node [style=none] (47) at (-2.5, -4.25) {$m$};
		\node [style=none, rotate=-45] (48) at (-6, -2.5) {$\dots$};
		\node [style=none, rotate=45] (49) at (-4.5, -2.5) {$\dots$};
		\node [style=none] (51) at (-17, 3) {};
		\node [style=none] (52) at (-17, 0) {};
		\node [style=none] (53) at (-17, 1.5) {};
		\node [style=none] (54) at (-17, -4.75) {};
		\node [style=none] (55) at (-17, -3.25) {};
		\node [style=none] (57) at (-14, 3) {};
		\node [style=none] (58) at (-14, 0) {};
		\node [style=none] (59) at (-14, 1.5) {};
		\node [style=none] (60) at (-14, -4.75) {};
		\node [style=none] (61) at (-14, -3.25) {};
		\node [style=small dot] (62) at (-15.5, 3) {};
		\node [style=small dot] (63) at (-15.5, 1.5) {};
		\node [style=small dot] (64) at (-15.5, 0) {};
		\node [style=small dot] (65) at (-15.5, -3.25) {};
		\node [style=box] (66) at (-15.5, -4.75) {$U$};
		\node [style=none] (67) at (-16.25, -4.5) {};
		\node [style=none] (68) at (-16.5, -5) {};
		\node [style=none] (69) at (-16.25, -4.25) {$m$};
		\node [style=none] (70) at (-14.5, -4.5) {};
		\node [style=none] (71) at (-14.75, -5) {};
		\node [style=none] (72) at (-14.5, -4.25) {$m$};
		\node [style=none] (73) at (-15.5, -0.75) {};
		\node [style=none] (74) at (-15.5, -2.5) {};
		\node [style=none] (75) at (-15.5, -1.5) {$\vdots$};
		\node [style=none] (76) at (-13, -0.875) {$\emulates$};
		\node [style=vertex set] (82) at (-10.625, 3) {$\Lambda$};
		\node [style=box] (83) at (-10.625, 1.5) {$X_{+1}$};
		\node [style=vertex set] (84) at (0.125, 3) {$\Lambda$};
		\node [style=box] (85) at (0.125, 1.5) {$X_{-1}$};
		\node [style=none, rotate=-45] (86) at (-5.25, 0.75) {\rotatebox{45}{$\dots$}};
	\end{pgfonlayer}
	\begin{pgfonlayer}{edgelayer}
		\draw [style=none] (3.center) to (5);
		\draw [style=none, in=90, out=-90] (4) to (5);
		\draw [style=none] (5) to (10.center);
		\draw [style=none] (14.center) to (16);
		\draw [style=none, in=90, out=-90] (15) to (16);
		\draw [style=none] (15) to (19.center);
		\draw [style=none] (4) to (18.center);
		\draw (16) to (4);
		\draw [style=none] (22.center) to (24);
		\draw [style=none, in=90, out=-90] (23) to (24);
		\draw [style=none] (24) to (26.center);
		\draw [style=none] (28.center) to (30);
		\draw [style=none, in=90, out=-90] (29) to (30);
		\draw [style=none] (29) to (33.center);
		\draw [style=none] (23) to (32.center);
		\draw (30) to (23);
		\draw (38.center) to (36);
		\draw (36) to (40.center);
		\draw (36) to (41);
		\draw (37.center) to (41);
		\draw (41) to (39.center);
		\draw (44.center) to (45.center);
		\draw (43.center) to (42.center);
		\draw (65) to (66);
		\draw (68.center) to (67.center);
		\draw (71.center) to (70.center);
		\draw (55.center) to (65);
		\draw (65) to (61.center);
		\draw (60.center) to (66);
		\draw (66) to (54.center);
		\draw (52.center) to (64);
		\draw (64) to (58.center);
		\draw (59.center) to (63);
		\draw (63) to (53.center);
		\draw (62) to (51.center);
		\draw (62) to (57.center);
		\draw (62) to (63);
		\draw (63) to (64);
		\draw (73.center) to (64);
		\draw (74.center) to (65);
		\draw [style=none] (82) to (83);
		\draw [style=none] (84) to (85);
		\draw [style=none] (29) to (85);
		\draw [style=none] (85) to (31.center);
		\draw [style=none] (25.center) to (84);
		\draw [style=none] (84) to (27.center);
		\draw [style=none] (12.center) to (82);
		\draw [style=none] (82) to (8.center);
		\draw [style=none] (17.center) to (83);
		\draw [style=none] (83) to (15);
	\end{pgfonlayer}
\end{tikzpicture}

%% file: figures/qutrit/7-arybinaryandud.tikz
\begin{tikzpicture}
	\begin{pgfonlayer}{nodelayer}
		\node [style=none] (0) at (1.25, -9.625) {};
		\node [style=none] (3) at (1.25, -7.875) {};
		\node [style=none] (29) at (0, -4.375) {$\emulates$};
		\node [style=none] (30) at (-6.5, 0.875) {};
		\node [style=none] (31) at (-3, -9.625) {};
		\node [style=none] (32) at (-6.5, -0.875) {};
		\node [style=none] (33) at (-3, -7.875) {};
		\node [style=none] (38) at (-6, 0.875) {};
		\node [style=none] (39) at (-3.5, -9.625) {};
		\node [style=none] (40) at (-6, -0.875) {};
		\node [style=none] (41) at (-3.5, -7.875) {};
		\node [style=2 control] (42) at (4, -7.875) {\textsf{2}};
		\node [style=box] (43) at (4, -9.625) {$X_{-1}$};
		\node [style=vertex set] (44) at (2.5, -9.625) {$\Lambda$};
		\node [style=box] (45) at (2.5, -7.875) {$X_{-1}$};
		\node [style=none] (46) at (-7, 0.875) {q0};
		\node [style=none] (47) at (-7, -0.875) {q1};
		\node [style=none] (48) at (-1.5, -9.625) {q0$\wedge$...$\wedge$q6};
		\node [style=none] (49) at (-6.5, -2.625) {};
		\node [style=none] (50) at (-3, -6.125) {};
		\node [style=none] (51) at (-6.5, -4.375) {};
		\node [style=none] (52) at (-3, -4.375) {};
		\node [style=none] (53) at (-4.75, -3.875) {7-ary};
		\node [style=none] (54) at (-6, -2.625) {};
		\node [style=none] (55) at (-3.5, -6.125) {};
		\node [style=none] (56) at (-6, -4.375) {};
		\node [style=none] (57) at (-3.5, -4.375) {};
		\node [style=none] (58) at (-7, -2.625) {q2};
		\node [style=none] (59) at (-7, -4.375) {q3};
		\node [style=none] (62) at (-6.5, -6.125) {};
		\node [style=none] (63) at (-3, -2.625) {};
		\node [style=none] (64) at (-6, -6.125) {};
		\node [style=none] (65) at (-3.5, -2.625) {};
		\node [style=none] (66) at (-7, -6.125) {q4};
		\node [style=none] (67) at (-6.5, -7.875) {};
		\node [style=none] (68) at (-3, -0.875) {};
		\node [style=none] (69) at (-6.5, -9.625) {};
		\node [style=none] (70) at (-3, 0.875) {};
		\node [style=none] (72) at (-6, -7.875) {};
		\node [style=none] (73) at (-3.5, -0.875) {};
		\node [style=none] (74) at (-6, -9.625) {};
		\node [style=none] (75) at (-3.5, 0.875) {};
		\node [style=none] (76) at (-7, -7.875) {q5};
		\node [style=none] (77) at (-7, -9.625) {q6};
		\node [style=none] (78) at (1.25, -6.125) {};
		\node [style=none] (81) at (1.25, -4.375) {};
		\node [style=2 control] (82) at (4, -4.375) {\textsf{2}};
		\node [style=box] (83) at (4, -6.125) {$X_{-1}$};
		\node [style=vertex set] (84) at (2.5, -6.125) {$\Lambda$};
		\node [style=box] (85) at (2.5, -4.375) {$X_{-1}$};
		\node [style=none] (86) at (1.25, -2.625) {};
		\node [style=none] (89) at (1.25, -0.875) {};
		\node [style=2 control] (90) at (4, -0.875) {\textsf{2}};
		\node [style=box] (91) at (4, -2.625) {$X_{-1}$};
		\node [style=vertex set] (92) at (2.5, -2.625) {$\Lambda$};
		\node [style=box] (93) at (2.5, -0.875) {$X_{-1}$};
		\node [style=2 control] (98) at (8, -6.125) {\textsf{2}};
		\node [style=box] (99) at (8, -9.625) {$X_{-1}$};
		\node [style=vertex set] (100) at (6.5, -9.625) {$\Lambda$};
		\node [style=box] (101) at (6.5, -6.125) {$X_{-1}$};
		\node [style=2 control] (106) at (8, 0.875) {\textsf{2}};
		\node [style=box] (107) at (8, -2.625) {$X_{-1}$};
		\node [style=vertex set] (108) at (6.5, -2.625) {$\Lambda$};
		\node [style=box] (109) at (6.5, 0.875) {$X_{-1}$};
		\node [style=2 control] (114) at (12, -2.625) {\textsf{2}};
		\node [style=box] (115) at (12, -9.625) {$X_{-1}$};
		\node [style=vertex set] (116) at (10.5, -9.625) {$\Lambda$};
		\node [style=box] (117) at (10.5, -2.625) {$X_{-1}$};
		\node [style=none] (118) at (-6, 1.25) {};
		\node [style=none] (119) at (-3.5, 1.25) {};
		\node [style=none] (120) at (-6, -10) {};
		\node [style=none] (121) at (-3.5, -10) {};
		\node [style=none] (122) at (-4.75, -4.75) {AND};
		\node [style=none] (123) at (1.25, 0.875) {};
		\node [style=none] (125) at (13.25, -9.625) {};
		\node [style=none] (126) at (13.25, -7.875) {};
		\node [style=none] (127) at (13.25, -6.125) {};
		\node [style=none] (128) at (13.25, -4.375) {};
		\node [style=none] (129) at (13.25, -2.625) {};
		\node [style=none] (130) at (13.25, -0.875) {};
		\node [style=none] (131) at (13.25, 0.875) {};
	\end{pgfonlayer}
	\begin{pgfonlayer}{edgelayer}
		\draw [style=none] (30.center) to (38.center);
		\draw [style=none] (32.center) to (40.center);
		\draw [style=none] (39.center) to (31.center);
		\draw [style=none] (41.center) to (33.center);
		\draw (42) to (43);
		\draw [style=none] (44) to (45);
		\draw [style=none] (3.center) to (45);
		\draw [style=none] (45) to (42);
		\draw [style=none] (43) to (44);
		\draw [style=none] (44) to (0.center);
		\draw [style=none] (49.center) to (54.center);
		\draw [style=none] (51.center) to (56.center);
		\draw [style=none] (55.center) to (50.center);
		\draw [style=none] (57.center) to (52.center);
		\draw [style=none] (62.center) to (64.center);
		\draw [style=none] (65.center) to (63.center);
		\draw [style=none] (67.center) to (72.center);
		\draw [style=none] (69.center) to (74.center);
		\draw [style=none] (73.center) to (68.center);
		\draw [style=none] (75.center) to (70.center);
		\draw (82) to (83);
		\draw [style=none] (84) to (85);
		\draw [style=none] (81.center) to (85);
		\draw [style=none] (85) to (82);
		\draw [style=none] (83) to (84);
		\draw [style=none] (84) to (78.center);
		\draw (90) to (91);
		\draw [style=none] (92) to (93);
		\draw [style=none] (89.center) to (93);
		\draw [style=none] (93) to (90);
		\draw [style=none] (91) to (92);
		\draw [style=none] (92) to (86.center);
		\draw (98) to (99);
		\draw [style=none] (100) to (101);
		\draw [style=none] (101) to (98);
		\draw [style=none] (99) to (100);
		\draw (106) to (107);
		\draw [style=none] (108) to (109);
		\draw [style=none] (109) to (106);
		\draw [style=none] (107) to (108);
		\draw (114) to (115);
		\draw [style=none] (116) to (117);
		\draw [style=none] (117) to (114);
		\draw [style=none] (115) to (116);
		\draw [style=filllayerwhite] (120.center)
			 to (121.center)
			 to (119.center)
			 to (118.center)
			 to cycle;
		\draw [style=none] (125.center) to (115);
		\draw [style=none] (42) to (126.center);
		\draw [style=none] (82) to (128.center);
		\draw [style=none] (90) to (130.center);
		\draw [style=none] (98) to (127.center);
		\draw [style=none] (106) to (131.center);
		\draw [style=none] (114) to (129.center);
		\draw [style=none] (123.center) to (109);
		\draw [style=none] (43) to (100);
		\draw [style=none] (101) to (83);
		\draw [style=none] (91) to (108);
		\draw [style=none] (107) to (117);
		\draw [style=none] (116) to (99);
	\end{pgfonlayer}
\end{tikzpicture}

%% file: figures/qutrit/qubitcnuandud.tikz
\begin{tikzpicture}
	\begin{pgfonlayer}{nodelayer}
		\node [style=none] (8) at (-12, 3) {};
		\node [style=none] (14) at (-12, 0) {};
		\node [style=none] (17) at (-12, 1.5) {};
		\node [style=none] (22) at (-2, -1.5) {};
		\node [style=none] (25) at (-2, 3) {};
		\node [style=none] (28) at (-2, 0) {};
		\node [style=none] (31) at (-2, 1.5) {};
		\node [style=small dot] (36) at (-7, -3.25) {};
		\node [style=none] (37) at (-12, -4.75) {};
		\node [style=none] (38) at (-12, -3.25) {};
		\node [style=none] (39) at (-2, -4.75) {};
		\node [style=none] (40) at (-2, -3.25) {};
		\node [style=targ] (41) at (-7, -4.75) {$+$};
		\node [style=none] (51) at (-16, 3) {};
		\node [style=none] (52) at (-16, 0) {};
		\node [style=none] (53) at (-16, 1.5) {};
		\node [style=none] (54) at (-16, -4.75) {};
		\node [style=none] (55) at (-16, -3.25) {};
		\node [style=none] (57) at (-14, 3) {};
		\node [style=none] (58) at (-14, 0) {};
		\node [style=none] (59) at (-14, 1.5) {};
		\node [style=none] (60) at (-14, -4.75) {};
		\node [style=none] (61) at (-14, -3.25) {};
		\node [style=small dot] (62) at (-15, 3) {};
		\node [style=small dot] (63) at (-15, 1.5) {};
		\node [style=small dot] (64) at (-15, 0) {};
		\node [style=small dot] (65) at (-15, -3.25) {};
		\node [style=targ] (66) at (-15, -4.75) {$+$};
		\node [style=none] (73) at (-15, -0.75) {};
		\node [style=none] (74) at (-15, -2.5) {};
		\node [style=none] (75) at (-15, -1.5) {$\vdots$};
		\node [style=none] (76) at (-13, -0.875) {$\emulates$};
		\node [style=none] (77) at (-12, 3) {};
		\node [style=none] (79) at (-11.25, 3) {};
		\node [style=none] (80) at (-8.25, -3.25) {};
		\node [style=none] (81) at (-9.75, 0.5) {$n$-ary};
		\node [style=none] (82) at (-12, -3.25) {};
		\node [style=none] (84) at (-11.25, -3.25) {};
		\node [style=none] (85) at (-8.25, 3) {};
		\node [style=none] (86) at (-11.25, 3.375) {};
		\node [style=none] (87) at (-8.25, 3.375) {};
		\node [style=none] (88) at (-11.25, -3.625) {};
		\node [style=none] (89) at (-8.25, -3.625) {};
		\node [style=none] (90) at (-9.75, -0.5) {AND};
		\node [style=none] (91) at (-11.25, 1.5) {};
		\node [style=none] (92) at (-11.25, 0) {};
		\node [style=none] (93) at (-11.75, -1.5) {$\vdots$};
		\node [style=none] (94) at (-8.25, 1.5) {};
		\node [style=none] (95) at (-8.25, 0) {};
		\node [style=none] (98) at (-7.75, -1.5) {$\vdots$};
		\node [style=none] (99) at (-2, 3) {};
		\node [style=none] (100) at (-2, 0) {};
		\node [style=none] (101) at (-2, 1.5) {};
		\node [style=none] (102) at (-2, -3.25) {};
		\node [style=none] (103) at (-2, 3) {};
		\node [style=none] (105) at (-2.75, 3) {};
		\node [style=none] (106) at (-5.75, -3.25) {};
		\node [style=none] (107) at (-4.25, 0.5) {$n$-ary};
		\node [style=none] (108) at (-2, -3.25) {};
		\node [style=none] (110) at (-2.75, -3.25) {};
		\node [style=none] (111) at (-5.75, 3) {};
		\node [style=none] (112) at (-2.75, 3.375) {};
		\node [style=none] (113) at (-5.75, 3.375) {};
		\node [style=none] (114) at (-2.75, -3.625) {};
		\node [style=none] (115) at (-5.75, -3.625) {};
		\node [style=none] (116) at (-4.25, -0.5) {AND$^{\dagger}$};
		\node [style=none] (117) at (-2.75, 1.5) {};
		\node [style=none] (118) at (-2.75, 0) {};
		\node [style=none] (119) at (-2.25, -1.5) {$\vdots$};
		\node [style=none] (120) at (-5.75, 1.5) {};
		\node [style=none] (121) at (-5.75, 0) {};
		\node [style=none] (124) at (-6.25, -1.5) {$\vdots$};
		\node [style=none] (125) at (-17, 3) {};
		\node [style=none] (126) at (-17, -3.25) {};
		\node [style=none] (127) at (-17.5, -0.125) {};
		\node [style=none] (128) at (-17.75, -0.25) {};
		\node [style=none] (129) at (-17.75, -0.25) {$n$};
		\node [style=none] (130) at (-12, -1.5) {};
	\end{pgfonlayer}
	\begin{pgfonlayer}{edgelayer}
		\draw [style=none, in=-180, out=0] (85.center) to (111.center);
		\draw [style=none] (95.center) to (121.center);
		\draw [style=none] (94.center) to (120.center);
		\draw (36) to (41);
		\draw (37.center) to (41);
		\draw (41) to (39.center);
		\draw (65) to (66);
		\draw (55.center) to (65);
		\draw (65) to (61.center);
		\draw (60.center) to (66);
		\draw (66) to (54.center);
		\draw (52.center) to (64);
		\draw (64) to (58.center);
		\draw (59.center) to (63);
		\draw (63) to (53.center);
		\draw (62) to (51.center);
		\draw (62) to (57.center);
		\draw (62) to (63);
		\draw (63) to (64);
		\draw (73.center) to (64);
		\draw (74.center) to (65);
		\draw [style=none] (77.center) to (79.center);
		\draw [style=none] (82.center) to (84.center);
		\draw [style=filllayerwhite] (89.center)
			 to (87.center)
			 to (86.center)
			 to (88.center)
			 to cycle;
		\draw [style=none] (17.center) to (91.center);
		\draw [style=none] (14.center) to (92.center);
		\draw [style=none] (103.center) to (105.center);
		\draw [style=none] (108.center) to (110.center);
		\draw [style=none] (101.center) to (117.center);
		\draw [style=none] (100.center) to (118.center);
		\draw [style=none] (80.center) to (36);
		\draw [style=none] (36) to (106.center);
		\draw [in=180, out=0, looseness=0.25] (127.center) to (126.center);
		\draw [in=-180, out=0, looseness=0.25] (127.center) to (125.center);
		\draw [style=filllayerwhite] (112.center)
			 to (113.center)
			 to (115.center)
			 to (114.center)
			 to cycle;
	\end{pgfonlayer}
\end{tikzpicture}